\newcommand{\red}[1]{\textcolor{black}{#1}}
\def\romannum{\begingroup
  \def\theenumi{\textup{(\roman{enumi})}}%
  \def\p@enumi{}%
  \def\labelenumi{\theenumi}%
  \enumerate}
\newcommand{\CSP}{\ensuremath{\mathrm{CSP}}}
\newcommand{\NP}{\ensuremath{\mathrm{NP}}}
\newcommand{\coNP}{\ensuremath{\mathrm{co\mbox{-}NP}}}
\newcommand{\Pspace}{\ensuremath{\mathrm{Pspace}}}
\newcommand{\MC}{\ensuremath{\mathrm{MC}}}
\newcommand{\PMC}{\ensuremath{\mathrm{PMC}}}
\newcommand{\notmodels}{\ensuremath{ \models \hspace{-3mm} / \hspace{2mm} }}
\title{Model-checking positive equality free logic on a fixed structure (direttissima)}
\author{Manuel Bodirsky}{TU Dresden, Germany}{Manuel.Bodirsky@tu-dresden.de}{}{}
\author{Marcin Kozik}{Jagiellonian University, Krak\'ow, Poland}{marcin.kozik@uj.edu.pl}{}{}
\author{Florent Madelaine}{LACL, Universit\'e Paris-Creteil, France}{florent.madelaine@u-pec.fr}{}{}
\author{Barnaby Martin}{Durham University, U.K.}{barnabymartin@durham.ac.uk}{}{}
\author{Micha{\l} Wrona}{Jagiellonian University, Krak\'ow, Poland}{michal.wrona@uj.edu.pl}{}{}
\authorrunning{M. Bodirsky et al.}
\keywords{Quantified Constraints, Computational Complexity, Logic}
\begin{document}
\maketitle

\begin{abstract}
  We give a new, direct proof of the tetrachotomy classification for the model-checking problem of positive equality-free logic parameterised by the model. The four complexity classes are Logspace, NP-complete, co-NP-complete and Pspace-complete. The previous proof of this result relied on notions from universal algebra and core-like structures called $U$-$X$-cores. This new proof uses only relations, and works for infinite structures also in the distinction between Logspace and NP-hard under Turing reductions. 

For finite domains, the membership in NP and co-NP follows from a simple argument, which breaks down already over an infinite set with a binary relation. We develop some interesting new algorithms to solve NP and co-NP membership for a variety of infinite structures. We begin with those first-order definable in $(\mathbb{Q};=)$, the so-called equality languages, then move to those first-order definable in $(\mathbb{Q};<)$, the so-called temporal languages. However, it is first-order expansions of the Random Graph $(V,E)$ that provide the most interesting examples. In all of these cases, the derived classification is a tetrachotomy between Logspace, NP-complete, co-NP-complete and Pspace-complete.
\end{abstract}


\section{Introduction}
\label{sec:intro}

The question of model-checking syntactic fragments of first-order logic on a fixed model $\mathcal{B}$ was discussed in \cite{MadelaineM18}. The syntactic fragments considered correspond to limiting which of the symbols $\{\forall,\exists,\wedge,\vee,\neg,=\}$ we permit. The most famous of the fragments for this task is probably \emph{primitive positive} logic, which has $\{\exists,\wedge\}$ and corresponds to the \emph{constraint satisfaction problem} (CSP). The fixing of the model corresponds to what Vardi called expression complexity in \cite{Vardi82} and what is known as non-uniform in the CSP literature \cite{FederVardi}, where the model is usually known as the template.

For the majority of the syntactic fragments, a complete classification of computational complexity is possible, as one varies the template $\mathcal{B}$, and this classification is simple to derive. Then some fragments are equivalent to others through de Morgan's laws (e.g. $\{\exists,\wedge\}$ is equivalent to $\{\forall,\vee\}$, modulo \NP-completeness morphing to \coNP-completeness). Essentially, the interesting situations distill into three cases, corresponding to the logics: $\{\exists,\wedge\}$, $\{\forall,\exists,\wedge\}$ and $\{\forall,\exists,\wedge,\vee\}$. Notice that $=$ is not in these logics a priori. For the first two logics it would not matter if we had added it as it can be propagated out. For the third, its absence is significant.

So, the first two logics correspond to the CSP and the \emph{quantified} CSP (QCSP), respectively. The classification for the former, over finite templates, was accomplished by \cite{BulatovFVConjecture,ZhukFVConjecture} with the resolution of the Feder-Vardi Conjecture (``CSP Dichotomy''), namely that all such problems $\CSP(\mathcal{B})$ are in P or are \NP-complete. The classification for the QCSP on finite templates is wide open, including exotic complexity classes such as DP-complete and $\Theta^\mathrm{P}_2$-complete \cite{ZhukM22}. The classification for the third logic, positive equality-free, over finite templates, was given in \cite{MadelaineM18}\footnote{The conference version of this paper was \cite{MadelaineM11}, so the result predates the CSP dichotomy.} as a tetrachotomy between P, \NP-complete, \coNP-complete and \Pspace-complete. No one would compare the resolution of this tetrachotomy to the classification for the CSP or QCSP, but some interesting mathematics was developed in its pursuit.

The \emph{algebraic approach} to CSPs dates back to the late 1990s with Jeavons's paper \cite{Jeavons98}. It relates universal algebraic objects called \emph{polymorphisms} to relations which they preserve when applied coordinatewise. Through a Galois connection, a classification is free to move between the relational objects (model or template) and algebraic objects (\emph{clones}). The algebraic approach was instrumental in the settling of the Feder-Vardi Conjecture. Such an algebraic approach has also been potent for QCSPs \cite{BBCJK,ZhukM22} and was developed in \cite{MadelaineM12} for positive equality-free logic. The algebraic objects are \emph{surjective hyper-operations} and play a central role in the tetrachotomy of \cite{MadelaineM18}, together with a new notion of core-ness. Some of the algebra has reappeared in the context of the promise problem in \cite{AsimiBB22} as well as in \cite{CarvalhoM22}. All references to algebra in this paper relate to universal algebra as just discussed.

In this paper we demonstrate that algebra and core-ness are not needed in the classification for positive equality-free logic.
Partly inspired by \cite{BodirskyHR13}, we give a direct proof\footnote{This proof, due to Kozik, bears the name \emph{direttissima} (a mountain-climbing term).} of the tetrachotomy. The proof from \cite{BodirskyHR13} concerns existential positive, $\{\exists,\wedge,\vee\}$, logic for which the corresponding model-checking problem gives a dichotomy between P and NP-complete across all templates (not just those that are finite). To extend the result to our logic, it seems as though one just has to deal additionally with the universal quantifier. To some extent this is true, but the complexity classification for existential positive logic has the following property. Assume $\mathrm{P}\neq \NP$, then the model-checking problem associated with $\mathcal{B}$ is $\NP$-hard iff there are existential positive definable non-empty relations $\phi_1$ and $\phi_2$, so that $\phi_1 \cap \phi_2=\emptyset$ (Lemma 5 in \cite{BodirskyHR13}). The generalisation of this statement does not hold when universal quantification is added: model-checking positive equality-free logic on $(\mathbb{Q},=)$ is \NP-complete, yet no relations $\phi_1$ and $\phi_2$ are positive equality-free definable on $(\mathbb{Q},=)$ so that $\phi_1$ and $\phi_2$ are non-empty, yet $\phi_1 \cap \phi_2 = \emptyset$.

In this paper, we give a simple proof of the tetrachotomy of model-checking positive equality-free logic for finite templates that does not involve algebra or $U$-$X$-cores. For arbitrary infinite templates, the method works to prove a dichotomy between NP-hard (under Turing reductions) and Logspace. The Turing reductions arise because hardness can be for co-NP as well as NP. 
What is remarkable is that the algebraic approach using surjective hyper-operations is problematic for infinite templates. It is not clear that the Galois connection of \cite{MadelaineM12} holds even for well-behaved infinite structures such as those that are homogeneous. Certainly the special surjective hyper-operations of \cite{MadelaineM12,MadelaineM18} that delineate Logspace, NP and co-NP (respectively named, $\forall\exists$-, $A$- and $E$-) no longer play that role, even if one were to define a corresponding $U$-$X$-core. It is interesting that our new method to prove NP membership (respectively, co-NP-membership) for finite templates fails already for $(\mathbb{Q},=)$ (respectively, $(\mathbb{Q},\neq)$).

Indeed, the modern, systematic study of infinite-domain CSPs began with a complexity classification for those templates which have a first-order definition in $(\mathbb{Q};=)$, usually known as \emph{equality (constraint) languages} \cite{BodirskyK08}. It continued, for example, with those templates with a first-order definition in $(\mathbb{Q};<)$, usually known as \emph{temporal (constraint) languages} \cite{BodirskyK10}, and those templates with a first-order definition in the Random Graph $(V,E)$ \cite{BodirskyP15}. For equality languages, the QCSP classification has only recently been fully accomplished \cite{BodirskyChenSICOMP,ZhukMW23}, with a trichotomy between Logspace, \NP-complete and \Pspace-complete.

We prove the tetrachotomy for model-checking positive equality-free logic on equality languages. The algorithm we use to drop complexity to \NP\ works by always evaluating a universal variable to a new element, distinct from any played for some variable earlier in the prefix order. This suggests possible algorithms for temporal languages, too; perhaps always playing a universal variable to a new element, strictly lower than any played for some variable earlier in the prefix order. Of course, there is also the possibility to play the new variable strictly higher. Indeed, we prove that these algorithms both work as well as one another, as we prove the tetrachotomy for model-checking positive equality-free logic on temporal languages. Alas, there are no more new tractable cases in the temporal languages (in Logspace, \NP\ or \coNP) than there were for the equality languages.

However, the case is different for first-order expansions of the Random Graph $(V,E)$. Let the binary relation $N$ hold on all distinct vertices which are not connected by $E$ (one can note that $(V;E)$ and $(V;N)$ are isomorphic). Here the problem associated with $(V, E)$ is in Logspace, and the algorithm is indeed to always evaluate a universal variable to a new element, that has an $N$-edge to all the previous elements. We finesse the tetrachotomy for model-checking positive equality-free logic on first-order expansions of the Random Graph $(V,E)$, using this algorithm, together with the dual one that always chooses for an existential variable a new element, that has an $E$-edge to all the previous elements.

We then go on to briefly consider the promise version of our problem, which has been introduced in \cite{AsimiBB22}. In this, the template is a pair $(\mathcal{A},\mathcal{B})$ of structures and the question involves answering yes to those positive equality-free inputs $\phi$ that are true on $\mathcal{A}$ and answering no to those that are not even true on $\mathcal{B}$. The promise is that the input $\phi$ is either true on $\mathcal{A}$ or false on $\mathcal{B}$; at least, we can answer anything if the promise is broken. We can also build our template so it is impossible for $\phi$ to be false on $\mathcal{B}$ but true on $\mathcal{A}$. In \cite{AsimiBB22}, considerable progress is made on classifying the complexity of this promise problem as $(\mathcal{A},\mathcal{B})$ vary over finite structures. We cannot solve the open cases they pose as open questions, but we can solve cases that are not covered in their paper. We show therefore how our methods can be used in this fashion.

The paper is organised as follows. After some preliminaries, we begin with general results for hardness in Section~\ref{sec:hardness}. We then address the straightforward case of finite structures in Section~\ref{sec:fin-class}. We move on to infinite structures in Section~\ref{sec:infinite-case}: first equality languages, then temporal languages, then first-order expansions of the Random Graph. We then conclude with some final remarks. Owing to reasons of space, some proofs are deferred to the appendix.  Our discussion of promise problems is deferred to the appendix in its entirety.


\section{Preliminaries}
\label{sec:prelims}

Let $[k]:=\{1,\ldots,k\}$. We use models, structures and templates interchangeably when talking about a model-checking problem. Though, for the promise version of the problem, the template becomes a pair of models or structures. If $\mathcal{B}$ is a structure then $B$ is its domain. All structures in this paper have finite relational signature.

A structure is \emph{homogeneous} if all isomorphisms between finite substructures can be extended to automorphisms. An infinite homogeneous structure is \emph{finitely bounded} if it can be described by finitely many forbidden finite induced substructures. 
\red{If a structure $\mathcal B$ is obtained from a structure $\mathcal A$ by removing relations, we say that ${\mathcal B}$ is a \emph{reduct} of ${\mathcal A}$ and that ${\mathcal A}$ is an \emph{expansion} of ${\mathcal B}$. 
A \emph{first-order expansion} of 
a structure $\mathcal B$ is an expansion of $\mathcal B$ by first-order definable relations. A \emph{first-order reduct} of $\mathcal B$ is a reduct of a first-order expansion of $\mathcal B$.} 
All infinite structures in this paper are reducts of finitely bounded homogeneous structures.

The Random Graph $(V;E)$ is the unique countable homogeneous graph that embeds all finite graphs. On the Random Graph we use vertices and elements interchangeably.

Let $\MC(\mathcal{B})$ be the problem to evaluate an input positive equality-free sentence on $\mathcal{B}$.
We always may assume that an instance of $\MC(\mathcal{B})$ is of the prenex form $$\forall x_1\exists y_1 \forall x_2 \exists y_2 \dots \forall x_{n} \exists y_{n} \, \theta,$$ where $\theta$ is quantifier-free (and positive equality-free), since if it is not it may readily be brought into an equivalent formula of this kind in logarithmic space. Then a solution is a sequence of (Skolem) functions 
$f_{1},\ldots, f_{n}$ such that 
$$(x_1,f(x_1),x_2,f_2(x_1,x_2),\ldots,x_{n},f_n(x_1,\ldots,x_n))$$ is a solution of $\Phi$ for all 
$x_{1},\ldots,x_{n}$ (i.e. $y_{i} = f_{i}(x_{1},\ldots,x_{i})$). This belies a (Hintikka) game semantics for the truth of an instance in which a player called Universal (male) plays the universal variables and a player called Existential (female) plays the existential variables, one after another, from the outside in. Universal aims to falsify the formula while Existential aims to satisfy it.
The Skolem functions above give a strategy for Existential. In our proofs we may occasionally revert to a game-theoretical parlance.

\begin{lemma}
Let $\mathcal{B}$ be finite, then $\MC(\mathcal{B})$ is in \Pspace.
\label{lem:Pspace}
\end{lemma}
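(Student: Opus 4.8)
The plan is to give the obvious recursive (alternating) evaluation procedure and to bound its space usage. By the discussion above, in logarithmic space we may pass to an equivalent instance $\Phi = \forall x_1 \exists y_1 \cdots \forall x_n \exists y_n\, \theta$ with $\theta$ quantifier-free, positive and equality-free, so it suffices to decide such $\Phi$ in polynomial space. Define a recursive procedure $\mathrm{Eval}$ that receives a tuple of elements of $B$ already assigned to an initial segment $x_1, y_1, \ldots, x_i, y_i$ of the variables, together with a pointer into the quantifier prefix of $\Phi$. If the next quantifier is $\forall x_{i+1}$, the procedure iterates over all $a_{i+1} \in B$, recursively calling $\mathrm{Eval}$ with the extended assignment, and returns \emph{true} iff every recursive call returns \emph{true}; if the next quantifier is $\exists y_{i+1}$, it iterates over all $b_{i+1} \in B$ and returns \emph{true} iff some call returns \emph{true}. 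When the whole prefix has been consumed we have a complete assignment of domain elements to $x_1, y_1, \ldots, x_n, y_n$, and the procedure evaluates $\theta$ under this assignment by looking up, for each atom $R(\ldots)$ occurring in $\theta$, whether the corresponding tuple lies in $R^{\mathcal B}$, and combining the resulting Boolean values according to the $\wedge$/$\vee$ structure of $\theta$; this leaf computation runs in time polynomial in $|\Phi|$ and in space $O(\log(|\Phi| + |B|))$.

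Correctness is immediate from the Hintikka game semantics recalled above: $\mathrm{Eval}$ started on the empty assignment simply searches the game tree, returning \emph{true} exactly when Existential has a winning strategy, that is, when $\mathcal B \models \Phi$. For the space bound, observe that the recursion has depth exactly $2n \le |\Phi|$, and each stack frame needs to store only the single newly chosen element of $B$ (as an index in $\{1, \ldots, |B|\}$) together with a constant amount of bookkeeping: the pointer into the prefix and the running conjunction/disjunction flag. Hence the whole computation uses $O(n \log |B|)$ space beyond the read-only input and the polynomial-space leaf evaluation, which is polynomial in the size of the input since $\mathcal B$ is fixed and finite. Therefore $\MC(\mathcal B) \in \Pspace$.

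The only point requiring any care — and it is entirely routine — is to ensure that the iteration over $B$ at each recursion level reuses the same space rather than accumulating it, and that the quantifier-free evaluation at the leaves never materialises anything of superpolynomial size; both are standard. Alternatively, one may simply invoke the folklore fact that the expression complexity of first-order model checking over a fixed finite structure lies in $\Pspace$, of which Lemma~\ref{lem:Pspace} is the special case restricted to positive equality-free sentences.
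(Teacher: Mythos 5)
Your proof is correct and essentially the same as the paper's: the paper also cycles through all $|B|^n$ valuations while maintaining only the current assignment (linear-size data structure), which is exactly the recursion stack in your explicit alternating-evaluation procedure. The two write-ups differ only in level of detail.
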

\begin{proof}
Suppose $|B|=m$. If the input sentence has $n$ quantified variables, then cycle through all $m^n$ valuations of the variables (in exponential time). The data structure that keeps record of the current valuation is of size linear in $n$. The variables are addressed in prefix order with attention being paid to whether each is existential or universal once the cycle for that variable is complete. 
\end{proof}
This method of cycling through new possibilities is enough also for equality languages and temporal languages. For the Random Graph, the number of types grows too quickly, so we appeal to a more general algorithmic result. 
\begin{proposition}
Let $\mathcal{B}$ be a \red{first-order reduct} of a finitely bounded homogeneous structure. Then $\MC(\mathcal{B})$ is in \Pspace.
\label{prop:in-Pspace}
\end{proposition}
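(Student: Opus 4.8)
The plan is to evaluate the input sentence by the obvious recursion on its quantifier prefix, but, since the domain is infinite, to keep track not of a concrete tuple of played elements but only of the \emph{quantifier-free type} of that tuple in the signature of a finitely bounded homogeneous structure $\mathcal{A}$ of which $\mathcal{B}$ is a first-order reduct (a quantifier-free type records the coincidences among the played variables together with the relations of $\mathcal{A}$ holding on their subtuples). Two observations make this work. First, $\mathcal{A}$ is homogeneous over a finite relational signature, so two finite tuples lie in the same $\mathrm{Aut}(\mathcal{A})$-orbit if and only if they realise the same quantifier-free type; since every first-order formula over $\mathcal{A}$ is $\mathrm{Aut}(\mathcal{A})$-invariant, the truth of the fixed formulas defining the relations of $\mathcal{B}$ at a tuple $\bar a$ depends only on the quantifier-free type of $\bar a$, and --- there being only finitely many quantifier-free $k$-types for each fixed $k$ --- this value can be looked up in a finite table hard-wired with $\mathcal{B}$. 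Second, a quantifier-free type of an $n$-tuple over a fixed finite signature is describable in $n^{O(1)}$ bits.

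Concretely, I would first bring the instance in logspace into the prenex form $\forall x_1 \exists y_1 \cdots \forall x_n \exists y_n\,\theta$ of the preliminaries, and then run a recursion on the prefix. At stage $i$ the procedure holds the quantifier-free type $p$ of the partial play; to handle the next quantifier, on a variable $v$, it enumerates, one at a time, the candidate types $p'$ obtained from $p$ by adjoining $v$: either $v$ is set equal to an earlier variable, in which case $p'$ is forced, or $v$ takes a fresh value, in which case one chooses which relations of $\mathcal{A}$ hold between $v$ and the earlier elements. In the fresh case the procedure must test whether $p'$ is realised in $\mathcal{A}$ as an extension of the current play; by homogeneity this is independent of the concrete realisation of $p$, and by finite boundedness it holds exactly when the finite structure encoded by $p'$ lies in the age of $\mathcal{A}$, i.e.\ avoids the finitely many forbidden substructures --- which, as these have bounded size, is decidable in polynomial time by inspecting the bounded-size subtuples of $p'$. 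The procedure then returns true for a universally quantified variable iff the recursive call succeeds for every realisable $p'$, and for an existentially quantified variable iff it succeeds for some realisable $p'$. When the prefix is exhausted it evaluates $\theta$ on the final type $p$: an atom $R(x_{j_1},\dots,x_{j_k})$ is answered by restricting $p$ to the coordinates $j_1,\dots,j_k$ and consulting the table above, and $\wedge$, $\vee$ are handled in the obvious recursive way.

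For the bound: the recursion has depth $2n$, and each level stores one quantifier-free type of $n^{O(1)}$ bits together with a polynomial-size index for the enumeration of one-point extensions, so the whole computation uses only polynomial space and $\MC(\mathcal{B})$ is in \Pspace. The two steps I expect to need genuine care, as opposed to bookkeeping, are: (i) the reduction of ``realisability of a one-point extension of a type'' to the purely local condition of avoiding the forbidden substructures, which uses homogeneity (to transport an embedding of the encoded structure onto the current realisation of $p$) together with finite boundedness (to identify the age with the class of structures avoiding the forbidden substructures); and (ii) the claim that the truth of an atom of $\mathcal{B}$ is determined by, and effectively computable from, the quantifier-free type over $\mathcal{A}$, which rests on homogeneity identifying $\mathrm{Aut}(\mathcal{A})$-orbits with quantifier-free types. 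If preferred, (ii) can be packaged as a logspace reduction of $\MC(\mathcal{B})$ to the full first-order model-checking problem over $\mathcal{A}$, obtained by replacing each atom of $\mathcal{B}$ by its first-order definition, followed by running the above algorithm on $\mathcal{A}$ itself; note that positivity and equality-freeness of $\theta$ are not used anywhere, exactly as in the finite case.
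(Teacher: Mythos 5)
Your proof is correct, but it takes a different route from the paper: where the paper simply invokes Proposition~7(2) of Rydval's thesis (that model-checking arbitrary first-order sentences on a finitely bounded homogeneous structure is in $\Pspace$) after translating the instance into the base signature, you instead give a self-contained proof of that underlying fact. Your algorithm --- recurse on the quantifier prefix, maintaining only the quantifier-free $\mathcal{A}$-type of the partial play, using homogeneity to identify types with orbits and finite boundedness to decide realisability of one-point extensions by checking the bounded-size forbidden substructures --- is precisely the kind of argument the cited reference encapsulates, and all the estimates (types describable in $n^{O(1)}$ bits, recursion depth $2n$, hence polynomial space) are sound. The two places you flag as needing care are indeed the crux, and you handle both correctly; your final remark that positivity and equality-freeness are never used, and that the construction can be packaged as a reduction to FO model-checking over $\mathcal{A}$, exactly mirrors what the paper outsources to the citation. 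What your version buys is transparency and independence from an external, relatively inaccessible thesis; what the paper's one-line citation buys is brevity. One minor point worth making explicit in a polished write-up: the ``structure encoded by a type'' should be the quotient identifying coinciding variables before testing membership in the age, and the base case of the recursion must start from a realisable (e.g.\ empty) type so that the invariant ``$p$ is realisable'' propagates.
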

\begin{proof}
    Proposition 7(2) from \cite{Rydval18} proves that model-checking first-order sentences on a finitely bounded homogeneous structure is in \Pspace. We apply this after noting that an input instance can be rewritten over the signature of the underlying finitely bounded homogeneous structure in polynomial time.
\end{proof}
\subsection{The principle of duality}

If $\mathcal{B}$ is a structure, then define its dual, $\overline{\mathcal{B}}$, over the same domain $B$ but with $k$-ary relations $R$ replaced by $B^k\setminus R$. Note that $\mathcal{B}=\overline{\overline{\mathcal{B}}}$. Similarly, if $\phi$ is a positive equality-free sentence over $\Gamma$, then let $\overline{\phi}$ be the positive equality-free sentence obtained by rewriting $\neg \phi$ using de Morgan's laws to push negation innermost and then substituting negated atoms of $\mathcal{B}$ for (positive) atoms of $\overline{\mathcal{B}}$. Note that $\overline{\phi}$ is not equivalent to $\neg \phi$ (at least on the same structure). The following is clear from the construction.
\begin{lemma}[Principle of duality \cite{MadelaineM18}]
\red{A positive equality-free sentence} $\phi$ is a yes-instance of $\MC(\mathcal{B})$ iff $\overline{\phi}$ is a no-instance of $\MC(\overline{\mathcal{B}})$. It follows that:
\begin{itemize}
\item $\MC(\mathcal{B})$ is in Logspace iff $\MC(\overline{\mathcal{B}})$ is in Logspace;
\item $\MC(\mathcal{B})$ is in \NP\ iff $\MC(\overline{\mathcal{B}})$ is in \coNP;
\item $\MC(\mathcal{B})$ is \NP-complete iff $\MC(\overline{\mathcal{B}})$ is \coNP-complete.
\end{itemize}
\label{lem:duality}
\end{lemma}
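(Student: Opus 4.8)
The plan is to reduce both the stated equivalence and its three complexity-theoretic corollaries to a single semantic claim: for every positive equality-free sentence $\phi$ over the signature of $\mathcal{B}$, one has $\mathcal{B}\models\phi$ if and only if $\overline{\mathcal{B}}\not\models\overline{\phi}$. I would prove this by induction on the quantifier prefix of $\phi$, peeling quantifiers from the inside out, after first settling the quantifier-free matrix. Throughout, the point to keep in mind is that dualisation acts simultaneously on the sentence and on the structure, and that $\overline{\phi}$ is deliberately \emph{not} $\neg\phi$ evaluated on the same structure.

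For the base case, fix an assignment of the free variables and consider a quantifier-free positive equality-free formula $\theta$. Since $\overline{\theta}$ is obtained from $\neg\theta$ by pushing negation to the atoms via de~Morgan and then reading each atom $R(\bar{x})$ over $\overline{\mathcal{B}}$ instead of $\mathcal{B}$, and since $R^{\overline{\mathcal{B}}}=B^{k}\setminus R^{\mathcal{B}}$, each atom is effectively negated and the connectives $\wedge,\vee$ are swapped; hence $\overline{\mathcal{B}}\models\overline{\theta}$ iff $\mathcal{B}\models\neg\theta$ iff $\mathcal{B}\not\models\theta$. For the inductive step, suppose the claim holds for $\phi'$ and let $\phi=\forall x\,\phi'$; then $\mathcal{B}\models\phi$ iff $\mathcal{B}\models\phi'$ for all values of $x$, iff (induction) $\overline{\mathcal{B}}\not\models\overline{\phi'}$ for all values of $x$, iff $\overline{\mathcal{B}}\not\models\exists x\,\overline{\phi'}$, and $\exists x\,\overline{\phi'}=\overline{\phi}$ because dualisation turns $\forall$ into $\exists$; the case $\phi=\exists x\,\phi'$ is symmetric. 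Equivalently, one may phrase this game-theoretically: a winning strategy for Existential in the Hintikka game for $\phi$ on $\mathcal{B}$ is exactly a winning strategy for Universal in the game for $\overline{\phi}$ on $\overline{\mathcal{B}}$, the two players having swapped the variables and the branches of conjunctions/disjunctions they control.

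The three bullet points then follow from the observation that $\phi\mapsto\overline{\phi}$ is computable in logarithmic space (de~Morgan rewriting plus reading atoms over the dual signature, which leaves the atoms syntactically unchanged) and is an involution, up to de~Morgan normal form, equivalently up to semantic double negation; applying it to $\overline{\mathcal{B}}$ and using $\overline{\overline{\mathcal{B}}}=\mathcal{B}$ yields a logspace many-one reduction from $\MC(\mathcal{B})$ to the complement of $\MC(\overline{\mathcal{B}})$ and, in the other direction, from the complement of $\MC(\overline{\mathcal{B}})$ to $\MC(\mathcal{B})$. Since Logspace, \NP, and \coNP\ are all closed under logspace many-one reductions, \coNP\ is the class of complements of \NP\ problems, and Logspace is closed under complement, the equivalence for membership in Logspace, the equivalence of membership in \NP\ with membership of the dual in \coNP, and hence the equivalence of \NP-completeness with \coNP-completeness of the dual, all drop out by combining the reduction with these closure properties.

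I do not expect a genuine obstacle here; the lemma is essentially a bookkeeping statement, as already indicated. The only places that need care are: keeping the dualisation of the template and of the sentence synchronised, so that complements of relations exactly cancel the negations produced by de~Morgan; not conflating $\overline{\phi}$ with $\neg\phi$; and noting that the reduction $\phi\mapsto\overline{\phi}$ rewrites only the input sentence and never inspects the domain of the template, so the argument is equally valid for infinite $\mathcal{B}$ — which is what makes this duality usable later for equality, temporal, and Random Graph templates.
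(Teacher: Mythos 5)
Your proof is correct and is exactly the elaboration of what the paper dismisses as ``clear from the construction'': you establish the key semantic identity $\mathcal{B}\models\phi$ iff $\overline{\mathcal{B}}\not\models\overline{\phi}$ by a short induction (quantifier-free matrix as base case, peeling quantifiers as inductive step), then observe that $\phi\mapsto\overline{\phi}$ is a logspace-computable involution giving many-one reductions between $\MC(\mathcal{B})$ and the complement of $\MC(\overline{\mathcal{B}})$, from which the three bullet points fall out by closure of $\mathrm{L}$, \NP, and \coNP\ under logspace reductions and $\mathrm{L}$ under complement. Since the paper itself gives no proof, there is no divergence to report; your argument correctly supplies the bookkeeping the paper leaves implicit, and you are right to flag the two genuine pitfalls, namely keeping the dualisation of sentence and structure synchronised and not conflating $\overline{\phi}$ with $\neg\phi$ over the same structure.
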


\section{Hardness}
\label{sec:hardness}

We say that a positive equality-free sentence $\phi:=Q_1v_1\ldots Q_kv_k \, (\phi_1 \wedge \phi_2)$, where $\phi_1$ and $\phi_2$ are positive equality-free formulae, \emph{breaks $\wedge$ on $\mathcal{B}$} iff $\mathcal{B} \notmodels \phi$, though both $\mathcal{B} \models  Q_1v_1\ldots Q_kv_k \, \phi_1$ and $\mathcal{B} \models  Q_1v_1\ldots Q_kv_k \, \phi_2$. Note that it is in Existential's power to ensure either $\phi_1$ or $\phi_2$ is true and she can choose whichever she pleases, whereupon the other will become false.
Similarly, we say that $\psi:=Q'_1w_1\ldots Q'_\ell w_\ell \, (\psi_1 \vee \psi_2$) \emph{breaks $\vee$ on $\mathcal{B}$} iff $\mathcal{B} \models \psi$, though both $\mathcal{B} \notmodels  Q'_1w_1\ldots Q'_\ell w_\ell \, \psi_1$ and $\mathcal{B} \notmodels  Q'_1w_1\ldots Q'_kw_\ell \, \psi_2$. 
Note that it is in Universal's power to ensure either $\psi_1$ or $\psi_2$ is true and he can choose whichever he pleases, whereupon the other will become false. If $\phi$ breaks $\wedge$ on $\mathcal{B}$ then $Q_k$ is existential, but $Q_1,\ldots,Q_{k-1}$ can be arbitrary. Similarly, If $\psi$ breaks $\vee$ on $\mathcal{B}$ then $Q'_\ell$ is universal, but $Q'_1,\ldots,Q'_{\ell-1}$ can be arbitrary.

These definitions are inspired by Definition 2 from \cite{BodirskyHR13}. However, we don't have the key property of Lemma 6 from \cite{BodirskyHR13} (note that the arity of $\phi_1$ and $\phi_2$ from this lemma is strictly positive). To show this, consider the example $(\mathbb{Q};=)$. It is not possible to (positive equality-free) define disjoint and non-empty relations $\phi_1$ and $\phi_2$ over $(\mathbb{Q};=)$, yet the formula $\phi:=\forall x\forall y\exists z \ z=x \wedge z=y$ breaks $\wedge$ on $(\mathbb{Q};=)$. 

If there doesn't exist a $\phi$ so that $\phi$ breaks $\wedge$ on $\mathcal{B}$, then (on $\mathcal{B}$) $\exists$ commutes with both $\wedge$ and $\vee$. If there doesn't exist a $\psi$ so that $\psi$ breaks $\vee$ on $\mathcal{B}$, then (on $\mathcal{B}$) $\forall$ commutes with both $\wedge$ and $\vee$. 

The following lemma places no restriction on $\mathcal{B}$.
\begin{lemma}
Let $\mathcal{B}$ be a structure.
\begin{itemize}
\item If there exists $\phi$ so that $\phi$ breaks $\wedge$ on $\mathcal{B}$, then $\MC(\mathcal{B})$ is \NP-hard. 
\item If there exists $\psi$ so that $\psi$ breaks $\vee$ on $\mathcal{B}$, then $\MC(\mathcal{B})$ is \coNP-hard. 
\item If there exists $\phi$ and $\psi$ so that $\phi$ breaks $\wedge$ on $\mathcal{B}$ and $\psi$ breaks $\vee$ on $\mathcal{B}$,  then $\MC(\mathcal{B})$ is \Pspace-hard.
\end{itemize}
\label{lem:marcin-after-bodirsky-et-al}
\end{lemma}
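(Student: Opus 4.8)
The plan is to reduce from a canonical PSPACE-complete, NP-complete, and coNP-complete problem respectively, using the ``broken'' connective as a gadget. For the NP-hardness claim, suppose $\phi := Q_1v_1\ldots Q_kv_k\,(\phi_1 \wedge \phi_2)$ breaks $\wedge$ on $\mathcal{B}$; recall that then $Q_k$ is existential, $\mathcal{B}\models Q_1v_1\ldots Q_kv_k\,\phi_1$ and $\mathcal{B}\models Q_1v_1\ldots Q_kv_k\,\phi_2$, but $\mathcal{B}\not\models\phi$. The key observation is that $\phi$ acts as a Boolean ``switch'': when Existential plays the prefix $Q_1v_1\ldots Q_kv_k$ according to the winning strategy for $\phi_1$, then $\phi_1$ holds and $\phi_2$ fails, and symmetrically; she can force exactly one of $\phi_1,\phi_2$ and she chooses which. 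I would use this to encode an instance of, say, \textsc{3-SAT} (or more directly \textsc{Positive-1-in-anything}, but \textsc{3-SAT} via a monotone-looking encoding is cleanest): for each Boolean variable $X_i$ of the SAT instance, take a fresh copy of the prefix $Q_1v_1^i\ldots Q_kv_k^i$, and let the conjuncts $\phi_1^i,\phi_2^i$ (on the renamed variables) play the role of the literals $X_i$ and $\neg X_i$. A clause $\ell_{j_1}\vee\ell_{j_2}\vee\ell_{j_3}$ is then rendered as the disjunction of the corresponding $\phi_{*}^{\bullet}$'s. The outermost quantifier block is a single dummy $\forall$ (to respect the prenex shape, harmless since nothing depends on it), followed by all the existential-prefix copies interleaved appropriately; crucially the prefixes of distinct copies do not interact because the variable sets are disjoint and Existential's strategy on each copy depends only on the universal moves inside that copy. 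One then checks that the resulting sentence is true on $\mathcal{B}$ iff the \textsc{3-SAT} instance is satisfiable: a satisfying assignment tells Existential which of $\phi_1^i$ or $\phi_2^i$ to make true in each copy, and conversely. The reduction is clearly computable in logspace.

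The coNP-hardness claim is exactly dual: if $\psi := Q'_1w_1\ldots Q'_\ell w_\ell\,(\psi_1\vee\psi_2)$ breaks $\vee$ on $\mathcal{B}$, then $\overline{\psi}$ breaks $\wedge$ on $\overline{\mathcal{B}}$, so by the first bullet $\MC(\overline{\mathcal{B}})$ is \NP-hard, and by the Principle of duality (Lemma~\ref{lem:duality}) this is equivalent to $\MC(\mathcal{B})$ being \coNP-hard. Alternatively, and without invoking duality, one gives the mirror-image reduction directly: now it is Universal who holds the switch, and one reduces from the complement of \textsc{3-SAT} (i.e.\ \textsc{3-DNF-Tautology}), using the $\psi$-gadget for each variable and letting Universal's choice of which of $\psi_1^i,\psi_2^i$ to falsify encode the assignment.

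For the \Pspace-hardness claim, I would combine both gadgets to simulate \textsc{QBF} (quantified Boolean formula evaluation, say with the matrix in CNF). Existentially quantified Boolean variables of the QBF are encoded by copies of the $\phi$-switch (Existential decides the truth value), universally quantified Boolean variables by copies of the $\psi$-switch (Universal decides), laid out in the prefix in the same order as the QBF quantifier alternation so that the dependency structure is preserved; the CNF matrix is then written using $\wedge$ of clauses, each clause a $\vee$ of the appropriate $\phi_*^\bullet$ or $\psi_*^\bullet$ subformulae, all pulled into prenex form. The correctness argument is that the Hintikka game on the constructed sentence over $\mathcal{B}$ simulates the QBF game move-for-move: at an existential Boolean variable Existential's genuine choice of truth value is realized by her switch, at a universal Boolean variable Universal's choice is realized by his, and the matrix evaluates correctly because each switch contributes exactly the intended truth value to every clause containing it.

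The main obstacle I anticipate is not the combinatorics of the SAT/QBF encoding but the bookkeeping needed to verify that the gadget copies are genuinely independent once everything is flattened into a single prenex sentence of the form $\forall x_1\exists y_1\cdots$. Concretely: I must argue that Existential has a winning strategy in the combined game iff she has one ``coordinatewise'' in each $\phi$-copy together with a consistent Boolean assignment, and this requires checking that the interleaving of the prefixes of different copies (and of the $\phi$- and $\psi$-copies in the \Pspace\ case) does not let one player smuggle information between copies — which holds precisely because each $\phi_i$ (resp.\ $\psi_i$) only mentions its own renamed variables, so the value of each subformula depends only on moves within its own block, and the order in which blocks are interleaved is irrelevant. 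A secondary subtlety is the dummy outermost $\forall$ needed to match the strict $\forall\exists\forall\exists\cdots$ template shape when a gadget would otherwise start with $\exists$; this is innocuous but should be stated. Everything else — logspace computability of the reduction, the base-problem choices — is routine.
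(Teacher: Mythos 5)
Your proposal is correct and follows the same high-level route as the paper's proof: treat the broken-$\wedge$ (resp.\ broken-$\vee$) prefix as a Boolean switch whose two positions $\phi_1,\phi_2$ (resp.\ $\psi_1,\psi_2$) encode a truth value chosen by Existential (resp.\ Universal), give each Boolean variable a fresh renamed copy of the prefix, and reduce from a quantified satisfiability problem, using variable-disjointness of the blocks to keep the copies independent. The only material difference is the choice of base problem --- the paper reduces from monotone (quantified) 1-in-3-SAT, encoding each clause as an exactly-one disjunction of conjunctions, whereas you reduce from 3-SAT/QBF with ordinary CNF clauses by assigning $\phi_1^i$ and $\phi_2^i$ to the positive and negative literal of $X_i$ --- and both encodings are sound; the interleaving/independence bookkeeping you flag as the crux is precisely the step the paper also treats informally.
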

\begin{proof}
Let us address the third case first, for we shall see that the first two cases are just specialisations of this. Let $\phi:=Q_1v_1\ldots Q_kv_k$ $(\phi_1 \wedge \phi_2)$ and let $\psi:=Q'_1w_1\ldots Q'_\ell w_\ell$ $(\psi_1 \vee \psi_2)$.


We will reduce from an instance $\theta$ of \emph{(monotone) Quantified $1$-in-$3$-satisfiability} (Q1-in-3SAT), known to be \Pspace-complete from \cite{Creignou}, to an instance $\theta'$ of $\mathrm{MC}(\mathcal{B})$. Let us recall that Q1-in-3-SAT takes a quantified conjunction of positive triples of variables, where the satisfying condition is that precisely one in each triple of variables is evaluated true after the quantifiers are played.

Note that variables of $\theta$ are partitioned into two types, existential and universal. We will handle the truth or falsity of these variables differently according to their type. Specifically, existential variables $x$ of $\theta$ will become a sequence of variables $v^x_1,\ldots,v^x_k$, with $\phi_1(v^x_1,\ldots,v^x_k)$ representing true and $\phi_2(v^x_1,\ldots,v^x_k)$ representing false. Universal variables $y$ of $\theta$ will become a sequence of variables $w^y_1,\ldots,w^y_\ell$, with $\psi_1(w^y_1,\ldots,w^y_\ell)$ representing true and $\psi_2(w^y_1,\ldots,w^y_\ell)$ representing false. 

We replace the quantification as we proceed inwards. Thus, $\exists x$ in $\theta$ becomes $Q_1v^x_1\ldots Q_kv^x_k$ in $\theta'$, and $\forall y$ in $\theta$ becomes $Q'_1w^y_1\ldots Q'_\ell w^y_\ell$ in $\theta'$.

It remains to explain how to represent clauses of the case $(p,q,r)$ and this depends on the form of the clauses, where the four cases are: all existential; all universal; one existential and two universal; two existential and one universal. The purely existential case involves adding to $\theta'$ that
\[
\begin{array}{ll} 
(\phi_1(v^p_1,\ldots,v^p_k) \wedge \phi_2(v^q_1,\ldots,v^q_k) \wedge \phi_2(v^r_1,\ldots,v^r_k)) & \vee \\ 
(\phi_2(v^p_1,\ldots,v^p_k) \wedge \phi_1(v^q_1,\ldots,v^q_k) \wedge \phi_2(v^r_1,\ldots,v^r_k)) & \vee \\ 
(\phi_2(v^p_1,\ldots,v^p_k) \wedge \phi_2(v^q_1,\ldots,v^q_k) \wedge \phi_1(v^r_1,\ldots,v^r_k)) & \\ 
\end{array}
\]
The purely universal case involves adding to $\theta'$ that 
\[
\begin{array}{ll} 
(\psi_1(w^p_1,\ldots,w^p_\ell) \wedge \psi_2(w^q_1,\ldots,w^q_\ell) \wedge \psi_2(w^r_1,\ldots,w^r_\ell)) & \vee \\ 
(\psi_2(w^p_1,\ldots,w^p_\ell) \wedge \psi_1(w^q_1,\ldots,w^q_\ell) \wedge \psi_2(w^r_1,\ldots,w^r_\ell)) & \vee \\ 
(\psi_2(w^p_1,\ldots,w^p_\ell) \wedge \psi_2(w^q_1,\ldots,w^q_\ell) \wedge \psi_1(w^r_1,\ldots,w^r_\ell)) \\ 
\end{array}
\]
The mixed cases work by mixing these two regimes. Suppose $p$ is existential and $q,r$ are universal. Then we add
\[
\begin{array}{ll} 
(\phi_1(v^p_1,\ldots,v^p_k) \wedge \psi_2(w^q_1,\ldots,w^q_\ell) \wedge \psi_2(w^r_1,\ldots,w^r_\ell)) & \vee \\ 
(\phi_2(v^p_1,\ldots,v^p_k) \wedge \psi_1(w^q_1,\ldots,w^q_\ell) \wedge \psi_2(w^r_1,\ldots,w^r_\ell)) & \vee \\ 
(\psi_2(v^p_1,\ldots,v^p_k) \wedge \psi_2(w^q_1,\ldots,w^q_\ell) \wedge \psi_1(w^r_1,\ldots,w^r_\ell)) \\ 
\end{array}
\]
Finally, if $p,q$ are existential and $r$ is universal, then we add
\[
\begin{array}{ll} 
(\phi_1(v^p_1,\ldots,v^p_k) \wedge \phi_2(v^q_1,\ldots,v^q_k) \wedge \psi_2(w^r_1,\ldots,w^r_\ell)) & \vee \\ 
(\phi_2(v^p_1,\ldots,v^p_k) \wedge \phi_1(v^q_1,\ldots,v^q_k) \wedge \psi_2(w^r_1,\ldots,w^r_\ell)) & \vee \\ 
(\psi_2(v^p_1,\ldots,v^p_k) \wedge \phi_2(v^q_1,\ldots,v^q_k) \wedge \psi_1(w^r_1,\ldots,w^r_\ell)) \\ 
\end{array}
\]
Let us argue that $\theta$ is a yes-instance of Q1-in-3SAT iff $\theta'$ is a yes-instance of $\MC(\mathcal{B})$.

(Forwards.) Existential mirrors her winning strategy for $\theta$ in $\theta'$ by considering all Universal plays of $\psi_1(w^y_1,\ldots,w^y_\ell)$ as true on $y$ while $\psi_2(w^y_1,\ldots,w^y_\ell)$ is false on $y$. She plays herself true variables $x$ as $v^x_1,\ldots,v^x_k$ so that $\phi_1(v^x_1,\ldots,v^x_k)$ holds and false variables as $v^x_1,\ldots,v^x_k$ so that $\phi_2(v^x_1,\ldots,v^x_k)$ holds. By construction it follows that $\theta'$ is true on $\mathcal{B}$.

(Backwards.) Suppose $\theta'$ is true on $\mathcal{B}$. Existential mirrors her winning strategy for $\theta'$ in $\theta$ by interpreting all Universal plays of $\psi_1(w^y_1,\ldots,w^y_\ell)$ as true on $y$ and $\psi_2(w^y_1,\ldots,w^y_\ell)$ as false on $y$. By construction, it follows that $\theta$ has a 1-in-3 satisfying assignment.

Hardness for \NP\ or \coNP\ simply uses only one of the two constructions for the types existential and universal. In these respective cases, the hardness follows from that for \emph{(monotone) $1$-in-$3$-satisfiability} \cite{Schaefer78} or its complement.
\end{proof}

Notwithstanding that this is a section on hardness, let us finish on the positive note of tractability in Logspace, where we make no assumptions about the structure other than that its signature is finite.
\begin{lemma}
Let $\mathcal{B}$ be any structure on a finite signature. If there does not exist $\phi$ so that $\phi$ breaks $\wedge$ on $\mathcal{B}$, and there does not exist $\psi$ so that $\psi$ breaks $\vee$ on $\mathcal{B}$, then $\MC(\mathcal{B})$ is in Logspace. 
\label{lem:easy}
\end{lemma}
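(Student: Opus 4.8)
The hypothesis says that on $\mathcal{B}$ the quantifier $\exists$ commutes with both $\wedge$ and $\vee$, and $\forall$ commutes with both $\wedge$ and $\vee$ (this is exactly what the two non-breaking conditions buy us, as remarked just before the statement). The plan is to use these commutation properties to push quantifiers inwards and collapse any input sentence to a trivially evaluable form. First I would argue a normal-form reduction: given an instance $\Phi$ of $\MC(\mathcal{B})$, I would repeatedly apply the commutation rules to drive every quantifier past every connective, so that each quantifier ends up guarding a single atom $R(\bar t)$. Concretely, a subformula $Q v\, (\alpha \star \beta)$ with $\star \in \{\wedge,\vee\}$ is replaced by $(Q v\, \alpha) \star (Q v\, \beta)$ when $v$ is the variable quantified by $Q$; iterating, a block of quantifiers in front of a quantifier-free matrix is distributed over the Boolean structure of the matrix until we reach the atoms.

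The key bookkeeping point is that the commutation of a quantifier over a connective is valid in $\MC(\mathcal{B})$ only in one direction as a semantic equivalence in general, but under our hypothesis it is a genuine two-way equivalence, so the rewriting preserves the truth value on $\mathcal{B}$ at every step. I would make this precise by an induction on the structure of the formula, the inductive step being precisely an invocation of ``$\exists$ commutes with $\wedge,\vee$'' / ``$\forall$ commutes with $\wedge,\vee$'' together with the trivial fact that quantifiers that do not bind a variable of a subformula may be deleted in front of that subformula. The output of the rewriting is a positive Boolean combination (using $\wedge,\vee$) of sentences each of the shape $Q_{i_1} v_{i_1} \cdots Q_{i_m} v_{i_m}\, R(\bar t)$, i.e. a quantified single atom. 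Each such sentence is evaluated on $\mathcal{B}$ in a single pass: reading the quantifier prefix, an $\exists$ in front of a variable that actually occurs in $R(\bar t)$ is satisfiable iff the relation is nonempty in the corresponding coordinate (and after projecting), an $\forall$ in front of such a variable requires the projection to be the whole domain in that coordinate, and quantifiers over variables not appearing in $R(\bar t)$ are vacuous; since the signature is finite and fixed, ``is $R$ nonempty'', ``is the projection of $R$ onto these coordinates all of $B^k$'', etc., are constants (yes/no facts about the fixed $\mathcal{B}$) that can be hard-wired into the Logspace machine. Hence each atomic quantified sentence has a truth value computable in Logspace, and the outer positive Boolean combination is then evaluated in Logspace as well.

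The one genuine obstacle is to ensure the rewriting itself can be carried out in logarithmic space: naive repeated distribution can blow the formula up exponentially, which would wreck the Logspace bound. I would avoid materialising the rewritten formula: instead I would describe a Logspace procedure that, on input $\Phi$, directly evaluates $\Phi$ by recursion on its (original) syntax tree, maintaining only (i) a pointer into the syntax tree and (ii) for the subformula currently under consideration, the (constant-size) information ``which atom / which quantified-atom truth value does this branch reduce to''. Because each quantifier, under the hypothesis, can be pushed to the leaves independently along each branch of the $\wedge/\vee$-tree, the truth value of $\Phi$ on $\mathcal{B}$ is the $\wedge/\vee$-tree of $\Phi$ evaluated at leaves, where the leaf value attached to an atom $R(\bar t)$ is determined by the multiset of quantifiers on the path from the root to that atom restricted to the variables occurring in $\bar t$ — and this restricted multiset, together with the fixed relation $R$, determines a Boolean constant. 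Evaluating a Boolean tree given an oracle for leaf values is in Logspace (indeed in $\mathrm{NC}^1$), and walking the path from root to a given atom while recording only the constantly-many relevant quantifiers is clearly in Logspace. Assembling these pieces yields $\MC(\mathcal{B}) \in \text{Logspace}$.
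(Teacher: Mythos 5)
Your proof follows essentially the same route as the paper's: pass to prenex form, use the non-breaking hypotheses to distribute the quantifier prefix past $\wedge$ and $\vee$ down to the atoms, hard-wire a finite lookup table for the resulting quantified atomic sentences (finitely many, since the signature and hence the arities are fixed), and then evaluate the remaining Boolean sentence in Logspace via Buss's result.

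One slip is worth correcting. You write that the leaf value for an atom $R(\bar t)$ is determined by the \emph{multiset} of quantifiers on the root-to-leaf path restricted to the variables of $\bar t$, together with $R$. It must in fact be the \emph{ordered} subsequence of the prefix, together with the repetition pattern of variables in $\bar t$. Quantifier order genuinely matters even on structures satisfying the hypothesis: on the Random Graph $(V;E)$ neither connective breaks, yet $\forall x\,\exists y\,E(x,y)$ is true while $\exists y\,\forall x\,E(x,y)$ is false; and which positions of $R$ share a variable obviously affects the truth value too. Both the ordered subsequence and the pattern are still constant-size for a fixed finite signature, so your Logspace bound survives once this bookkeeping is fixed. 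Also, your concern about exponential blow-up in the explicit rewriting is unfounded: each atom acquires a prefix of length at most the maximal arity, so the rewritten formula is linear in the input, and one could simply compose Logspace transductions; but the direct on-the-fly tree evaluation you describe is clean and also works.
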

\begin{proof}
Let $\theta$ be an input to $\MC(\mathcal{B})$ in the prenex form $\forall y_1 \exists x_1\ldots$ $\forall y_m \exists x_m \theta'$ where $\theta'$ is quantifier-free. Obtain $\theta''$ from $\theta'$ by substituting atoms $R(z_1,\ldots,z_k)$ by $\forall y_1 \exists x_1\ldots \forall y_k \exists x_k $ $R(z_1,\ldots,z_k)$, and indeed one may restrict the quantification to just the variables from $\{x_1,y_1,\ldots,x_m,y_m\}$ that appear in $\{z_1,\ldots,z_k\}$. Since $\mathcal{B}$ is finite signature, there is a finite number of such quantified atoms and we may assume there exists a finite table in which we can look up whether they evaluate to true or false. Now, by assumptions, both quantifiers commute with both conjunction and disjunction, so we may move all quantifiers inward towards the atoms, obtaining $\mathcal{B} \models \theta$ iff $\mathcal{B} \models \theta''$. This latter is a Boolean sentence evaluation problem which can be solved in Logspace \cite{Buss87}. 
\end{proof}
Note that in the last proof we do not specify a way to build the finite table. For us it is enough that it exists for each $\mathcal{B}$.

\section{The finite case}
\label{sec:fin-class}

\begin{lemma}
Let $\mathcal{B}$ be a finite structure such that there exists $\phi$ that breaks $\wedge$ on $\mathcal{B}$ but there is no $\psi$ that breaks $\vee$ on $\mathcal{B}$. Then $\MC(\mathcal{B})$ is in \NP.
\label{lem:NP-finite}
\end{lemma}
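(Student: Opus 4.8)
The plan is to use the hypothesis only through its consequence noted before Lemma~\ref{lem:marcin-after-bodirsky-et-al}: on $\mathcal B$ the quantifier $\forall$ commutes with both $\wedge$ and $\vee$. The other hypothesis, that some $\phi$ breaks $\wedge$, plays no role in the \NP\ upper bound; via Lemma~\ref{lem:marcin-after-bodirsky-et-al} it only pins this down as the \NP-complete case. Fix an input $\Phi=\forall x_1\exists y_1\cdots\forall x_n\exists y_n\,\theta$ in prenex form.

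First I would reduce to conjunctive matrices. View $\theta$ in disjunctive normal form, $\theta\equiv\bigvee_j T_j$ with each $T_j$ a conjunction of atoms (this is a conceptual rewriting only — the list may be exponentially long). Since $\exists$ always commutes with $\vee$ and $\forall$ commutes with $\vee$ on $\mathcal B$, one pulls the disjunction $\bigvee_j$ out through the whole quantifier prefix; the key point is that every disjunction met in this process ranges over the fixed syntactic family $(T_j)_j$, so no element of $B$ ever appears as a parameter and each step is a bona fide instance of ``no $\psi$ breaks $\vee$'' (iterated from the binary form, applied in the context of the surrounding prefix). Hence $\mathcal B\models\Phi$ iff $\mathcal B\models\forall x_1\exists y_1\cdots\forall x_n\exists y_n\,T_j$ for some $j$.

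The core step is that, for a conjunction of atoms $T$, Existential needs only a \emph{constant} strategy: $\mathcal B\models\forall x_1\exists y_1\cdots\forall x_n\exists y_n\,T$ iff there are $b_1,\dots,b_n\in B$ with $\mathcal B\models\forall x_1\cdots\forall x_n\,T[y_1\mapsto b_1,\dots,y_n\mapsto b_n]$. Right-to-left is immediate (Existential plays the $b_i$ regardless of Universal). For left-to-right, process $\exists y_n$ first: as $B$ is finite, $\exists y_n\,T$ is the finite disjunction $\bigvee_{b\in B}T[y_n\mapsto b]$, and the preceding $\forall x_n$ is moved inside it using that $\forall$ commutes with $\vee$ on $\mathcal B$, thereby exporting the choice of $y_n$ to the front of the prefix; iterating through $y_{n-1},\dots,y_1$ yields the claim. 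Granting this, \NP-membership follows: bring the input to prenex form (logspace), nondeterministically guess a set $S$ of atoms of $\theta$ whose ``all true'' valuation satisfies $\theta$ (so that $\bigwedge_{A\in S}A\models\theta$ propositionally, since $\theta$ is positive in its atoms) together with $b_1,\dots,b_n\in B$, and accept iff $\mathcal B\models\forall x_1\cdots\forall x_n\bigl(\bigwedge_{A\in S}A\bigr)[y_1\mapsto b_1,\dots,y_n\mapsto b_n]$. This final check is polynomial: $\forall$ distributes over $\wedge$ by pure logic, so it reduces to testing, for each atom $R(\vec u)$ of $S$ with the $b_i$ substituted, whether $R$ holds under all valuations of the variables still occurring in $\vec u$ — a brute force over at most $|B|^{\mathrm{maxarity}}$ tuples. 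Correctness is then immediate: an accepting guess gives a constant winning strategy for $\Phi$ because $\bigwedge_{A\in S}A\models\theta$, while conversely if $\mathcal B\models\Phi$ then by the reduction of the previous paragraph some DNF term $T_j$ of $\theta$ has $\mathcal B\models\forall\vec x\exists\vec y\,T_j$, and the constant-strategy claim supplies a suitable $\vec b$, with $S$ the atom set of $T_j$.

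I expect the only real obstacle to be the core step — that constant strategies suffice for a conjunctive matrix. The delicate point is to carry out the ``move $\forall$ past the finite disjunction $\bigvee_{b\in B}$'' manipulations while keeping track of the elements $b_i$ being substituted, so that each appeal to the commutation of $\forall$ with $\vee$ is genuinely licensed by ``no $\psi$ breaks $\vee$''. This is exactly where finiteness of $B$ is used essentially, and where there is no infinite analogue: over $(\mathbb Q;=)$ Existential genuinely needs the non-constant strategy $y_i:=x_i$, which is precisely why the method breaks down there.
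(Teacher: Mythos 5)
Your overall strategy --- massage $\Phi$ into a form where the remaining non-deterministic work is a polynomial existential guess followed by a polynomial-time universal check --- is the same as the paper's. The paper realises it by repeatedly swapping $\forall x\exists y$ to $\exists y\forall x$ and pushing each innermost $\forall$ down to the atoms, ending with a purely existential sentence over a finitely expanded signature of universally quantified atoms; your DNF reformulation plus a constant Skolem strategy is a differently packaged version of the same idea, and your verification scheme (guess an atom set $S$ entailing $\theta$ monotonely, guess $\vec b$, check $\forall\vec x\bigl(\bigwedge_{A\in S}A\bigr)[\vec y\mapsto\vec b]$) is sound and complete granted your core step.

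The genuine gap is exactly where you suspected it. In the core step you commute $\forall x_n$ past $\bigvee_{b\in B}T[y_n\mapsto b]$ by invoking ``$\forall$ commutes with $\vee$ on $\mathcal B$''. But that commutation is a consequence of the hypothesis that no positive equality-free \emph{sentence over the signature of $\mathcal B$} breaks $\vee$. The disjuncts $T[y_n\mapsto b]$ are not in that fragment: they carry domain elements of $B$ substituted into atom positions, and the no-$\vee$-breaking hypothesis says nothing about formulas with parameters. There is no a priori reason why the absence of $\vee$-breaking over $\mathcal B$'s language should extend to the parametrised formulas you produce, and the problem compounds as you iterate through $y_{n-1},\dots,y_1$, since more and more constants accumulate. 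The paper closes exactly this hole with a short device: instead of substituting the $m=|B|$ domain elements as constants, it introduces $m$ fresh \emph{existentially quantified} variables $y_1,\dots,y_m$ and rewrites $\forall x\,\exists y\,\phi'$ as $\exists y_1\cdots\exists y_m\,\forall x\,\bigvee_{i\in[m]}\phi'(x,y_i)$. The disjunction now lies entirely in the original language (the $y_i$ are variables, not constants), so the hypothesis genuinely licenses moving $\forall x$ past it, and collapsing the $y_i$ gives the swap $\forall x\,\exists y\,\phi'\equiv\exists y\,\forall x\,\phi'$. If you replace your $\bigvee_{b\in B}$ manipulations by this trick (or simply apply that swap lemma repeatedly to move every $\exists y_i$ in front of every $\forall x_j$ and only then realise the leading existentials as constants), your constant-strategy claim, and hence the rest of your argument, go through.
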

\begin{proof}
Consider a formula $\phi$ of the form $\forall x \exists y \, \phi'(x,y)$ which may contain additional free variables. Let $|B|=m$. Then $\phi$ is equivalent to $\exists y_1,\ldots,y_m \forall x \bigvee_{i \in [m]} \phi'(x,y_i)$. Since $\forall$ commutes with disjunction, this is equivalent to $\exists y_1,\ldots,y_m  \bigvee_{i \in [m]} \forall x \, \phi'(x,y_i).$
By symmetry, this is equivalent to $\exists y \forall x \, \phi'(x,y)$.

Let $\phi$ be an input to $\MC(\mathcal{B})$. If the innermost quantifier is $\forall$, then this commutes with both $\wedge$ and $\vee$ and can be pushed to the atomic level. If the innermost quantifier is $\exists$, then this can be swapped with some $\forall$ that is nearest moving outwards by the argument of the previous paragraph. This procedure can be iterated until the formula is purely existential modulo a language that is expanded by universally quantified atoms (the number of which is finite). 
\end{proof}
\begin{lemma}
Let $\mathcal{B}$ be a finite structure so that there exists $\psi$ that breaks $\vee$ on $\mathcal{B}$ but there is no $\phi$ that breaks $\wedge$ on $\mathcal{B}$. Then $\MC(\mathcal{B})$ is in \coNP.
\label{lem:coNP-finite}
\end{lemma}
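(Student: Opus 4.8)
The plan is to deduce this immediately from Lemma~\ref{lem:NP-finite} via the principle of duality (Lemma~\ref{lem:duality}). Since $\MC(\mathcal{B})$ lies in $\coNP$ precisely when $\MC(\overline{\mathcal{B}})$ lies in $\NP$, and $\overline{\mathcal{B}}$ is again finite (same domain $B$), it suffices to verify that $\overline{\mathcal{B}}$ satisfies the hypotheses of Lemma~\ref{lem:NP-finite}: that some $\phi$ breaks $\wedge$ on $\overline{\mathcal{B}}$, and that no $\psi$ breaks $\vee$ on $\overline{\mathcal{B}}$.

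The bridge is the observation that the formula-level dual $\overline{\cdot}$ interchanges ``breaks $\vee$'' and ``breaks $\wedge$'' once one simultaneously passes to $\overline{\mathcal{B}}$. Indeed, if $\psi = Q'_1 w_1 \cdots Q'_\ell w_\ell\,(\psi_1 \vee \psi_2)$, then $\overline{\psi}$ is $\overline{Q'_1} w_1 \cdots \overline{Q'_\ell} w_\ell\,(\overline{\psi_1}\wedge \overline{\psi_2})$, where $\overline{Q}$ denotes the opposite quantifier; since $\psi$ breaking $\vee$ forces $Q'_\ell=\forall$, the innermost quantifier of $\overline{\psi}$ is existential, so $\overline{\psi}$ has the shape required to break $\wedge$. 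Applying Lemma~\ref{lem:duality} to each of the three conditions defining ``$\psi$ breaks $\vee$ on $\mathcal{B}$'': from $\mathcal{B}\models\psi$ we get $\overline{\mathcal{B}}\notmodels\overline{\psi}$; and from $\mathcal{B}\notmodels Q'_1 w_1\cdots Q'_\ell w_\ell\,\psi_i$ we get $\overline{\mathcal{B}}\models \overline{Q'_1} w_1\cdots \overline{Q'_\ell} w_\ell\,\overline{\psi_i}$ for $i\in\{1,2\}$, this last sentence being exactly the dual of $Q'_1 w_1\cdots Q'_\ell w_\ell\,\psi_i$. Hence $\overline{\psi}$ breaks $\wedge$ on $\overline{\mathcal{B}}$. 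Because $\overline{\overline{\mathcal{B}}}=\mathcal{B}$ and $\overline{\overline{\psi}}=\psi$, the same argument run backwards shows that if some $\chi$ breaks $\wedge$ on $\mathcal{B}$ then $\overline{\chi}$ breaks $\vee$ on $\overline{\mathcal{B}}$; contrapositively, ``no $\phi$ breaks $\wedge$ on $\mathcal{B}$'' gives ``no $\psi$ breaks $\vee$ on $\overline{\mathcal{B}}$''.

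Combining: the hypothesis furnishes a $\psi$ breaking $\vee$ on $\mathcal{B}$, hence $\overline{\psi}$ breaks $\wedge$ on $\overline{\mathcal{B}}$; and the hypothesis that no $\phi$ breaks $\wedge$ on $\mathcal{B}$ gives that no $\psi$ breaks $\vee$ on $\overline{\mathcal{B}}$. So Lemma~\ref{lem:NP-finite} applies to the finite structure $\overline{\mathcal{B}}$, placing $\MC(\overline{\mathcal{B}})$ in $\NP$, whence Lemma~\ref{lem:duality} places $\MC(\mathcal{B})$ in $\coNP$.

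\emph{Main obstacle.} There is no real obstacle here; the only point demanding care is bookkeeping — confirming that $\overline{\psi}$ genuinely has the prenex-plus-single-binary-connective form demanded by the definition of ``breaks $\wedge$'' (that is, that pushing the negation inward with de Morgan dualizes exactly the quantifier prefix and the top connective and replaces atoms by complemented atoms), and that each of the three defining conditions transfers cleanly through Lemma~\ref{lem:duality} applied to sentences, noting that $\psi_1,\psi_2$ have their free variables among $w_1,\dots,w_\ell$ so that $Q'_1 w_1\cdots Q'_\ell w_\ell\,\psi_i$ really is a sentence. One could instead avoid duality and mirror the proof of Lemma~\ref{lem:NP-finite} directly — on $\mathcal{B}$ the quantifier $\forall$ now commutes with both $\wedge$ and $\vee$, so one pushes existential quantifiers to the atoms and collapses each adjacent universal block into a single universal variable over the $m$-element domain — but the duality route is shorter and reuses Lemma~\ref{lem:NP-finite} verbatim.
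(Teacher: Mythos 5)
Your argument is correct and is exactly the paper's own route: invoke the principle of duality (Lemma~\ref{lem:duality}), observe that $\overline{\psi}$ breaks $\wedge$ on $\overline{\mathcal{B}}$ while no formula can break $\vee$ on $\overline{\mathcal{B}}$ (else its dual would break $\wedge$ on $\mathcal{B}$), apply Lemma~\ref{lem:NP-finite} to $\overline{\mathcal{B}}$, and dualize back. You spell out the bookkeeping that the paper leaves implicit, but the structure of the argument is the same.
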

\begin{proof}
This follows from Lemma~\ref{lem:duality}, when one notes that $\overline{\psi}$ breaks $\wedge$ on $\overline{\mathcal{B}}$, but there is no $\theta$ that breaks $\vee$ on $\overline{\mathcal{B}}$ (else $\overline{\theta}$ would break $\wedge$ on $\mathcal{B}$). 
\end{proof}
\begin{corollary}
Let $\mathcal{B}$ be finite. Then:
\begin{itemize}
\item If there does not exist $\phi$ that breaks $\wedge$ on $\mathcal{B}$, and there does not exist $\psi$ that breaks $\vee$ on $\mathcal{B}$, then $\MC(\mathcal{B})$ is in Logspace. 
\item If there exists $\phi$ that breaks $\wedge$ on $\mathcal{B}$, and there does not exist $\psi$ that breaks $\vee$ on $\mathcal{B}$, then $\MC(\mathcal{B})$ is \NP-complete. 
\item If there does not exist $\phi$ that breaks $\wedge$ on $\mathcal{B}$, and there exists $\psi$ that breaks $\vee$ on $\mathcal{B}$, then $\MC(\mathcal{B})$ is \coNP-complete. 
\item If there exists $\phi$ that breaks $\wedge$ on $\mathcal{B}$, and there exists $\psi$ that breaks $\vee$ on $\mathcal{B}$, then $\MC(\mathcal{B})$ is \Pspace-complete. 
\end{itemize}
\end{corollary}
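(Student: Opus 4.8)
The plan is to assemble the corollary directly from the lemmas already established. The first observation is that the four cases partition all finite structures $\mathcal{B}$ according to the two independent Boolean conditions ``there exists $\phi$ that breaks $\wedge$ on $\mathcal{B}$'' and ``there exists $\psi$ that breaks $\vee$ on $\mathcal{B}$''; hence the four cases are mutually exclusive and exhaustive, and the claimed classification is genuinely a tetrachotomy, with no overlaps and no gaps.

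Next I would dispatch the four cases one by one. When neither $\phi$ nor $\psi$ exists, $\MC(\mathcal{B})$ is in Logspace by Lemma~\ref{lem:easy}, whose only hypothesis --- finite signature --- is automatic for finite $\mathcal{B}$. When $\phi$ exists but $\psi$ does not, membership in \NP\ is Lemma~\ref{lem:NP-finite} (this is the step that genuinely uses finiteness of $B$, to replace $\forall x\,\exists y$ by $\exists y\,\forall x$ via an $|B|$-fold disjunction) and \NP-hardness is the first item of Lemma~\ref{lem:marcin-after-bodirsky-et-al}, so $\MC(\mathcal{B})$ is \NP-complete. Symmetrically, when $\psi$ exists but $\phi$ does not, membership in \coNP\ is Lemma~\ref{lem:coNP-finite} and \coNP-hardness is the second item of Lemma~\ref{lem:marcin-after-bodirsky-et-al}; alternatively one can derive this whole line from the \NP\ line by the principle of duality (Lemma~\ref{lem:duality}), using that $\phi$ breaks $\wedge$ on $\mathcal{B}$ exactly when $\overline{\phi}$ breaks $\vee$ on $\overline{\mathcal{B}}$ (and dually), so that $\overline{\mathcal{B}}$ lands in the \NP-complete case precisely when $\mathcal{B}$ lands in the \coNP-complete case. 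Finally, when both $\phi$ and $\psi$ exist, membership in \Pspace\ is Lemma~\ref{lem:Pspace} and \Pspace-hardness is the third item of Lemma~\ref{lem:marcin-after-bodirsky-et-al}, so $\MC(\mathcal{B})$ is \Pspace-complete.

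I do not expect a genuine obstacle here: all the mathematical content has already been delivered by the cited lemmas --- the reduction from (restrictions of) Q1-in-3SAT for the hardness half, and the quantifier-swap argument of Lemma~\ref{lem:NP-finite} for the \NP-membership half. The only step needing a little care is the bookkeeping: checking that the case hypotheses match those of each lemma, pairing each membership result with its matching hardness result, and not transposing the \NP\ and \coNP\ sides. If one prefers, the \coNP\ case can be folded into the \NP\ case via duality, which slightly shortens the write-up at the cost of invoking Lemma~\ref{lem:duality}.
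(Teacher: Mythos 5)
Your proof is correct and matches the paper's own argument: both assemble the corollary case by case from Lemma~\ref{lem:easy}, Lemma~\ref{lem:NP-finite}, Lemma~\ref{lem:coNP-finite}, Lemma~\ref{lem:Pspace}, and Lemma~\ref{lem:marcin-after-bodirsky-et-al}, with the same pairing of membership and hardness results. The extra remarks you add (exhaustiveness of the four cases, the optional duality shortcut for the \coNP\ case) are harmless elaborations of what the paper leaves implicit.
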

\begin{proof}
The first case follows from Lemma~\ref{lem:easy}. Membership in the final case follows from  Lemma~\ref{lem:Pspace}. For the remaining cases, hardness follows from Lemma~\ref{lem:marcin-after-bodirsky-et-al} and membership follows from Lemmas~\ref{lem:NP-finite} and ~\ref{lem:coNP-finite}.
\end{proof}
The original proof of the previous result appears as Theorem 41 in \cite{MadelaineM18}, where the conditions for being in the respective classes are given by certain surjective hyper-endomorphisms. The new version of the result has a striking disadvantage. While the monoid of surjective hyper-endomorphisms is computable from a finite structure, it is not immediately clear how one computes whether there exists a certain formula that breaks $\wedge$ or $\vee$ on $\mathcal{B}$. To that extent, our discourse here is non-constructive as it does not solve the delineation of the classes (usually referred to as the meta-problem in the CSP community).

\section{The infinite case}
\label{sec:infinite-case}

The quantifier swapping method from Lemma~\ref{lem:NP-finite} fails already for $(\mathbb{Q};=)$. Note that $\MC(\mathbb{Q};=)$ is in \NP\ \cite{Kozen81}, and $\forall x \exists y \, x=y$ is true on $(\mathbb{Q};=)$, while $\exists y \forall x \, x=y$ is false. The majority of the paper is concerned with finding new methods to mitigate this.

\subsection{Equality languages}
\label{sec:equality}

Recall that an equality language is one that has a first-order definition in $(\mathbb{Q};=)$. Let us define the formula
$${\stackrel{\neq}{\forall} x_1} \exists \overline{y}_1 \ldots {\stackrel{\neq}{\forall} x_k} \exists \overline{y}_k \, \phi'(x_1,\overline{y}_1,\ldots,x_{k},\overline{y}_{k},z_1,\ldots,z_q)$$
by insisting that universal variables are always evaluated to an element distinct (different) from all outer quantified variables and free variables. Strictly, let us assume that the quantification is over all such possibilities. Though, for equality languages, there is only one such distinct (different) type up to automorphism. Let us dub the corresponding strategy for Universal as the \emph{all-different} strategy (noting though that Existential may repeat an element and there may be repetitions in the free variables). 

Let us note that quantifiers $\stackrel{\neq}{\forall}$ commute with themselves, viz ${\stackrel{\neq}{\forall}x}{\stackrel{\neq}{\forall}y}={\stackrel{\neq}{\forall}y}{\stackrel{\neq}{\forall}x}$, but not with $\forall$. For example, on the graph $K_2$, $\forall x{\stackrel{\neq}{\forall}y} E(x,y)$ is true, whereas $ {\stackrel{\neq}{\forall}y}\forall x E(x,y)$ is false.  
\begin{lemma}
Let $\mathcal{B}$ be an equality language. Suppose that the positive equality-free formula
$
\forall x \, \phi'(x,z_1,\ldots,z_q) 
$
is logically distinct from 
$
{\stackrel{\neq}{\forall} x} \, \phi'(x,z_1,\ldots,z_q)$.
Then there exists $\zeta$ such that $\zeta$ breaks $\vee$ on $\mathcal{B}$. Note that $\phi'$ is not necessarily quantifier-free.
\label{lem:neq-1}
\end{lemma}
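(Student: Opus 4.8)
The plan is to argue with the equality-type behaviour of $\phi'$ — legitimate because $\phi'$ is first order and every permutation of $\mathbb{Q}$ is an automorphism of $\mathcal{B}$, so $\mathcal{B}\models\phi'(\bar b)$ depends only on the equality type of $\bar b$, and in particular $\phi'$ may be treated as a black box even when it has inner quantifiers — and to show that the hypothesis forces $\neq$ to be positive equality-free definable on $\mathcal{B}$. Once we have a positive equality-free formula $\delta(u,v)$ equivalent to $u\neq v$ on $\mathcal{B}$, the sentence $\zeta:=\exists u\,\exists v\,\forall w\,(\delta(w,u)\vee\delta(w,v))$ breaks $\vee$: it is true (Existential plays $u\neq v$, so any $w$ differs from $u$ or from $v$), whereas $\exists u\,\exists v\,\forall w\,\delta(w,u)$ and $\exists u\,\exists v\,\forall w\,\delta(w,v)$ are both false (Universal replies $w:=u$, respectively $w:=v$), and the innermost quantifier is universal as required.

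To obtain $\delta$, first unpack ``logically distinct''. As ${\stackrel{\neq}{\forall}}\,x$ is weaker than $\forall x$, there is an assignment $\bar a=(a_1,\dots,a_q)$ of the free variables with $\phi'(b,\bar a)$ true for every $b\notin\{a_1,\dots,a_q\}$ but $\phi'(a_j,\bar a)$ false for some $j$. Substituting one fresh variable for each block of equal coordinates of $\bar a$ yields a positive equality-free formula of smaller arity with the same property and a witnessing tuple of pairwise distinct entries, so I may assume $\bar a$ is pairwise distinct. By homogeneity this gives the type description: $\phi'$ holds on every $(q{+}1)$-tuple of pairwise distinct entries and fails on every tuple of the equality type $\tau_j$ whose first coordinate equals its $(j{+}1)$-st and whose remaining $q$ entries are pairwise distinct. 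Finally I would run an induction on $q$: if identifying some of $z_1,\dots,z_q$ in $\phi'$ again produces a formula satisfying the hypothesis of the lemma — necessarily of smaller arity — the induction hypothesis already supplies $\zeta$; so I may assume no identification does, which pins down the behaviour of $\phi'$ on the non-generic equality types of $(z_1,\dots,z_q)$.

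The construction of $\delta$ is the heart of the matter and I expect it to be the main obstacle. In the cleanest case, $\phi'$ is false only on the single type $\tau_j$; then $\delta(x,z_j):=\forall z_1\cdots\widehat{z_j}\cdots\forall z_q\,\phi'(x,z_1,\dots,z_q)$ defines $x\neq z_j$, since for fixed $x,z_j$ it fails iff some (generically chosen) completion of the other $z_i$ lands the tuple in a $\phi'$-false type, i.e.\ iff $x=z_j$ — and equivalently the naive $\zeta:=\forall x\,\exists y\,\forall z_1\cdots\forall z_q\,(\phi'(x,z_1,\dots,z_q)\vee\phi'(y,z_1,\dots,z_q))$ already breaks $\vee$ (Universal aims $z_j$ at a disjunct's first argument to reach $\tau_j$, but cannot defeat both disjuncts once Existential keeps $y\neq x$). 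In general $\phi'$ is false on more equality types, which Universal could exploit to defeat the naive $\zeta$ (for instance via a second false index $j'$, or by collapsing the $z_i$-block so that both first arguments become fresh yet $\phi'$ is still false); the real work is to handle these uniformly — typically by assembling $\delta$ from several copies of $\phi'$ with permuted and identified arguments placed under a universal prefix, and then using the type description together with the no-nontrivial-identification assumption to check that exactly the ``collapsed'' plays that would refute it have been excluded. Once $\delta$ is in hand the lemma follows from the first paragraph.
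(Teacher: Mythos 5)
Your framework matches the paper's: induction on $q$, homogeneity so that $\phi'$ depends only on the equality type of its arguments, a reduction step for non-generic failure points, and an analysis of $\phi'$ once its behaviour is pinned down. But there is a genuine gap at what you yourself call the heart of the matter. After the reduction you know $\phi'(x,z_1,\dots,z_q)$ is true whenever $(z_1,\dots,z_q)$ is not injective, and is false on injective tuples exactly when $x=z_j$ for $j$ in some nonempty $S\subseteq[q]$. For $|S|=1$ your $\delta$ is right. For $|S|>1$ you only say one should ``assemble $\delta$ from several copies of $\phi'$ with permuted and identified arguments'' and flag this as ``the real work.'' That is a description of an intended construction, not a construction: the lemma essentially \emph{is} that step, and the proposal stops precisely there. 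Additionally, your naive $\delta(x,z_j):=\forall z_1\cdots\widehat{z_j}\cdots\forall z_q\,\phi'(x,\bar z)$ is in fact identically false once $|S|\geq 2$, since Universal can set some other $z_{j'}$ with $j'\in S$ equal to $x$ — so the $|S|>1$ case is not a cosmetic complication but the whole problem.

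Two further remarks. First, the $\neq$-definability route you propose is in fact viable: after universally quantifying the $z_i$ with $i\notin S$ one may take $S=[r]$, and then
\[
\forall z_2,\dots,z_r\,\bigl(\phi'(x,z_1,z_2,\dots,z_r)\ \vee\ \phi'(z_1,x,z_2,\dots,z_r)\bigr)
\]
defines $x\neq z_1$ (the disjunction fails only when $z_1,\dots,z_r$ are pairwise distinct and $x=z_1$), so two swapped copies suffice — no elaborate assembly is needed. Second, the paper's own proof does not go via $\neq$ at all: it notes that $=$ is PEF-definable (universally quantify all but two variables in the reduced relation) and then exhibits the two-disjunct universal sentence
$\forall z_1,\dots,z_r,x\,\bigl[\bigl(\bigvee_{i\neq j}z_i=z_j\ \vee\ \bigwedge_i x\neq z_i\bigr)\ \vee\ \bigvee_i z_i=x\bigr]$,
whose first disjunct is the reduced relation and whose second is built from $=$, and which breaks $\vee$ directly. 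So the paper produces a breaking sentence from the characterised relation plus $=$, while you aim for $\neq$ and then reuse the base-case sentence $\exists u\exists v\forall w\,(w\neq u\vee w\neq v)$. Both are sound once completed, but the completion is exactly what is missing from your write-up.
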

\begin{proof}
We proceed by induction on $q$. Note that when the two are logically distinct, the former must be false at some point $(z_1,\ldots,$ $z_q)=(a_1,\ldots,a_q)$ while the latter is true. Suppose $q=1$, then $\forall x \, \phi'(x,z_1)$ is logically distinct from ${\stackrel{\neq}{\forall} x} \phi'(x,z_1)$. By assumption, $\theta(x,z_1):= \phi'(x,z_1)$ is logically equivalent to $x\neq z_1$. Now $\exists u \exists v \forall w$ $u\neq w \vee v \neq w$ breaks $\vee$ on $\mathcal{B}$.

Now suppose the statement of the lemma is true for $q=k$ and let us prove that it is true for $q=k+1$. We may assume that $\phi:=\forall x \phi'(x,z_1,\ldots,z_{k+1})$ is logically equivalent to ${\stackrel{\neq}{\forall} x} \, \phi'(x,z_1,\ldots,z_{k+1})$ whenever $z_1,\ldots,z_{k+1}$ are not all distinct (else we reduce to a previous case). 


Suppose $\phi(a_1,\ldots,a_{k+1})$ is false at some point such that $|\{a_1,\ldots,$ $a_{k+1}\}|<k+1$. Then we violate the inductive hypothesis; let us explain how. Choose the finest non-singleton partition under the equality relation for some $\{a_1,\ldots,a_{k+1}\}$ such that $\phi(a_1,\ldots,a_{k+1})$ is false. Note we need to forbid the extreme choice of the singletons as $\phi(a_1,\ldots,a_{k+1})$ is false when $|\{a_1,\ldots,a_{k+1}\}|=k+1$ by the assumption of the lemma.

W.l.o.g. assume that $a_{k+1}$ is a repeated element. Now replace $z_{k+1}$ with $x'$ and add universal quantification to this outermost. Then, at some point $(z_1,\ldots,z_k)=(a_1,\ldots,a_k)$:
\begin{equation}
\forall x' \forall x \, \phi'(x,a_1,\ldots,a_{k},x')
\label{equ:1:neq:1}
\end{equation}
is false, but 
\begin{equation}
{\stackrel{\neq}{\forall} x'} {\stackrel{\neq}{\forall} x} \, \phi'(x,a_1,\ldots,a_{k},x')
\label{equ:1:neq:2}
\end{equation}
is true. For the latter, there are two cases to consider. If the partition were a cover of the singleton (trivial) partition, \mbox{i.e.} precisely two elements are equivalent and all others are singletons, then truth follows from our original assumptions. Otherwise, the slightly finer partition born of separating $z_{k+1}$ from its equivalence class is such that $\phi$ itself is true here (with quantification $\forall x$, which implies the weaker ${\stackrel{\neq}{\forall} x}$).

By assumption (\mbox{ind.} hyp.) both
${\stackrel{\neq}{\forall} x'} \forall x \, \phi'(x,a_1,\ldots,a_{k},x')$ and 
${\stackrel{\neq}{\forall} x} \forall x' \, \phi'(x,a_1,\ldots,a_{k},x')$
are equivalent to the \eqref{equ:1:neq:1}. But now we violate the inductive hypothesis through either of these and the \eqref{equ:1:neq:2}. Thus $\phi(a_1,\ldots,a_{k+1})$ is true at every point such that $|\{a_1,\ldots,a_{k+1}\}|<k+1$.

Let $S\subseteq [k+1]$ so that, for $i \in S$, $\phi'(z_i,z_1,\ldots,z_{k+1})$ is false (note that $S$ is non-empty by assumption). Then $$\theta(x,z_1,\ldots,z_{k+1})=\left( \bigwedge_{i\neq j \in [k+1]} z_i \neq z_j\right) \rightarrow \left(\bigwedge_{i \in S} x \neq z_i \right).$$ Now, we universally quantify over all $z_i$ such that $i \in [k+1] \setminus S$ and rename indices in the $z$-variables to obtain, for some $1\leq r$:
\[ \bigvee_{i\neq j \in [r]} z_i = z_j  \vee \bigwedge_{i \in [r]} x \neq z_i.\]
There are now several ways to conclude the argument, let us choose one. Note that $z_1=z_2$ is definable by universally quantifying all variables other than $z_1$ and $z_2$. Now the formula 
\[ \forall z_1,\ldots,z_r,x \left( \bigvee_{i\neq j \in [r]} z_i = z_j  \vee \bigwedge_{i \in [r]} x \neq z_i \right) \vee \left( \bigvee_{i \in r} z_i=x\right)\]
breaks $\vee$ on $\mathcal{B}$.
\end{proof}
The proof of the following lemma is deferred to the appendix.
\begin{lemma}
Let $\mathcal{B}$ be an equality language. Suppose that the positive equality-free formula $$\forall x_1 \exists \overline{y}_1 \ldots \forall x_k \exists \overline{y}_k \, \phi'(x_1,\overline{y}_1,\ldots,x_k,\overline{y}_k,z_1,\ldots,z_q)$$ is logically distinct from $${\stackrel{\neq}{\forall} x_1} \exists \overline{y}_1 \ldots {\stackrel{\neq}{\forall} x_k} \exists \overline{y}_k \, \phi'(x_1,\overline{y}_1,\ldots,x_k,\overline{y}_k,z_1,\ldots,z_q).$$ Then there exist $\zeta$ so that $\zeta$ breaks $\vee$ on $\mathcal{B}$.
\label{lem:equ-appendix}
\end{lemma}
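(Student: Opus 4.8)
The plan is to reduce this to the one-variable case already established in Lemma~\ref{lem:neq-1}, by a hybrid argument that isolates a single place where a $\forall$ is turned into a $\stackrel{\neq}{\forall}$, followed by instantiating the variables lying outside that quantifier. The basic observation is a monotonicity fact: turning a $\forall$ into a $\stackrel{\neq}{\forall}$ only restricts Universal's moves, and since $\exists$, $\forall$ and $\stackrel{\neq}{\forall}$ all respect implication between the subformulas they bind, the all-$\forall$ formula of the statement implies the all-$\stackrel{\neq}{\forall}$ one at every assignment to $z_1,\ldots,z_q$; so logical distinctness forces an assignment where the $\stackrel{\neq}{\forall}$-version is true and the $\forall$-version false. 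Now introduce the hybrid formulas $\Phi_0,\ldots,\Phi_k$ in which the first $j$ universal quantifiers have been turned into $\stackrel{\neq}{\forall}$ and the remaining $k-j$ left as $\forall$; then $\Phi_0$ and $\Phi_k$ are the two formulas of the statement, each $\Phi_{j-1}$ implies $\Phi_j$ everywhere, and for some index $j$ there is an assignment $\overline{a}=(a_1,\ldots,a_q)$ with $\Phi_j(\overline{a})$ true and $\Phi_{j-1}(\overline{a})$ false. These two formulas share the prefix $\stackrel{\neq}{\forall}x_1\exists\overline{y}_1\cdots\stackrel{\neq}{\forall}x_{j-1}\exists\overline{y}_{j-1}$ and differ only in that the $j$-th quantifier is $\forall x_j$ in $\Phi_{j-1}$ and $\stackrel{\neq}{\forall}x_j$ in $\Phi_j$.

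For the reduction I would pass to the Hintikka game. From $\Phi_j(\overline{a})$ true and $\Phi_{j-1}(\overline{a})$ false, Existential has a winning strategy $\sigma$ in the first game and Universal a winning strategy $\tau$ in the second. The first $j-1$ rounds have the same shape in both games, so simulate $\tau$ against $\sigma$ through those rounds, reaching a history $h=(a^1,\overline{b}^1,\ldots,a^{j-1},\overline{b}^{j-1})$ in which the $a^i$ are pairwise distinct and distinct from every entry of $\overline{a}$ (legality of the $\stackrel{\neq}{\forall}$-moves). Since $\sigma$ is globally winning and $h$ is a legal partial play of $\Phi_j$, Existential still wins from the residual position, so $\stackrel{\neq}{\forall}x_j\exists\overline{y}_j\forall x_{j+1}\exists\overline{y}_{j+1}\cdots\forall x_k\exists\overline{y}_k\,\phi'$ is true once its outer variables are substituted by $h$ and $\overline{a}$; dually, $\tau$ being globally winning makes $\forall x_j\exists\overline{y}_j\forall x_{j+1}\exists\overline{y}_{j+1}\cdots\forall x_k\exists\overline{y}_k\,\phi'$ false under the same substitution. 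These two sentences differ only in the leading quantifier, so with $\theta(x_j,\overline{w}):=\exists\overline{y}_j\forall x_{j+1}\exists\overline{y}_{j+1}\cdots\forall x_k\exists\overline{y}_k\,\phi'$, where $\overline{w}$ renames the free variables $x_1,\overline{y}_1,\ldots,x_{j-1},\overline{y}_{j-1},z_1,\ldots,z_q$, we get that $\forall x_j\,\theta(x_j,\overline{w})$ and $\stackrel{\neq}{\forall}x_j\,\theta(x_j,\overline{w})$ disagree at $\overline{w}=(h,\overline{a})$; and $\stackrel{\neq}{\forall}x_j$ here means ``distinct from every entry of $\overline{w}$'', which is precisely its meaning in the original formula once the outer variables have been substituted. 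As $\theta$ is positive equality-free (though not quantifier-free, which Lemma~\ref{lem:neq-1} explicitly allows), Lemma~\ref{lem:neq-1} applied to $\theta$ delivers a $\zeta$ breaking $\vee$ on $\mathcal{B}$.

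The delicate step is the game-semantic bookkeeping of the second paragraph: one must justify that simulating $\tau$ against $\sigma$ through the shared prefix is legitimate (the games agree there, so a move legal in one is legal in the other), and that being at a legal partial play of a globally winning strategy pins the truth value of the residual sentence in the stated direction. A second point to verify carefully is that $\stackrel{\neq}{\forall}x_j$, after the outer variables of the original formula are replaced by the constants coming from $h$ and $\overline{a}$, is literally ``$x_j$ different from all free variables of $\theta$'', so that the hypothesis of Lemma~\ref{lem:neq-1} is met verbatim. The monotonicity facts and the telescoping of the hybrids are routine by comparison.
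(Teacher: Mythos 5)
Your proof is correct, and its structure matches the paper's up to a cosmetic but arguably cleaner reorganisation. The paper argues by induction on $k$: in the inductive step for $k=m+1$, it appeals to the inductive hypothesis to conclude that the all-$\forall$ formula is logically equivalent to the hybrid $\stackrel{\neq}{\forall}^{m}\forall$ (or else $\vee$ already breaks and we stop), and then finds the single-quantifier discrepancy at the innermost quantifier $x_{m+1}$, reducing to Lemma~\ref{lem:neq-1} exactly as you do. You replace this recursion by the one-shot hybrid chain $\Phi_0,\ldots,\Phi_k$ and make the monotonicity ($\Phi_{j-1}\Rightarrow\Phi_j$) explicit, so the ``first index $j$ where the truth value flips'' is found directly by telescoping rather than implicitly through unwinding the induction. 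The game-semantic extraction of the partial history $h$ (simulating $\tau$ against $\sigma$ through the shared $\stackrel{\neq}{\forall}^{j-1}$ prefix, then reading off the residual truth values) and the observation that after substituting $h,\overline{a}$ the quantifier $\stackrel{\neq}{\forall}x_j$ means precisely ``distinct from all free variables of $\theta$'', so Lemma~\ref{lem:neq-1} applies verbatim, are exactly the steps the paper uses, just stated more carefully than the paper's terse ``there must exist some assignment $b_1,\overline{c}_1,\ldots$ so that (4) is true but (5) is false.'' Net: same decomposition, same key lemma, same endgame; your presentation avoids the ``unless the I.H. already finishes'' case split and foregrounds the monotonicity fact that the paper leaves implicit, which makes the argument easier to audit without changing its substance.
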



\begin{corollary}
Let $\mathcal{B}$ be an equality language. The all-different strategy is optimal for Universal iff $\vee$ does not break on $\mathcal{B}$.
\end{corollary}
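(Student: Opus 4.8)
Unwinding the definitions, ``the all-different strategy is optimal for Universal on $\mathcal B$'' means that every positive equality-free formula $\forall x_1 \exists \overline y_1 \ldots \forall x_k \exists \overline y_k\,\phi'$ is logically equivalent on $\mathcal B$ to its all-different variant ${\stackrel{\neq}{\forall} x_1}\exists\overline y_1\ldots{\stackrel{\neq}{\forall} x_k}\exists\overline y_k\,\phi'$ (for sentences: they get the same truth value on $\mathcal B$). One inclusion is automatic, since constraining Universal to the all-different strategy can only help Existential; optimality is the converse. The ``if'' direction of the Corollary is then nothing but the contrapositive of Lemma~\ref{lem:equ-appendix}: if the all-different strategy were not optimal, some such formula would be logically distinct from its all-different variant, and Lemma~\ref{lem:equ-appendix} would produce a $\zeta$ breaking $\vee$ on $\mathcal B$. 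So the only direction that needs work is ``only if''.

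For that I would prove the contrapositive: suppose some $\zeta$ breaks $\vee$ on $\mathcal B$ and produce a sentence witnessing that the all-different strategy is not optimal. Write $\zeta = \mathcal Q\,(\zeta_1\vee\zeta_2)$ with $\mathcal Q := Q'_1 w_1 \ldots Q'_\ell w_\ell$ the common quantifier prefix, so $\mathcal B \models \zeta$ while $\mathcal B \notmodels \mathcal Q\,\zeta_1$ and $\mathcal B \notmodels \mathcal Q\,\zeta_2$. For a formula $\chi$ write $\chi^{\neq}$ for the result of replacing every $\forall$ by ${\stackrel{\neq}{\forall}}$, and let $\mathcal Q^{\neq}$ denote $\mathcal Q$ so modified; observe that an inner all-different quantifier must avoid exactly the same earlier variables whether we treat $\mathcal Q\,\zeta_i$ as one formula or keep $\mathcal Q$ and $\zeta_i$ apart, so $(\mathcal Q\,\zeta_i)^{\neq} = \mathcal Q^{\neq}\,\zeta_i^{\neq}$. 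The crucial point (the sub-lemma below) is that on an equality language ${\stackrel{\neq}{\forall}}$ distributes over $\vee$; since $\exists$ does too, and $\mathcal Q^{\neq}$ consists only of $\exists$- and ${\stackrel{\neq}{\forall}}$-quantifiers, we may push the disjunction out through the whole prefix, obtaining
\[\zeta^{\neq} \;=\; \mathcal Q^{\neq}\bigl(\zeta_1^{\neq}\vee\zeta_2^{\neq}\bigr)\;\equiv\;\bigl(\mathcal Q^{\neq}\zeta_1^{\neq}\bigr)\vee\bigl(\mathcal Q^{\neq}\zeta_2^{\neq}\bigr)\;=\;(\mathcal Q\,\zeta_1)^{\neq}\vee(\mathcal Q\,\zeta_2)^{\neq}.\]
Now if the all-different strategy were optimal, then $\mathcal B\models\zeta$ would give $\mathcal B\models\zeta^{\neq}$, hence $\mathcal B\models(\mathcal Q\,\zeta_1)^{\neq}$ or $\mathcal B\models(\mathcal Q\,\zeta_2)^{\neq}$; but $\mathcal B\notmodels\mathcal Q\,\zeta_i$ would give $\mathcal B\notmodels(\mathcal Q\,\zeta_i)^{\neq}$ for both $i$, a contradiction. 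Thus at least one of the sentences $\zeta$, $\mathcal Q\,\zeta_1$, $\mathcal Q\,\zeta_2$ has a truth value on $\mathcal B$ different from that of its all-different variant, as required.

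It remains to state and prove the sub-lemma: on an equality language $\mathcal B$, for positive equality-free formulas $\alpha,\beta$ (possibly already containing ${\stackrel{\neq}{\forall}}$), one has ${\stackrel{\neq}{\forall}}w\,(\alpha\vee\beta)\equiv {\stackrel{\neq}{\forall}}w\,\alpha\vee{\stackrel{\neq}{\forall}}w\,\beta$ in the remaining variables. I would prove this by observing that, for any fixed values of the variables ${\stackrel{\neq}{\forall}}w$ must avoid, all admissible witnesses $w$ form a single orbit of the pointwise stabiliser of those values in $\mathrm{Aut}(\mathcal B)$: every permutation of the domain is an automorphism of an equality language, so one such $w$ can be mapped to any other while fixing the given values, whence the truth of $\alpha$ (respectively $\beta$) at the witness is independent of the choice of $w$; the equivalence follows because the domain is infinite, so an admissible $w$ always exists, and a (vacuous-free) universal over a relation that is constant on its range equals that constant. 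I do not anticipate a real obstacle here --- the distributivity is routine homogeneity --- so the ``main obstacle'' is merely bookkeeping: keeping straight which outer variables each all-different quantifier ranges away from, so that the identity $(\mathcal Q\,\zeta_i)^{\neq} = \mathcal Q^{\neq}\,\zeta_i^{\neq}$ and the pushing-out step are both correctly justified (and, if one insists on prenex inputs, that bringing $\zeta$, $\mathcal Q\,\zeta_1$, $\mathcal Q\,\zeta_2$ to prenex form commutes with forming the all-different variant).
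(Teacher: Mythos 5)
Your proposal is correct, and it is in fact more complete than what the paper provides: the paper states this corollary with no proof at all, treating it as an immediate consequence of Lemma~\ref{lem:equ-appendix}. Only the ``if'' direction (no break of $\vee$ implies all-different is optimal) actually is an immediate consequence of that lemma (its contrapositive, quantified over all prenex formulas), and that is also the only direction the paper goes on to use, in Corollary~\ref{cor:equality-NP}. The ``only if'' direction needs a genuine argument, and yours is a good one.

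Your key sub-lemma --- that ${\stackrel{\neq}{\forall}}$ distributes over $\vee$ on an equality language --- is correct and is the right tool: since the structure is homogeneous and all fresh witnesses realise the same type over the outer parameters, the truth of $\alpha$ (resp.\ $\beta$) is constant on the range of ${\stackrel{\neq}{\forall}}w$, so the universal quantifier behaves like evaluation at a single generic point, and evaluation commutes with $\vee$. One remark: the first appeal to optimality in your contradiction is superfluous, since $\mathcal{B}\models\phi\Rightarrow\mathcal{B}\models\phi^{\neq}$ holds unconditionally (restricting Universal only helps Existential, as you yourself observe earlier); only the direction $\mathcal{B}\notmodels\mathcal{Q}\zeta_i\Rightarrow\mathcal{B}\notmodels(\mathcal{Q}\zeta_i)^{\neq}$ uses the assumed optimality, and that is enough for the contradiction. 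Your flagged bookkeeping point --- that $(\mathcal{Q}\zeta_i)^{\neq}=\mathcal{Q}^{\neq}\zeta_i^{\neq}$ and that the scope of each ${\stackrel{\neq}{\forall}}$ is stable under the rearrangements used --- is a real but routine concern on an equality language, where avoiding extra irrelevant parameters does not change the type of a fresh witness; you handle it appropriately. So: ``if'' matches the paper's implicit route; ``only if'' is your own correct addition, filling a gap the paper glosses over because it never needs that direction.
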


\begin{lemma}
    Let $\mathcal{B}$ be an equality language. If the all-different strategy is optimal for Universal, then $\MC(\mathcal{B})$ is in \NP.
\end{lemma}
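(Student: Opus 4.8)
The plan is to turn the hypothesis into an $\mathrm{NP}$ algorithm by observing that, once Universal is restricted to the all-different strategy, the game becomes essentially finite. Let $\Phi=\forall x_1\exists y_1\cdots\forall x_n\exists y_n\,\theta'$ be an instance of $\MC(\mathcal B)$, with $\theta'$ quantifier-free and positive equality-free. By assumption, $\mathcal B\models\Phi$ iff Existential wins the game in which Universal plays the all-different strategy, i.e. $\mathcal B\models{\stackrel{\neq}{\forall}x_1}\exists y_1\cdots{\stackrel{\neq}{\forall}x_n}\exists y_n\,\theta'$. Since $\mathcal B$ is an equality language there is, up to automorphism, only one ``new, distinct'' type, so along such a play Universal's moves are forced up to automorphism no matter what Existential does. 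Hence a strategy for Existential amounts to a choice, for each $y_i$, of which class of the already-played elements $y_i$ equals --- or that $y_i$ is fresh --- and such a family of choices is exactly an equality type (partition) $\sigma$ of $\{x_1,y_1,\dots,x_n,y_n\}$ that is \emph{admissible}: $x_1,\dots,x_n$ lie in pairwise distinct classes of $\sigma$, and $x_i$ and $y_j$ lie in distinct classes whenever $j<i$ (these are precisely the inequalities forced by the all-different moves; $x_i$ and $y_j$ may share a class when $j\ge i$). The algorithm is therefore: guess an admissible $\sigma$, and accept iff $\theta'$ is true under $\sigma$.

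A partition of $2n$ elements is describable in $O(n\log n)$ bits (e.g. assign each variable the least-indexed variable of its class), so $\sigma$ is a polynomial-size certificate and admissibility is checkable in polynomial time. To evaluate $\theta'$ under $\sigma$, note that $\sigma$ determines the value of each atom $R(\overline v)$: it holds iff the equality type that $\sigma$ induces on $\overline v$ belongs to $R$, and since $R$ is first-order definable in $(\mathbb Q;=)$, membership in $R$ depends only on the equality type of the argument; as the arities in the signature of $\mathcal B$ are bounded, there are only finitely many equality types of each relevant arity, so this is a constant-time table lookup --- we assume the finite table exists exactly as in Lemma~\ref{lem:easy}. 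Evaluating the $\wedge/\vee$ structure over these atom values is then polynomial. For correctness: given an accepting $\sigma$, realise $\sigma$ in $(\mathbb Q;=)$ (possible as the domain is infinite) and let Universal play each $x_i$ inside its $\sigma$-class --- admissibility makes this a legal all-different move --- while Existential answers each $y_i$ inside its $\sigma$-class; the resulting configuration has equality type $\sigma$, so $\theta'$ holds, whence $\mathcal B\models{\stackrel{\neq}{\forall}x_1}\exists y_1\cdots\theta'$ and thus $\mathcal B\models\Phi$. Conversely, if $\mathcal B\models\Phi$, then by hypothesis Existential wins against the all-different strategy; letting Universal enumerate concrete fresh elements and recording the equality type of the play produces an admissible $\sigma$ under which $\theta'$ is true.

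The step requiring care --- the crux --- is the reduction of Existential's strategies to a single partition. One must verify that against the all-different strategy adaptivity is useless for Existential, because Universal's play is determined up to an automorphism of $\mathcal B$ independently of Existential's choices (this is where the uniqueness of the new type for equality languages enters), and that the truth of the quantifier-free $\theta'$ at the end of a play depends only on the equality type of the played tuple (automorphism invariance, together with the fact that relations of an equality language are unions of equality types). Granting these two observations, the equivalence ``$\mathcal B$ satisfies the $\stackrel{\neq}{\forall}$-sentence iff some admissible $\sigma$ makes $\theta'$ true'' is immediate, and everything else is routine bookkeeping.
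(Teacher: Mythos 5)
Your proof is correct and follows essentially the same approach as the paper's (very terse) proof: once Universal is fixed to the all-different strategy, Existential's play reduces to a single sequence of choices among the previously-played elements or a fresh one, i.e.\ a polynomial-size equality type, which can be guessed and then verified against $\theta'$ by a finite table lookup. You merely make explicit what the paper leaves implicit — that adaptivity is useless for Existential against a forced opponent, and that the certificate is the admissible partition $\sigma$ — which is a welcome clarification but not a different argument.
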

\begin{proof}
    When some elements have been played by Universal and Existential, there is a unique up to isomorphism new element that is not equal to all those played before (this is provided by homogeneity) and Universal always may be assumed to play this. Existential, meanwhile, plays either an element such as this, or some element that has gone before, and this guessing alone pushes the complexity into \NP.
\end{proof}

\begin{corollary}
Let $\mathcal{B}$ be an equality language. Either $\vee$ does not break on $\mathcal{B}$ and $\MC(\mathcal{B})$ is in \NP, or $\MC(\mathcal{B})$ is \coNP-hard.
\label{cor:equality-NP}
\end{corollary}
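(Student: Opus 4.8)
The plan is a straightforward case analysis on whether $\vee$ breaks on $\mathcal{B}$, stitching together the three immediately preceding results. There is no real obstacle here: all the substance has already been discharged, principally in Lemma~\ref{lem:equ-appendix} (which powers the preceding corollary characterising optimality of the all-different strategy) and in Lemma~\ref{lem:marcin-after-bodirsky-et-al} for the hardness side.

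First I would treat the case where no $\zeta$ breaks $\vee$ on $\mathcal{B}$. By the corollary just above (the all-different strategy is optimal for Universal iff $\vee$ does not break on $\mathcal{B}$), optimality of the all-different strategy holds. Then the lemma just above (optimality of the all-different strategy implies $\MC(\mathcal{B}) \in \NP$) yields that $\MC(\mathcal{B})$ is in \NP, which is the first disjunct of the statement.

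Second I would treat the complementary case where some $\psi$ breaks $\vee$ on $\mathcal{B}$. Here I simply invoke the second bullet of Lemma~\ref{lem:marcin-after-bodirsky-et-al}, which gives that $\MC(\mathcal{B})$ is \coNP-hard; this is the second disjunct. Since the two cases are exhaustive (and in fact mutually exclusive), the dichotomy follows. The only point worth flagging is that the phrase ``$\vee$ breaks on $\mathcal{B}$'' is shorthand for ``there exists $\psi$ that breaks $\vee$ on $\mathcal{B}$'', so the hypothesis of Lemma~\ref{lem:marcin-after-bodirsky-et-al} is met verbatim; no structural assumption on $\mathcal{B}$ beyond being an equality language is needed, and indeed the hardness half needs nothing about $\mathcal{B}$ at all.

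Combined with Corollary~\ref{cor:equality-NP}'s dual obtained via Lemma~\ref{lem:duality} (exchanging the roles of $\wedge$/$\vee$ and \NP/\coNP), this case analysis is exactly what later feeds the tetrachotomy for equality languages, so I would keep the proof this terse and let the reader trace the chain of cited results.
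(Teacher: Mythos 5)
Your proof is correct and matches exactly what the paper intends: the paper gives no explicit proof for Corollary~\ref{cor:equality-NP}, treating it as an immediate consequence of the preceding corollary (all-different optimal iff $\vee$ does not break), the preceding lemma (all-different optimal implies $\MC(\mathcal{B})\in\NP$), and the second bullet of Lemma~\ref{lem:marcin-after-bodirsky-et-al} for the \coNP-hardness branch. The closing paragraph about duality belongs to the subsequent tetrachotomy theorem rather than this corollary, but its inclusion does not affect correctness.
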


\begin{theorem}
Let $\mathcal{B}$ be an equality language. Either $\MC(\mathcal{B})$ is in L, is \NP-complete, is \coNP-complete or is \Pspace-complete.
\end{theorem}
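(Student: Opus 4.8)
The plan is a four-way case distinction governed by the two Boolean conditions that drove the finite-case corollary of Section~\ref{sec:fin-class}: whether there exists a $\phi$ breaking $\wedge$ on $\mathcal{B}$, and whether there exists a $\psi$ breaking $\vee$ on $\mathcal{B}$. These four cases are exhaustive and mutually exclusive, and in each I would identify the complexity class together with matching hardness.

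If neither $\wedge$ nor $\vee$ breaks on $\mathcal{B}$, then $\MC(\mathcal{B})$ is in \textup{L} by Lemma~\ref{lem:easy} (which needs only that $\mathcal{B}$ has finite signature). If $\wedge$ breaks but $\vee$ does not, then Lemma~\ref{lem:marcin-after-bodirsky-et-al} gives \NP-hardness, while Corollary~\ref{cor:equality-NP}, applied under the hypothesis that $\vee$ does not break, puts $\MC(\mathcal{B})$ in \NP; hence $\MC(\mathcal{B})$ is \NP-complete. If both $\wedge$ and $\vee$ break, then Lemma~\ref{lem:marcin-after-bodirsky-et-al} gives \Pspace-hardness, and membership in \Pspace\ follows from Proposition~\ref{prop:in-Pspace}, since an equality language is by definition a first-order reduct of $(\mathbb{Q};=)$, which is homogeneous and (vacuously) finitely bounded; hence $\MC(\mathcal{B})$ is \Pspace-complete.

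The remaining case, $\vee$ breaks but $\wedge$ does not, I would handle by duality rather than directly. First note that $\overline{\mathcal{B}}$ is again an equality language, because the complement $B^k\setminus R$ of a relation first-order definable in $(\mathbb{Q};=)$ is again first-order definable in $(\mathbb{Q};=)$. As recorded in the proof of Lemma~\ref{lem:coNP-finite}, if $\psi$ breaks $\vee$ on $\mathcal{B}$ then $\overline{\psi}$ breaks $\wedge$ on $\overline{\mathcal{B}}$, and conversely any formula breaking $\wedge$ on $\overline{\mathcal{B}}$ would dualise to one breaking $\vee$ on $\mathcal{B}=\overline{\overline{\mathcal{B}}}$; so on $\overline{\mathcal{B}}$ we are in the case ``$\wedge$ breaks, $\vee$ does not'', whence $\MC(\overline{\mathcal{B}})$ is \NP-complete by the paragraph above. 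Lemma~\ref{lem:duality} then gives that $\MC(\mathcal{B})$ is \coNP-complete.

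I do not expect a genuine obstacle in the theorem itself: all of the real difficulty has already been absorbed into Lemmas~\ref{lem:neq-1} and~\ref{lem:equ-appendix} and Corollary~\ref{cor:equality-NP}, which establish that the failure of $\vee$ to break forces the all-different strategy to be optimal and hence forces \NP-membership. The only points needing a (trivial) verification are in the duality step: that equality languages are closed under taking complements of relations, and that the ``breaks $\wedge$'' and ``breaks $\vee$'' conditions swap under $\mathcal{B}\mapsto\overline{\mathcal{B}}$.
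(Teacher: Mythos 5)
Your proof is correct and takes essentially the same approach as the paper: a four-way case split on whether $\wedge$ and $\vee$ break, with the Logspace, \NP-complete, and \Pspace-complete cases handled directly, and the \coNP-complete case dispatched by duality. The only difference is that you spell out the duality step more fully (verifying that $\overline{\mathcal{B}}$ is again an equality language and that the breaking conditions swap), where the paper simply invokes Lemma~\ref{lem:duality}; this extra care is sound and welcome but does not change the argument.
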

\begin{proof}
If there does not exist $\phi$ so that $\phi$ breaks $\wedge$ on $\mathcal{B}$, and there does not exist $\psi$ so that $\psi$ breaks $\vee$ on $\mathcal{B}$, then $\MC(\mathcal{B})$ is in Logspace by Lemma~\ref{lem:easy}.

If there does exist $\phi$ so that $\phi$ breaks $\wedge$ on $\mathcal{B}$, but there does not exist $\psi$ so that $\psi$ breaks $\vee$ on $\mathcal{B}$, then $\MC(\mathcal{B})$ is in \NP\ by Corollary~\ref{cor:equality-NP} and is \NP-hard by Lemma~\ref{lem:marcin-after-bodirsky-et-al}. The dual case of \coNP-completeness follows from the principle of duality (Lemma~\ref{lem:duality}). Finally, if there exists $\phi$ so that $\phi$ breaks $\wedge$ on $\mathcal{B}$, and there exists $\psi$ so that $\psi$ breaks $\vee$ on $\mathcal{B}$, then $\MC(\mathcal{B})$ is \Pspace-hard by Lemma~\ref{lem:marcin-after-bodirsky-et-al} and in \Pspace\ by Proposition~\ref{prop:in-Pspace}.
\end{proof}

\subsection{Temporal languages}
\red{In the terminology of Section~\ref{sec:prelims}, 
a temporal language is a first-order reduct of $(\mathbb{Q};<)$.} This entire section is deferred to the appendix as it proceeds similarly to the case of equality languages.

\begin{theorem}
Let $\mathcal{B}$ be a first-order reduct of $(\mathbb{Q};<)$. Either $\MC(\mathcal{B})$ is in L, is \NP-complete, is \coNP-complete, or is \Pspace-complete.
\label{thm:temp-main}
\end{theorem}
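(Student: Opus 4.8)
The plan is to mirror exactly the structure used for equality languages, replacing the single ``all-different'' strategy for Universal by the pair of one-sided strategies that the introduction advertises: the \emph{all-lower} strategy, in which every universal variable is evaluated strictly below all previously played (universal, existential and free) variables, and the dual \emph{all-higher} strategy. As in Section~\ref{sec:prelims} we may assume the input is in prenex form with alternating quantifiers, and by Proposition~\ref{prop:in-Pspace} we already have \Pspace-membership; together with Lemma~\ref{lem:marcin-after-bodirsky-et-al} and Lemma~\ref{lem:easy} it therefore suffices to prove: if $\vee$ does not break on $\mathcal{B}$, then $\MC(\mathcal{B})$ is in \NP\ (the \coNP\ side then follows by the principle of duality, Lemma~\ref{lem:duality}, since $\overline{\mathcal{B}}$ is again a first-order reduct of $(\mathbb{Q};<)$, as $\neq$, $>$ and all Boolean combinations are first-order definable there). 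So the combinatorial heart is an analogue of Lemma~\ref{lem:neq-1} and Lemma~\ref{lem:equ-appendix}: if the strategy ``play a fresh smallest element'' (or its dual) is \emph{not} optimal for Universal, then one can positive-equality-free define a formula $\zeta$ that breaks $\vee$ on $\mathcal{B}$.

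First I would set up the right notion of ``fresh'' for temporal languages. Unlike the equality case, when some set of elements has been played there is no longer a single new $1$-type over them; the new element may sit in any of the finitely many order-intervals determined by the played points (including the two unbounded ones). The claim to establish is that Universal loses no power by always choosing the bottom unbounded interval (the all-lower strategy), \emph{or} always the top one (all-higher); and crucially that these two restrictions are equivalent in strength. I would prove the contrapositive by an induction on the number $q$ of free variables, following the template of Lemma~\ref{lem:neq-1}: if $\forall x\,\phi'(x,\overline z)$ differs from the restricted-$\forall$ version, then at the minimal ``bad'' tuple the quantifier-free content of $\phi'$ is forced to be equivalent to a small order formula in $x$ and the $z_i$'s — something like $\bigvee_i x < z_i$, or $\bigvee_i x = z_i \vee \bigvee_i x > z_i$, depending on which interval witnesses the discrepancy — and from such a relation, using that $=$, $<$ and $>$ between any two variables are positive-equality-free definable by universally quantifying away the remaining variables (exactly as in the last paragraph of Lemma~\ref{lem:neq-1}), one writes down an explicit $\zeta$ of the shape $\forall z_1,\dots,z_r,x\,\big(\cdots\big)\vee\big(\cdots\big)$ that breaks $\vee$. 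The inductive step, handling one free variable becoming a fresh universal one and appealing to the hypothesis on the coarser configurations, is the analogue of the partition-refinement argument in Lemma~\ref{lem:neq-1}, now with the linear order replacing the equivalence relation; the finitely many interval positions play the role of the finitely many partitions.

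With that lemma in hand the rest is routine. One obtains, exactly as in Corollary~\ref{cor:equality-NP}, the statement: either $\vee$ does not break on $\mathcal{B}$ and the all-lower (equivalently all-higher) strategy is optimal for Universal, or $\MC(\mathcal{B})$ is \coNP-hard. In the first case $\MC(\mathcal{B})$ is in \NP: a certificate is a play of the game in which, at each step, Universal's move is the unique-up-to-automorphism fresh smallest element over the previously played points (provided by homogeneity of $(\mathbb{Q};<)$, or of the underlying finitely bounded homogeneous structure after rewriting the instance over its signature as in Proposition~\ref{prop:in-Pspace}), while Existential guesses for each of her variables either a fresh smallest element or one of the polynomially many elements already in play; verifying the quantifier-free matrix on the resulting tuple of types is polynomial. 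Assembling the four cases — Logspace when neither $\wedge$ nor $\vee$ breaks (Lemma~\ref{lem:easy}); \NP-complete when only $\wedge$ breaks (the above plus Lemma~\ref{lem:marcin-after-bodirsky-et-al}); \coNP-complete when only $\vee$ breaks (duality, Lemma~\ref{lem:duality}); \Pspace-complete when both break (Lemma~\ref{lem:marcin-after-bodirsky-et-al} and Proposition~\ref{prop:in-Pspace}) — gives the tetrachotomy.

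\textbf{Main obstacle.} The delicate point is the inductive lemma: one must show not only that \emph{some} one-sided fresh-element strategy suffices for Universal, but that the \emph{specific} all-lower strategy does, by converting any failure into a $\vee$-breaking formula. The subtlety, relative to the equality case, is that there are now several candidate intervals for a ``fresh'' element and the extracted quantifier-free relation can involve $<$, $>$ and $=$ simultaneously; one has to check that in every such case the resulting order formula is still rich enough to positive-equality-free define a $\vee$-break, and that the base case and the refinement step in the induction correctly account for the extra interval structure. Establishing the equivalence of the all-lower and all-higher strategies (needed so that the \NP-membership argument can use whichever is convenient, and to match the symmetric statement quoted in the introduction) is a small extra wrinkle, but follows from the order-reversing automorphism of $(\mathbb{Q};<)$ combined with duality.
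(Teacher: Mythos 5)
Your high-level plan matches the paper's outline (\Pspace\ from Proposition~\ref{prop:in-Pspace}, hardness from Lemma~\ref{lem:marcin-after-bodirsky-et-al}, Logspace from Lemma~\ref{lem:easy}, then reduce to showing an NP-membership criterion via a restricted universal quantifier $\stackrel{<}{\forall}$ and close by duality). But the core technical lemma --- the analogue of Lemma~\ref{lem:neq-1} for $(\mathbb{Q};<)$ --- is where you have a real gap, and the sketch you give will not go through.

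You propose to ``mirror exactly'' the partition-refinement induction from the equality case, with ``the finitely many interval positions play[ing] the role of the finitely many partitions,'' and you assert that at a minimal bad tuple $\phi'$ is pinned down to a small order formula such as $\bigvee_i x<z_i$, from which $<$, $>$, $=$ can be positive-equality-free defined ``by universally quantifying away the remaining variables.'' Neither step survives scrutiny. In the equality case there is a single $1$-type over any finite set, which is what pins $\phi'$ down at the minimal bad tuple; over $(\mathbb{Q};<)$ there are many $1$-types (the intervals), and the inductive hypothesis (which only handles tuples with a repetition) does not collapse $\phi'$ to a single canonical shape. Moreover, $<$ is not positive-equality-free definable from an arbitrary first-order reduct of $(\mathbb{Q};<)$: e.g.\ a reduct whose relations are all invariant under \emph{all} permutations of $\mathbb{Q}$ cannot define any order relation, yet such structures fall within the hypotheses. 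Your ``main obstacle'' paragraph is precisely the place where the argument is unfinished, and the fix is not a routine refinement-by-intervals induction.

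What the paper actually does at this point is a different idea: it \emph{symmetrizes} the formula. Lemma~\ref{lem:temporal-disjunction} observes that $\bigvee_{\pi\in S_n}\psi(v_{\pi(1)},\ldots,v_{\pi(n)})$ is invariant under all permutations of $\mathbb{Q}$ and hence is an equality language relation. In the inductive step of Lemma~\ref{lem:temp-1} one then case-splits on whether $\forall\overline v\,\bigvee_\pi\phi'$ is true. If true, each disjunct $\forall\overline v\,\phi'(v_{\pi(1)},\ldots)$ is false (they are all relabellings of the same false sentence), so Lemma~\ref{lem:little-temp-1} hands you a $\vee$-break outright. If false, then since $\bigvee_\pi\phi'$ holds at all all-distinct tuples, Lemma~\ref{lem:little-temp-2} shows one of $=,\neq$ is positive-equality-free definable in $\mathcal{B}$; $\neq$ gives a $\vee$-break immediately (as in the equality case), while if $=$ is definable one augments the symmetrized disjunction with all the $=$-atoms and feeds the now-universally-true disjunction back into Lemma~\ref{lem:little-temp-1}. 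None of this is a transliteration of the equality-case induction: the symmetrization trick and the $=\,$/$\,\neq$ dichotomy are the new ingredients, and you would need to supply them.

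Two smaller points: $x\mapsto -x$ is not an automorphism of $(\mathbb{Q};<)$ (it reverses $<$); the correct statement is that it is an isomorphism onto $(\mathbb{Q};>)$, and first-order reducts of the two coincide, which is enough to symmetrize the roles of $\stackrel{<}{\forall}$ and $\stackrel{>}{\forall}$. Also, the stronger ``both strategies work equally well'' claim in the introduction is not needed for the theorem; the paper derives it a posteriori from the fact (its final Proposition) that whenever the all-lower strategy is optimal, $\mathcal{B}$ is already an equality language.
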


\subsection{The Random Graph}

Throughout this section, let $\mathcal{B}$ be a first-order educt of the Random Graph $(V;E)$.
Let us define the formula ($E$-hat)
$${\stackrel{E}{\forall} x_1} \exists \overline{y}_1 \ldots {\stackrel{E}{\forall} x_k} \exists \overline{y}_k \, \phi'(x_1,\overline{y}_1,\ldots,x_{k},\overline{y}_{k},z_1,\ldots,z_q)$$
by insisting that universal variables are always evaluated to an element distinct from all outer quantified variables and free variables such that there is an $E$-edge from all the elements that have taken part in the evaluation to this new element. Let us define the like sentence but with ${\stackrel{N}{\forall}}$ ($N$-hat) dually, \mbox{i.e.}, with $N$-edges. Strictly, let us assume that the quantification is over all such possibilities. Though, for the Random Graph, there is only one such distinct type up to automorphism, and furthermore this type always exists. Let us dub the corresponding strategy for Universal as the \emph{all-$E$} and \emph{all-$N$} strategies, respectively. Let us similarly define quantifiers of the form ${\stackrel{E}{\exists}}$ and ${\stackrel{N}{\exists}}$ and the corresponding \emph{exists-$E$} and \emph{exists-$N$} strategies.

Let us now assume that $E$ is always present in our reduct $\mathcal{B}$, \mbox{i.e.}, $\mathcal{B}$ is a first-order expansion of $(V;\red{E})$. It will turn out that we no longer need to consider the quantifiers ${\stackrel{N}{\forall}}$ and ${\stackrel{N}{\exists}}$.

\begin{lemma}
Suppose that the ${\stackrel{E}{\exists}}$ strategy is not optimal for Existential on $\mathcal{B}$. Then there is some positive equality-free \red{formula} $\psi$ over $\mathcal{B}$ so that $\psi$ breaks $\wedge$ on $\mathcal{B}$.
\label{lem:random-graph-exists}
\end{lemma}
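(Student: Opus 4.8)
The plan is to dualise the development for equality languages (Lemmas~\ref{lem:neq-1} and~\ref{lem:equ-appendix}), with ``non-adjacency over $(V;E)$'' playing the role that ``disequality over $(\mathbb{Q};=)$'' plays there. Unfolding the hypothesis, ``the ${\stackrel{E}{\exists}}$ strategy is not optimal for Existential'' means that there is a positive equality-free sentence $\Phi$ with $\mathcal{B}\models\Phi$ for which Existential has no winning strategy that plays every existential variable to a fresh vertex $E$-adjacent to all earlier elements. As in the equality case I would peel quantifiers from the outside in, applying the statement to subformulas --- this is the nested version, parallel to Lemma~\ref{lem:equ-appendix}, which I would defer to the appendix --- so as to reduce to the single-existential statement: there is a positive equality-free formula $\exists y\,\phi'(y,z_1,\dots,z_q)$, with $\phi'$ possibly containing quantifiers, that is logically distinct on $\mathcal{B}$ from ${\stackrel{E}{\exists}} y\,\phi'(y,z_1,\dots,z_q)$; equivalently, at some tuple $\overline{a}$ the set $\{b:\mathcal{B}\models\phi'(b,\overline{a})\}$ is nonempty but disjoint from the set of vertices $E$-adjacent to every coordinate of $\overline{a}$.

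I would prove this single-existential statement by induction on $q$, mirroring the proof of Lemma~\ref{lem:neq-1}. A feature that makes the base case clean is that a first-order expansion of $(V;E)$ has the same automorphism group as $(V;E)$ itself (the added relations are $\mathrm{Aut}(V;E)$-invariant, being first-order definable), so every positive equality-free definable relation is invariant under $\mathrm{Aut}(V;E)$. Hence for $q=1$ the slice $\{b:\mathcal{B}\models\phi'(b,a)\}$ is a union of orbits of $\mathrm{Aut}(V;E)_a$, namely of $\{a\}$, of the $E$-neighbourhood of $a$, and of the $N$-neighbourhood of $a$; being nonempty and disjoint from the $E$-neighbourhood, it is one of the three nonempty combinations of $\{a\}$ and the $N$-neighbourhood. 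In each case $\phi'(y,z)$ is logically equivalent on $\mathcal{B}$ to a binary relation $R$ that is positive equality-free definable on $\mathcal{B}$, is nonempty, and satisfies $R\cap E=\emptyset$. In the inductive step I would copy the padding trick of Lemma~\ref{lem:neq-1}: a failure witnessed at a tuple with a repeated coordinate is reduced to smaller $q$ by replacing that coordinate with a fresh universally quantified variable; a failure witnessed only at tuples with pairwise-distinct coordinates is analysed via the orbit structure, as there.

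Given such data the $\wedge$-breaker is short. In the base case, $\psi:=\exists z\,\exists y\,\bigl(E(y,z)\wedge R(y,z)\bigr)$ breaks $\wedge$ on $\mathcal{B}$: no pair lies in both $E$ and the disjoint relation $R$, so $\mathcal{B}\not\models\psi$, whereas $\mathcal{B}\models\exists z\exists y\,E(y,z)$ since $(V;E)$ has edges and $\mathcal{B}\models\exists z\exists y\,R(y,z)$ since $R$ is nonempty. When the inductive analysis only delivers a $(q{+}1)$-ary positive equality-free $R'(y,z_1,\dots,z_q)$ whose $\overline{z}$-slices are all nonempty and all disjoint from the common $E$-neighbourhood of $z_1,\dots,z_q$, the same reasoning shows that $\exists z_1\cdots\exists z_q\exists y\bigl(\bigwedge_{i}E(z_i,y)\wedge R'(y,z_1,\dots,z_q)\bigr)$ breaks $\wedge$, the degenerate configurations being absorbed by an extra disjunct built from $R'$ at degenerate points, exactly as ``$z_1=z_2$ is definable'' was used at the end of the proof of Lemma~\ref{lem:neq-1}. (Together with Lemma~\ref{lem:marcin-after-bodirsky-et-al} this also gives \NP-hardness of $\MC(\mathcal{B})$.)

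The step I expect to be the main obstacle is the inductive extraction of $R$ (or $R'$): the analogue of the delicate inductive step of Lemma~\ref{lem:neq-1} must, over $(V;E)$, additionally contend with the fact that a fresh vertex over $q$ pairwise-distinct vertices realises one of $2^{q}$ adjacency types rather than a single ``fresh'' type, and must confirm that it is exactly the all-$E$ type that has to be deviated to. This is where the standing assumption that $E$ is in the signature is essential: it makes ``$E$-adjacent to all earlier elements'' a conjunction of positive atoms, which is precisely what is needed for the gadgets above. I would carry out this combinatorial case analysis on adjacency patterns --- the part that replaces the unique ``fresh'' type of the equality setting --- in the appendix, as was done for Lemma~\ref{lem:equ-appendix}.
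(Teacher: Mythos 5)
Your proposal is a genuinely different (and far more complicated) route than the paper's, and it has a gap that the paper's proof simply sidesteps. The paper's proof is a one-paragraph direct construction: since $E$ is in the signature, the condition ``each $y_i$ is $E$-adjacent to every variable that precedes it in the prefix'' is itself a positive equality-free formula, namely the conjunction $\bigwedge_{i}\bigwedge_{v\prec y_i}E(v,y_i)$. Starting from a true sentence $\phi=\forall x_1\exists y_1\ldots\forall x_k\exists y_k\,\phi'$ on which the $\stackrel{E}{\exists}$ strategy fails, one conjuncts this $E$-constraint into $\phi'$ under the same prefix. The resulting sentence is false (a witness for it would \emph{be} a winning $\stackrel{E}{\exists}$ strategy, since $E$-adjacency to all predecessors forces freshness by irreflexivity), while $\phi$ and the $E$-constraint sentence are each true under that prefix --- the latter because Existential can always take a fresh vertex $E$-adjacent to everything played, by the extension property of the Random Graph. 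This immediately breaks $\wedge$, with no induction, no quantifier peeling, and no orbit analysis.

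Your attempt instead dualizes Lemmas~\ref{lem:neq-1}/\ref{lem:equ-appendix}, peeling quantifiers to a single-existential failure and then running an induction on the number of free variables with an orbit-counting base case. Two remarks. First, that template does not transfer cleanly: in the equality setting the reason one cannot just ``conjunct in the freshness constraint'' is that $\neq$ is not positively definable, which forces the delicate induction; over a first-order \emph{expansion} of $(V;E)$ the all-$E$ constraint \emph{is} positively definable, which is exactly what dissolves the problem. Second, you yourself flag the real obstruction --- in the inductive step, a fresh vertex over $q$ distinct parameters realizes $2^q$ adjacency types, not one, and you would need to show it is precisely the all-$E$ type that gets deviated to. That is not a routine appendix detail; it is where a straight transcription of Lemma~\ref{lem:neq-1} breaks down, and you do not give an argument for it. Your base case ($q=1$ orbit analysis under $\mathrm{Aut}(V;E)_a$, then $\exists z\exists y\,(E(y,z)\wedge R(y,z))$ breaking $\wedge$) is fine, but as written the proposal does not constitute a proof of the lemma, whereas the paper's direct construction does --- and also scales to arbitrary prefix length $k$ in one stroke.
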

\begin{proof}
    Let $\phi:=\forall x_1 \exists y_1 \ldots \forall x_k \exists y_k \phi'(x_1,y_1,\ldots,x_k,y_k)$ be true on $\mathcal{B}$ such that $\forall x_1 {\stackrel{E}{\exists}} y_1 \ldots$ $\forall x_k {\stackrel{E}{\exists}} y_k \phi'(x_1,y_1,\ldots,x_k,y_k)$ is false.
Consider
\[ 
\forall x_1 \exists y_1 \ldots \forall x_k \exists y_k \ \phi'(x_1,y_1,\ldots,x_k,y_k) \wedge
 \bigwedge_{i \in [k]} \bigwedge_{\begin{array}{c}\mbox{$v$ comes before} \\[-4pt]\mbox{$y_i$ in prefix}\end{array}} E(v,y_i).
\]
By assumption this is false, but $\forall x_1 \exists y_1 \ldots \forall x_k \exists y_k \phi'(x_1,y_1,\ldots,x_k,y_k)$ is true and 
\[ \forall x_1 \exists y_1 \ldots \forall x_k \exists y_k \bigwedge_{i \in [k]} \bigwedge_{\begin{array}{c}\mbox{$v$ comes before} \\[-4pt]\mbox{$y_i$ in prefix}\end{array}} E(v,y_i)\]
is true. Therefore, we have broken $\wedge$ on $\mathcal{B}$.
\end{proof}
The following lemma is not completely dual to Lemma~\ref{lem:random-graph-exists} as we still consider a first-order expansion of $(V;E)$.

\begin{lemma}
Suppose that the ${\stackrel{N}{\forall}}$ strategy is not optimal for Universal on $\mathcal{B}$. Then there is some positive equality-free $\psi$ over $\mathcal{B}$ so that $\psi$ breaks $\vee$ on $\mathcal{B}$.
\label{lem:random-graph-forall}
\end{lemma}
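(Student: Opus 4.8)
The plan is to dualise the proof of Lemma~\ref{lem:random-graph-exists}: there a forcing \emph{conjunct} $\bigwedge E(v,y_i)$ was appended to coerce Existential onto the ${\stackrel{E}{\exists}}$ strategy, and here I want to append a forcing \emph{disjunct} that coerces Universal onto the ${\stackrel{N}{\forall}}$ strategy. The obstruction to a literal dual is that only $E$ --- not $N$, and not disequality --- is guaranteed to be a relation of $\mathcal{B}$, so a disjunct of the shape $\bigvee E(v,x_i)$ only detects that Universal has drawn an $E$-edge to an earlier element, not that he has \emph{repeated} one. I would repair this with auxiliary existential variables.

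First I would unwind the hypothesis: putting instances in the usual prenex form, non-optimality of ${\stackrel{N}{\forall}}$ yields a sentence $\phi = \forall x_1 \exists y_1 \cdots \forall x_k \exists y_k\,\phi'$ with $\mathcal{B}\not\models\phi$ but $\mathcal{B}\models {\stackrel{N}{\forall}}x_1\exists y_1 \cdots {\stackrel{N}{\forall}}x_k \exists y_k\,\phi'$, and I would fix a winning strategy $\sigma$ for Existential in the latter game. I would then introduce fresh existential ``shadow'' variables $y_1',\dots,y_k'$, placing $y_i'$ immediately after $y_i$, append one fresh dummy universal variable $x_{k+1}$, and set
\[ \psi \;:=\; \forall x_1 \exists y_1 \exists y_1' \;\cdots\; \forall x_k \exists y_k \exists y_k' \;\forall x_{k+1}\ \big(\,\phi' \;\vee\; \delta\,\big), \qquad \delta \;:=\; \bigvee_{i\in[k+1]}\ \bigvee_{v\,\prec\, x_i} E(v,x_i), \]
where $v\prec x_i$ ranges over the variables preceding $x_i$ and where $y_1',\dots,y_k',x_{k+1}$ do not occur in $\phi'$. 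Writing $Q$ for the quantifier prefix of $\psi$, I want: $\mathcal{B}\not\models Q\,\phi'$, $\mathcal{B}\not\models Q\,\delta$, yet $\mathcal{B}\models\psi$ --- which is exactly ``$\psi$ breaks $\vee$ on $\mathcal{B}$''. The first holds since $Q\,\phi'$ is equivalent to $\phi$ (the extra quantifiers bind variables not occurring in $\phi'$); the second holds because Universal may always answer with an element $N$-adjacent (hence distinct and non-$E$-adjacent) to all elements played so far, which exists by the extension property of the Random Graph and which falsifies every disjunct of $\delta$.

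The crux is $\mathcal{B}\models\psi$. Existential's strategy: play each $y_i$ by $\sigma$, reading the history with the shadow and dummy variables deleted, and play each $y_i'$ as an element $E$-adjacent to all elements played so far (available by the extension property). If Universal ever picks an $x_i$ that is not $N$-adjacent to every earlier variable, take the first such $i$: either $x_i$ is already $E$-adjacent to an earlier variable, or its value equals that of some earlier variable $W$, and then --- since $y_{i-1}'$ was played $E$-adjacent to everything before it --- $x_i$ is $E$-adjacent to $y_{i-1}'$ if $W\neq y_{i-1}'$, and $E$-adjacent to $x_{i-1}$ (which precedes $y_{i-1}'$) if $W=y_{i-1}'$. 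In every case a disjunct of $\delta$ holds and $\psi$ is satisfied regardless of the remaining moves. Otherwise Universal plays $N$-adjacently throughout, so his moves $x_1,\dots,x_k$ form a legal play against $\sigma$ in the ${\stackrel{N}{\forall}}$-game --- being additionally $N$-adjacent to the shadow variables is only a stronger requirement --- and hence $\sigma$ forces $\phi'$ true. So $\phi'\vee\delta$ is always satisfied, and $\psi$ breaks $\vee$.

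I expect the main obstacle to be this last step, and within it the case where Universal evades the $N$-new discipline by repeating an earlier element instead of drawing an $E$-edge. This is precisely where a naive dual of Lemma~\ref{lem:random-graph-exists} fails: there the conjunct $E(v,y_i)$ ruled out repetitions for free, whereas positive equality-free logic over $\mathcal{B}$ need not have any atom asserting ``$x_i$ coincides with an earlier variable''. The shadow variables are introduced solely to turn every such repetition into a detectable $E$-edge, since each shadow is $E$-adjacent to all elements before it; making sure this bookkeeping is airtight (in particular the corner case $W=y_{i-1}'$, and that $\sigma$ remains applicable in the presence of the shadows) is the only delicate point.
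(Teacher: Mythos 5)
Your proof is correct and follows essentially the same line as the paper's: the paper likewise introduces auxiliary existential ``shadow'' variables, played as an $E$-clique that is $E$-adjacent to everything preceding, so that any repetition by Universal is converted into a detectable $E$-edge, and it likewise appends a disjunction of $E$-atoms to catch direct $E$-edge deviations. The only differences are cosmetic --- the paper places its shadow variables $w_i$ immediately \emph{before} each universal $x_i$ rather than after each $y_i$, and your trailing dummy universal $x_{k+1}$ is harmless but not needed.
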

\begin{proof}
    Let $\phi:=\forall x_1 \exists y_1 \ldots \forall x_k \exists y_k \phi'(x_1,y_1,\ldots,x_k,y_k)$ be false on $\mathcal{B}$ such that ${\stackrel{N}{\forall}} x_1 \exists y_1 \ldots$ ${\stackrel{N}{\forall}} x_k \exists y_k \phi'(x_1,y_1,\ldots,x_k,y_k)$ is true.
Consider
\[
\exists w_1 \forall x_1 \exists y_1 \ldots \exists w_k \forall x_k \exists y_k \phi'(x_1,y_1,\ldots,x_k,y_k) \vee
\bigvee_{i \in [k-1]} \bigvee_{\begin{array}{c}\mbox{$v$ comes before} \\[-4pt]\mbox{$y_i$ in prefix}\end{array}} E(v,y_i),
\]
where we have introduced new existential variables $w_i$ immediately preceding each universal variable $x_i$. 
By assumption this is true, let us explain why. If Universal ever deviates from the all-$N$ strategy, it is because he plays an element that has already been played, or a new element that has an $E$-edge to some previous element. It is easy to see we cover the latter case in the big disjunction, but we also cover the former because Existential chooses the $w_i$ to be in an $E$-clique with one another and with $E$-edges to everything that has gone before.

However, $\exists w_1 \forall x_1 \exists y_1 \ldots \exists w_k \forall x_k \exists y_k \phi'(x_1,y_1,\ldots,x_k,y_k)$ is false and 
\[ \exists w_1 \forall x_1 \exists y_1 \ldots \exists w_k \forall x_k \exists y_k \bigvee_{i \in [k-1]} \bigvee_{\begin{array}{c}\mbox{$v$ comes before} \\[-4pt]\mbox{$y_i$ in prefix}\end{array}} E(v,y_i)\]
is false (Universal plays $x_k$ so that it has an $N$-edge to everything that has gone before). Therefore, we have broken $\vee$ on $\mathcal{B}$.
\end{proof}
\noindent Note that the final variable $y_k$ played no role in the previous proof. We left it there only because it was there in Lemma~\ref{lem:random-graph-exists}.
The following results now follow just as in the temporal and equality cases.
\begin{lemma}
    Let $\mathcal{B}$ be a first-order expansion of $(V;E)$. If the all-$E$ strategy is optimal for Universal, then $\MC(\mathcal{B})$ is in \NP. If the exists-$E$ strategy is optimal for Universal, then $\MC(\mathcal{B})$ is in \coNP.
\end{lemma}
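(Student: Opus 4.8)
The plan is to mimic the argument used for equality and temporal languages, so I would first establish that the "all-$E$ strategy optimal" condition is equivalent to "$\vee$ does not break on $\mathcal{B}$" and the "exists-$E$ strategy optimal" condition is equivalent to "$\wedge$ does not break on $\mathcal{B}$". One direction is exactly Lemma~\ref{lem:random-graph-forall} and its dual (Lemma~\ref{lem:random-graph-exists}): if the all-$E$ strategy is \emph{not} optimal for Universal then $\vee$ breaks, and if exists-$E$ is not optimal for Existential then $\wedge$ breaks. The converse directions are the easy ones: if $\vee$ breaks then clearly no single strategy for Universal can be optimal (the witnessing $\psi$ requires Universal to make genuinely different choices against the two disjuncts), and dually for $\wedge$. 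So the content is really the two lemmas already proved plus these soft observations.

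Next I would prove the NP-membership claim assuming the all-$E$ strategy is optimal for Universal. The key point is that by homogeneity of the Random Graph, once some finite set of vertices has been played, there is a \emph{unique type up to automorphism} of a new vertex that is $E$-adjacent to all of them, and this type always exists (as noted in the text preceding the lemma). Hence when it is Universal's turn we may assume without loss of generality that he plays this canonical new vertex — no guessing is needed for universal moves. Existential, on the other hand, must guess: at each of her moves she either reuses one of the finitely many previously-played vertices, or plays a new vertex, and for a new vertex she must guess its isomorphism type over the finite set of played vertices (an $E$/$N$ pattern to each, of which there are finitely many at each step). An accepting computation guesses all of Existential's choices, maintains the finite induced substructure on the played vertices (polynomial size), and verifies $\theta$ on it at the end; correctness is immediate because realising any consistent pattern in the Random Graph is always possible by homogeneity and universality. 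This is an \NP\ procedure. The \coNP\ statement for the exists-$E$ strategy follows by the principle of duality (Lemma~\ref{lem:duality}), noting that $\overline{\mathcal{B}}$ is a first-order expansion of $(V;N)\cong(V;E)$ and the exists-$E$ strategy for Existential on $\mathcal{B}$ dualises to the all-$E$ strategy for Universal on $\overline{\mathcal{B}}$; alternatively one argues directly with the symmetric roles.

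I would then assemble the tetrachotomy exactly as in the equality case. If neither $\wedge$ nor $\vee$ breaks, Lemma~\ref{lem:easy} gives Logspace. If $\wedge$ breaks but $\vee$ does not, then the all-$E$ strategy is optimal for Universal (by the equivalence above), so $\MC(\mathcal{B})$ is in \NP\ by the lemma just proved, and it is \NP-hard by Lemma~\ref{lem:marcin-after-bodirsky-et-al}, giving \NP-completeness; the case where $\vee$ breaks but $\wedge$ does not is dual (\coNP-complete). If both break, Lemma~\ref{lem:marcin-after-bodirsky-et-al} gives \Pspace-hardness and Proposition~\ref{prop:in-Pspace} gives \Pspace-membership. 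One subtlety to handle: the lemmas of this subsection are stated for first-order \emph{expansions} of $(V;E)$, whereas the theorem should cover arbitrary first-order \emph{reducts}, so I would need the remark (presumably appearing just after this lemma in the full paper) that a reduct of the Random Graph that does not have $E$ positive-primitive... more precisely, that every first-order reduct of $(V;E)$ is, up to the isomorphism $(V;E)\cong(V;N)$, handled by the case analysis — either it is effectively an expansion of $(V;E)$, or of $(V;N)$, or $E$ and $N$ are both undefinable and a separate (easier) analysis applies.

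The main obstacle I anticipate is not any single hard step but rather the bookkeeping in the NP-membership argument: being careful that "guess Existential's isomorphism type over the played set" genuinely keeps the certificate and the maintained substructure polynomial, and that the verification at the end is sound and complete — this rests entirely on homogeneity and universality of $(V;E)$, but it must be stated cleanly. The equivalence between the strategy-optimality conditions and the break-$\wedge$/break-$\vee$ conditions, and the reduction of general reducts to expansions of $(V;E)$ or $(V;N)$, are the places where a careless statement could leave a gap.
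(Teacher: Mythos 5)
Your second paragraph is the proof, and it is exactly what the paper intends when it says these results ``follow just as in the temporal and equality cases'': replace Universal's moves by the deterministic canonical new vertex; guess each of Existential's moves as a reuse or as an $E$/$N$ pattern over the finitely many played vertices; maintain the polynomial-size induced substructure; verify the matrix at the end. Soundness and completeness rest on homogeneity and universality of $(V;E)$, as you say. Your ``alternative'' route for the \coNP\ half --- arguing directly with the roles of Existential and Universal swapped --- is the cleaner one, and I would take it: the detour through $\overline{\mathcal{B}}$ is not quite watertight as stated, because the relation $\overline{E}=V^2\setminus E$ in $\overline{\mathcal{B}}$ contains the diagonal and is therefore neither $E$ nor $N$, so $\overline{\mathcal{B}}$ is not literally a first-order expansion of $(V;E)$ and one cannot re-invoke the \NP\ half on it without first adjusting the signature.

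Two further remarks. Your opening paragraph asserts a converse that you do not need and that is not clearly true: ``if $\vee$ breaks then clearly no single strategy for Universal can be optimal.'' A formula $\psi$ that breaks $\vee$ is \emph{true} on $\mathcal{B}$, so every Universal strategy fails against $\psi$ equally; that by itself does not witness suboptimality of the canonical strategy, and a real argument for the converse would require more care. Fortunately only the forward direction (Lemmas~\ref{lem:random-graph-exists} and~\ref{lem:random-graph-forall}, taken contrapositively) is needed for the downstream corollary and theorem, so nothing depends on this. Finally, the statement as printed contains two slips that you inherited: in view of Lemma~\ref{lem:random-graph-forall} the Universal canonical strategy should be ``all-$N$'' rather than ``all-$E$,'' and the second clause should read ``optimal for Existential.'' Your argument is indifferent to which fixed canonical type Universal plays, so your proof survives these corrections unchanged.
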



\begin{corollary}
Let $\mathcal{B}$ be a first-order expansion of $(V;E)$. Either $\vee$ does not break on $\mathcal{B}$ and $\MC(\mathcal{B})$ is in \NP, or $\MC(\mathcal{B})$ is \coNP-hard. Either $\wedge$ does not break on $\mathcal{B}$ and $\MC(\mathcal{B})$ is in \coNP, or $\MC(\mathcal{B})$ is \NP-hard.
\label{cor:random-graph-NP}
\end{corollary}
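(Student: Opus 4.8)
The plan is to read off both dichotomies by assembling the results already proved in this section: the hardness criterion of Lemma~\ref{lem:marcin-after-bodirsky-et-al}, the two canonical-strategy lemmas (Lemmas~\ref{lem:random-graph-exists} and~\ref{lem:random-graph-forall}), and the preceding membership lemma. I would treat the two stated dichotomies separately, since the natural deduction of each uses a different one of these ingredients.

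For the first statement, I case split on whether some $\psi$ breaks $\vee$ on $\mathcal{B}$. If one does, then $\MC(\mathcal{B})$ is \coNP-hard immediately by Lemma~\ref{lem:marcin-after-bodirsky-et-al}, and we are done. If none does, then the contrapositive of Lemma~\ref{lem:random-graph-forall} tells us the ${\stackrel{N}{\forall}}$ strategy is optimal for Universal on $\mathcal{B}$, whence the membership lemma gives $\MC(\mathcal{B}) \in \NP$. The only point I expect to require care is that the membership lemma is phrased for the ${\stackrel{E}{\forall}}$ strategy while Lemma~\ref{lem:random-graph-forall} yields the ${\stackrel{N}{\forall}}$ one; this is harmless because $(V;E) \cong (V;N)$, so the guessing argument that bounds Existential's replies once Universal's move is forced to be ``the fresh vertex $E$-adjacent to everything played so far'' applies verbatim with ``$N$-adjacent'' in place of ``$E$-adjacent''. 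This self-complementarity is also the reason the section can afford to ``no longer consider'' the $N$-hat quantifiers in its statements even though they surface in its proofs.

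For the second statement, I case split on whether some $\psi$ breaks $\wedge$ on $\mathcal{B}$. If one does, then $\MC(\mathcal{B})$ is \NP-hard by Lemma~\ref{lem:marcin-after-bodirsky-et-al}. If none does, then the contrapositive of Lemma~\ref{lem:random-graph-exists} gives that the ${\stackrel{E}{\exists}}$ strategy is optimal for Existential on $\mathcal{B}$, and the membership lemma then yields $\MC(\mathcal{B}) \in \coNP$; here no $E$/$N$ bookkeeping is needed, since Lemma~\ref{lem:random-graph-exists} and the membership lemma both already speak of the exists-$E$ strategy. One could instead try to deduce this half from the first via the principle of duality (Lemma~\ref{lem:duality}), but $\overline{\mathcal{B}}$ need not literally be a first-order expansion of $(V;E)$ -- its signature carries $\overline{E}$ rather than $E$ -- so the direct route above is the clean one. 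Since everything is reduced to lemmas already in hand, I do not anticipate any genuine obstacle beyond the $E \leftrightarrow N$ reconciliation noted above.
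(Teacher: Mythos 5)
Your proof is correct and follows exactly the route the paper leaves implicit with ``these results now follow just as in the temporal and equality cases'': each half of the corollary is a case split in which the hardness branch invokes Lemma~\ref{lem:marcin-after-bodirsky-et-al} and the membership branch combines the contrapositive of the appropriate canonical-strategy lemma (Lemma~\ref{lem:random-graph-forall} or Lemma~\ref{lem:random-graph-exists}) with the preceding membership lemma. You are also right to flag the all-$E$ versus all-$N$ mismatch between how the membership lemma is phrased and what Lemma~\ref{lem:random-graph-forall} actually delivers, and your reconciliation via the self-complementarity of $(V;E)$ --- or, equivalently, simply noting that the certificate argument for $\NP$ membership works for any fixed deterministic Universal strategy --- is exactly what is needed.
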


\begin{theorem}
Let $\mathcal{B}$ be a first-order expansion of $(V;E)$. Either $\MC(\mathcal{B})$ is in Logspace, is \NP-complete, is \coNP-complete or is \Pspace-complete.
\end{theorem}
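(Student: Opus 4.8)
The plan is to run exactly the same four-way case analysis that proved the tetrachotomy for equality languages, now feeding in the Random-Graph ingredients assembled earlier in the subsection. We split on whether there is a positive equality-free sentence $\phi$ that breaks $\wedge$ on $\mathcal{B}$ and whether there is one, $\psi$, that breaks $\vee$ on $\mathcal{B}$; these two yes/no choices give the four cases, which I claim correspond respectively to Logspace, \coNP-completeness, \NP-completeness and \Pspace-completeness.

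If neither $\wedge$ nor $\vee$ breaks on $\mathcal{B}$, then $\MC(\mathcal{B})$ is in Logspace immediately by Lemma~\ref{lem:easy}, which needs only a finite signature. If $\wedge$ breaks on $\mathcal{B}$ but $\vee$ does not, then $\MC(\mathcal{B})$ is \NP-hard by the first item of Lemma~\ref{lem:marcin-after-bodirsky-et-al}, and, since $\vee$ does not break, it is in \NP\ by the first assertion of Corollary~\ref{cor:random-graph-NP} (which rests on Lemma~\ref{lem:random-graph-forall} together with the strategy-to-membership lemma preceding that corollary); hence it is \NP-complete. Symmetrically, if $\vee$ breaks on $\mathcal{B}$ but $\wedge$ does not, then $\MC(\mathcal{B})$ is \coNP-hard by the second item of Lemma~\ref{lem:marcin-after-bodirsky-et-al} and in \coNP\ by the second assertion of Corollary~\ref{cor:random-graph-NP}; hence it is \coNP-complete. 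Finally, if both $\wedge$ and $\vee$ break on $\mathcal{B}$, then $\MC(\mathcal{B})$ is \Pspace-hard by the third item of Lemma~\ref{lem:marcin-after-bodirsky-et-al}, and it is in \Pspace\ by Proposition~\ref{prop:in-Pspace}, since $(V;E)$ is finitely bounded homogeneous and $\mathcal{B}$, being a first-order expansion of $(V;E)$, is in particular a first-order reduct of $(V;E)$. The four cases are mutually exclusive and jointly exhaustive, which proves the theorem.

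Because every substantive step has already been isolated into Lemmas~\ref{lem:random-graph-exists} and~\ref{lem:random-graph-forall}, the strategy-to-membership lemma, and Corollary~\ref{cor:random-graph-NP}, I expect the assembly itself to be routine; the one place that genuinely relies on the work done beforehand is the \coNP-direction, where membership is certified by the exists-$E$ strategy for Existential rather than by an all-$N$ strategy for Universal. This asymmetry is forced precisely because $\mathcal{B}$ is only assumed to contain $E$ and need not be self-dual in any convenient way, which is why Lemma~\ref{lem:random-graph-forall} and Corollary~\ref{cor:random-graph-NP} are phrased in their non-dual form; checking that the hypothesis of Lemma~\ref{lem:random-graph-forall} really does capture every way in which Universal can deviate from the all-$N$ strategy — playing a previously used element, or a fresh element with a backward $E$-edge — is the crux of the argument. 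If one instead wanted to invoke the principle of duality (Lemma~\ref{lem:duality}) to get the \coNP\ case from the \NP\ case, one would have to be careful that the dual of a first-order expansion of $(V;E)$ is again covered by the hypotheses; appealing directly to Corollary~\ref{cor:random-graph-NP}, which already supplies both directions, avoids this subtlety.
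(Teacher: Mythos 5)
Your proof is correct and follows essentially the same route that the paper takes (the paper leaves the theorem's proof implicit, saying it ``follows just as in the temporal and equality cases,'' and the four-way case split you carry out is exactly that). Your additional remark --- that one should appeal directly to Corollary~\ref{cor:random-graph-NP} rather than deriving \coNP-membership via duality, because the dual of a first-order expansion of $(V;E)$ need not again be a first-order expansion of $(V;E)$ --- is a sound observation that explains why the Random Graph subsection supplies both membership directions explicitly, in contrast with the equality/temporal theorems which invoke Lemma~\ref{lem:duality}.
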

The all-$N$ strategy is optimal for Universal on $(V,E)$ and the exists-$E$ strategy is optimal for Existential. Therefore, $\MC(V,E)$ is in Logspace and we can see that there are new tractable cases, for the Random Graph, compared to equality languages (where there were no such new tractable cases for temporal languages).

\section{Final remarks}
It is mildly lamentable that we did not complete the complexity classification for all \red{first-order} reducts of the Random Graph.
%
It seems we need some new methods. For example, consider the \red{first-order reduct $\mathcal B$ of the Random Graph which 
contains a single relation of arity four
which contains a tuple $(a_1,a_2,a_3,a_4)$ if either 
 $|\{a_1,a_2,a_3,a_4\}|<4$,
 or $|\{a_1,a_2,a_3,a_4\}|=4$
 and $\{a_1,a_2,a_3,a_4\}$ induces a triangle in $E$-edges extended by a new vertex to which the three existing vertices are joined by $N$-edges.} 
 At present, we do not know how to handle this case, so as to prove that $\vee$ is broken on $\mathcal{B}$.


Let us comment on another case which we can solve. \red{Let $S \subseteq V^{\ell}$, for $\ell \geq 3$, be the relation that consists of precisely those tuples $(a_1,\ldots,a_\ell)$ where $|\{a_1,\ldots,a_\ell\}|<\ell$ 
or where $\{a_1,\ldots,a_\ell\}$ induces a clique of size $l$ in $E$ or in $N$. The relation $E$ does not have a first-order definition in $(V;S)$, because any isomorphism between $(V;E)$ and $(V;N)$ (clearly, there are such isomorphisms)
is an automorphism of $(V;S)$, but does not preserve $E$. 
For the structure $(V;S)$} we can appeal to Ramsey's theorem for the breaking of $\vee$,  but we need $R(\ell,\ell)$ variables where $R$ is the Ramsey function. \red{Let $\phi_\ell$ be the disjunction of $S(x_1,\ldots,x_\ell)$ over all size $\ell$ subsets $\{x_1,\ldots,x_\ell\}$ of $\{v_1,\ldots,v_{R(\ell,\ell)}\}$.
The universal quantification of $\phi_\ell$ is true by Ramsey's theorem, yet} the universal quantification of each individual \red{disjunct} is false. It follows that some single split of the big disjunction gives a single disjunction that breaks $\vee$ (this argument will appear in Lemma~\ref{lem:little-temp-1}). 

Finally, let us comment on the algebraic method. We never defined the special $\forall\exists$-, A- and E-surjective hyperoperations (shops) that played so central a role in \cite{MadelaineM18}. However, let us do so, at least for the first. A function $f \colon B\rightarrow \mathcal{P}(B)$, where $\mathcal{P}(B)$ is the power set of $\mathcal{B}$, is a \emph{shop} iff $\forall x \exists y \, y \in f(x)$ and $\forall y \exists x \, y \in f(x)$. Let $\mathcal{B}$ be a graph. Then we say that $f$ is a surjective hyper-endomorphism (she) of $\mathcal{B}$ iff, for all $x,y$: $E(x,y)$ implies $\forall x',y'$ $x' \in f(x), y' \in f(y)$ implies $E(x',y')$. Now, $f$ is a $\forall \exists$-shop iff $\exists x \forall y \, y \in f(x)$ and $\exists y \forall x \, y \in f(x)$. In the classification of Theorem 41 in \cite{MadelaineM18}, for finite $\mathcal{B}$, $\MC(\mathcal{B})$ is in Logspace iff $\mathcal{B}$ has a $\forall \exists$-she. Let us note that the Random Graph $(V;E)$ does not have a $\forall \exists$-she: suppose $f$ were a $\forall \exists$-she. Pick $y'$ so that for all $x$ we have $y' \in f(x)$. This is a contradiction as some edge exists in the Random Graph but $E(y',y')$ does not hold. Thus, we achieved more by leaving the algebraic method for this problem, because we have been able to cover certain infinite templates. 



\section{Appendix}

\noindent \textbf{Lemma~\ref{lem:equ-appendix}.}
Let $\mathcal{B}$ be an equality language. Suppose that the positive equality-free formula $$\forall x_1 \exists \overline{y}_1 \ldots \forall x_k \exists \overline{y}_k \, \phi'(x_1,\overline{y}_1,\ldots,x_k,\overline{y}_k,z_1,\ldots,z_q)$$ is logically distinct from $${\stackrel{\neq}{\forall} x_1} \exists \overline{y}_1 \ldots {\stackrel{\neq}{\forall} x_k} \exists \overline{y}_k \, \phi'(x_1,\overline{y}_1,\ldots,x_k,\overline{y}_k,z_1,\ldots,z_q).$$ Then there exist $\zeta$ so that $\zeta$ breaks $\vee$ on $\mathcal{B}$.

\begin{proof}
We prove this by induction on $k$. The base case $k=1$ is given by the previous lemma (when one notes that the innermost existential quantifiers may be absorbed into $\phi'$). Suppose that the statement is true for $k=m$. Let us consider the case $k=m+1$. By assumption there exists $a_1,\ldots,a_q$ so that
\begin{equation}
\forall x_1 \exists \overline{y}_1 \ldots \forall x_m \exists \overline{y}_m \forall x_{m+1} \exists \overline{y}_{m+1} \, \phi'(x_1,\overline{y}_1,\ldots,x_{m+1},\overline{y}_{m+1},a_1,\ldots,a_q)
\label{equ:neq-proof-1}
\end{equation}
is false where 
\begin{equation}
\stackrel{\neq}{\forall} x_1 \exists \overline{y}_1 \ldots \stackrel{\neq}{\forall} x_m \exists \overline{y}_m \stackrel{\neq}{\forall} x_{m+1} \exists \overline{y}_{m+1} \, \phi'(x_1,\overline{y}_1,\ldots,x_{m+1},\overline{y}_{m+1},a_1,\ldots,a_q)
\label{equ:neq-proof-2}
\end{equation}
is true. By the inductive hypothesis, \eqref{equ:neq-proof-1} is logically equivalent to 
\begin{equation}
\stackrel{\neq}{\forall} x_1 \exists \overline{y}_1 \ldots \stackrel{\neq}{\forall} x_m \exists \overline{y}_m \forall x_{m+1} \exists \overline{y}_{m+1} \, \phi'(x_1,\overline{y}_1,\ldots,x_{m+1},\overline{y}_{m+1},a_1,\ldots,a_q).
\label{equ:neq-proof-3}
\end{equation}
Thus, \eqref{equ:neq-proof-2} is true and \eqref{equ:neq-proof-3} is false. Now there must exist some assignment  $b_1,\overline{c}_1,\ldots,b_{m},\overline{c}_{m}$ to
$x_1,\overline{y}_1,\ldots,x_{m},\overline{y}_{m}$ so that
\begin{equation}
\stackrel{\neq}{\forall} x_{m+1} \exists \overline{y}_{m+1} \, \phi'(b_1,\overline{c}_1,\ldots,b_{m},\overline{c}_{m},x_{m+1},\overline{y}_{m+1},a_1,\ldots,a_q)
\label{equ:neq-proof-4}
\end{equation}
is true but
\begin{equation}
\forall x_{m+1} \exists \overline{y}_{m+1} \, \phi'(b_1,\overline{c}_1,\ldots,b_{m},\overline{c}_{m},x_{m+1},\overline{y}_{m+1},a_1,\ldots,a_q).
\label{equ:neq-proof-5}
\end{equation}
is false, and we violate the inductive hypothesis.
\end{proof}

\subsection*{Temporal languages}
Let us ponder what kind of algorithm we might use on first-order reducts of $(\mathbb{Q};<)$. We might consider quantifiers of the form ${\stackrel{<}{\forall} x}$, in which we consider only elements strictly below those that have already appeared, or ${\stackrel{>}{\forall} x}$, in which we consider only elements strictly above those that have already appeared. Then there would be the corresponding guarded existential quantification as well.

For a relation $\psi(v_1,\ldots,v_n)$ in precisely $n$ free variables, let $\bigvee_{\pi} \psi$ be a shorthand for  $\bigvee_{\pi \in S_n} \psi(v_{\pi(1)},\ldots,v_{\pi(n)})$, where $S_n$ is the symmetric group on $n$ elements.

\begin{lemma}
Let $\psi(v_1,\ldots,v_n)$ be a first-order relation over $(\mathbb{Q};<)$. Then $$\bigvee_{\pi \in S_n} \psi(v_{\pi(1)},\ldots,v_{\pi(n)})$$ is a first-order relation over $(\mathbb{Q};=)$.
\label{lem:temporal-disjunction}
\end{lemma}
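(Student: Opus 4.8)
\textbf{Proof plan for Lemma~\ref{lem:temporal-disjunction}.}

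The plan is to use the fact that every first-order relation over $(\mathbb{Q};<)$ is a finite union of \emph{weak orderings} (ordered set partitions): on any tuple $(v_1,\ldots,v_n)$, the atomic type over $(\mathbb{Q};<)$ is determined by specifying, for each pair $i,j$, whether $v_i < v_j$, $v_i = v_j$ or $v_i > v_j$ in a way consistent with a linear preorder. Call such a type a \emph{pattern}; there are finitely many patterns on $n$ variables, and $\psi$ is equivalent to a finite disjunction $\bigvee_{t \in T} \delta_t(v_1,\ldots,v_n)$, where each $\delta_t$ is the (quantifier-free, positive in $<$ and $=$) formula describing pattern $t$ and $T$ is the set of patterns that $\psi$ contains. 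This is the standard normal form for temporal relations and I would cite it from \cite{BodirskyK10} or derive it in one line from quantifier elimination for $(\mathbb{Q};<)$.

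The key observation is then: $\bigvee_{\pi \in S_n} \psi(v_{\pi(1)},\ldots,v_{\pi(n)})$ holds at $(a_1,\ldots,a_n)$ iff \emph{some} reordering of the coordinates realizes a pattern in $T$. But whether a reordering of $(a_1,\ldots,a_n)$ can realize a given pattern $t$ depends only on the \emph{equality type} of $(a_1,\ldots,a_n)$ — i.e. on which coordinates coincide — because any two tuples with the same equality type differ by an order-automorphism of $\mathbb{Q}$ composed with nothing that affects which coordinates can be permuted into which relative order; more concretely, given the partition of $[n]$ into equality-classes, one can freely choose the relative order of the distinct values, so a pattern $t$ with block sizes matching the multiset of equality-class sizes is realizable, and no other pattern is. Hence $\bigvee_{\pi}\psi$ is a union of equality-types, i.e. a first-order relation over $(\mathbb{Q};=)$: explicitly it is $\bigvee_{E} \eta_E(v_1,\ldots,v_n)$ where $E$ ranges over those equivalence relations on $[n]$ such that some $t\in T$ has underlying partition $E$ (after forgetting the order on blocks), and $\eta_E$ is the equality-formula asserting exactly the coincidence pattern $E$.

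I would carry this out in three steps: (1) put $\psi$ in the pattern normal form and record the finite set $T$; (2) show that for a tuple $\bar a$ with coincidence pattern $E$, $\bigvee_\pi \psi(\bar a)$ holds iff $E$ is (the unordered shadow of) some $t \in T$, using that $S_n$ acts transitively enough on orderings compatible with a fixed $E$ while $\mathrm{Aut}(\mathbb{Q};<)$ handles the stretching of values; (3) conclude $\bigvee_\pi \psi \equiv \bigvee_{E \in \overline{T}} \eta_E$, manifestly equality-definable. The main obstacle is step (2): one must argue carefully that permuting coordinates cannot create "new" coincidences or destroy existing ones (so the equality type is $S_n$-invariant), and that conversely, for any admissible order-pattern $t$ over a fixed equality type $E$, there genuinely is a $\pi$ achieving it — this is where one uses that the distinct coordinate-values of $\bar a$ lie in $\mathbb{Q}$ in \emph{some} order, and $S_n$ restricted to a transversal of the $E$-classes surjects onto all linear orders of those classes. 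Everything else is bookkeeping over a finite index set.
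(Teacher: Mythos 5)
Your plan takes a genuinely different route from the paper: the paper never computes a syntactic normal form, it simply observes that the symmetrized relation is preserved by every permutation of $\mathbb{Q}$ (reducing, after reordering coordinates and rescaling, to a tuple with distinct entries $1<\cdots<n$) and then uses the standard fact that a first-order relation over $(\mathbb{Q};<)$ invariant under all of $\mathrm{Sym}(\mathbb{Q})$ is first-order over $(\mathbb{Q};=)$. Your pattern-decomposition would be a more explicit alternative, but its step (2) --- the ``key observation'' --- is false as stated. Permuting the \emph{coordinates} of a tuple redistributes which coordinates carry which value; it cannot change the relative order of the values themselves. Consequently, which order-patterns are realizable from $(a_1,\ldots,a_n)$ by coordinate permutations is determined not by the equality type alone but by the sequence of multiplicities of the values listed in increasing order of value. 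Your assertion that ``one can freely choose the relative order of the distinct values'' is exactly what $S_n$ acting on coordinates cannot do once the equality classes have different sizes, and the closing appeal to ``$S_n$ restricted to a transversal surjects onto all linear orders of those classes'' conflates permuting coordinates with permuting values.

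Concretely, take $\psi(v_1,v_2,v_3):= v_1=v_2 \wedge v_1<v_3$ (equality is first-order definable over $(\mathbb{Q};<)$). Then $\bigvee_{\pi\in S_3}\psi(v_{\pi(1)},v_{\pi(2)},v_{\pi(3)})$ holds at $(3,3,5)$ but fails at $(5,5,3)$, although these two tuples have the same equality type; so the symmetrization is not a union of equality types and your proposed formula $\bigvee_{E}\eta_E$ does not define it. This is not a repairable bookkeeping slip within your framework: on tuples with repeated entries the conclusion itself is delicate, and it is precisely the regime the paper's proof sets aside (``removing duplicates'', i.e.\ working with pairwise distinct entries, which is also all that the later lemmas about $\bigvee_{\pi}$ at injective points use). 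On injective tuples your argument is fine, since there every relative order of the values is indeed realizable by a coordinate permutation, and the symmetrized relation restricted to injective tuples is empty or everything. To fix your write-up you would either have to restrict to injective tuples (or otherwise track the value-multiplicity sequence, which is no longer equality data), or abandon the normal form and argue, as the paper does, invariance under $\mathrm{Sym}(\mathbb{Q})$ directly.
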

\begin{proof}
In the following, we refer to the application of some automorphism of $(\mathbb{Q};<)$ as a re-scaling.
It suffices to prove that $S(v_1,\ldots,v_n)=\bigvee_{\pi \in S_n} \psi(v_{\pi(1)},\ldots,v_{\pi(n)})$ is closed under all permutations $\sigma$ of $\mathbb{Q}$. Consider $(a_1,\ldots,a_n) \in S$. Let us assume w.l.o.g. by re-ordering the co-ordinates and some rescaling and removing duplicates that $a_1< \cdots <a_n \in \{1,\ldots,n\}$ which implies that $a_1=1,\ldots,a_n=n$. Now, $(\sigma(a_1),\ldots,\sigma(a_n))$ is a rescaling of $(a_{\sigma(1)},\ldots,a_{\sigma(n)})$ and the result follows.
\end{proof}

\begin{lemma}
Let $\mathcal{B}$ be a first-order reduct of $(\mathbb{Q};<)$ with $\phi_1,\ldots,\phi_m$ positive equality-free formulas over $\mathcal{B}$ with free variables all among $v_1,\ldots,v_n$. Suppose that  
$\forall v_1,\ldots,v_n \, (\phi_1 \vee \ldots \vee \phi_m)$ is true on $\mathcal{B}$, but $\forall v_1,\ldots,v_n \, \phi_i$ is false on $\mathcal{B}$, for all $i \in [m]$. Then there exists some positive equality-free definable $\psi$ over $\mathcal{B}$ so that $\psi$ breaks $\vee$ in $\mathcal{B}$. 
\label{lem:little-temp-1}
\end{lemma}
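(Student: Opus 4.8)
The plan is to exploit Lemma~\ref{lem:temporal-disjunction}: the symmetrized disjunction $\bigvee_\pi \psi$ of a first-order relation over $(\mathbb{Q};<)$ is already first-order over $(\mathbb{Q};=)$, hence invariant under \emph{all} permutations of $\mathbb{Q}$. The difficulty is that the $\phi_i$ here are positive equality-free \emph{formulas}, possibly with quantifiers, not plain relations, and our disjunction is over the fixed list $\phi_1,\ldots,\phi_m$, not over all coordinate permutations. So the first step is to massage the given disjunction into a symmetrized one. I would replace $\forall v_1 \ldots \forall v_n (\phi_1 \vee \cdots \vee \phi_m)$ by the logically implied sentence $\forall v_1 \ldots \forall v_n \bigvee_{\pi \in S_n} (\phi_1 \vee \cdots \vee \phi_m)(v_{\pi(1)},\ldots,v_{\pi(n)})$ — this is equivalent to the original because we are universally quantifying all of $v_1,\ldots,v_n$, so permuting the argument list changes nothing. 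Now the big disjunction $\Psi := \bigvee_{\pi \in S_n} \bigvee_{j \in [m]} \phi_j(v_{\pi(1)},\ldots,v_{\pi(n)})$ is a disjunction of the $n!\cdot m$ formulas $\phi_j^\pi := \phi_j(v_{\pi(1)},\ldots,v_{\pi(n)})$, its universal closure is true on $\mathcal{B}$, and the universal closure of each individual disjunct $\phi_j^\pi$ is still false (since $\forall \overline v\, \phi_j$ is false and renaming the universally bound variables preserves falsity).

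The second step is the ``single split'' argument announced in the final remarks. Because $\forall \overline v\, \Psi$ is true on $\mathcal{B}$, for every assignment $\overline a = (a_1,\ldots,a_n)$ there is at least one disjunct $\phi_j^\pi$ that holds at $\overline a$. I claim there is a \emph{single} partition of the index set $[n!m]$ of disjuncts into two nonempty blocks $P, Q$ such that, writing $\Psi_P := \bigvee_{(j,\pi)\in P}\phi_j^\pi$ and $\Psi_Q := \bigvee_{(j,\pi)\in Q}\phi_j^\pi$, the universal closures of both $\Psi_P$ and $\Psi_Q$ are false on $\mathcal{B}$ while that of $\Psi_P\vee\Psi_Q = \Psi$ is true. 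Such a $\Psi$ (with the one remaining universal prefix) then breaks $\vee$ on $\mathcal{B}$. To find the split, consider the family of ``witness sets'': for each assignment $\overline a$, let $W(\overline a)\subseteq[n!m]$ be the set of disjuncts true at $\overline a$; each $W(\overline a)$ is nonempty. We need the $W(\overline a)$'s not all to be contained in one block; equivalently we need: there is no way to two-colour the disjuncts so that every $W(\overline a)$ is monochromatic — wait, rather the opposite. The right statement is: it is \emph{not} the case that the disjuncts can be partitioned so that each part's universal closure is already true; indeed each singleton's universal closure is false, so in particular no block consisting of a single disjunct has true universal closure. The key point to extract is a \emph{minimal} true sub-disjunction: take a minimal (under inclusion) subset $T\subseteq[n!m]$ such that $\forall\overline v\,\bigvee_{(j,\pi)\in T}\phi_j^\pi$ is true; $T$ exists (the whole set works) and $|T|\geq 2$ (no singleton works). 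Pick any $(j_0,\pi_0)\in T$ and split $T$ as $\{(j_0,\pi_0)\}$ versus $T\setminus\{(j_0,\pi_0)\}$: the universal closure of the first part is false by hypothesis, that of the second is false by minimality of $T$, and that of their union is true. Hence this two-term disjunction, prefixed by $\forall v_1\ldots\forall v_n$, breaks $\vee$ on $\mathcal{B}$.

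Where does Lemma~\ref{lem:temporal-disjunction} actually get used? It is the engine behind the symmetrization being harmless — more precisely, in the intended application (the structure $(V;S)$ or the temporal analogues) one wants the symmetrized disjunct to collapse to something definable over the weaker signature so that the resulting $\psi$ is genuinely a positive equality-free formula of $\mathcal{B}$ and the ``only one distinct type'' phenomenon applies; but for the bare statement of this lemma the symmetrization step is logically free and the minimal-subset extraction is the whole content. \textbf{The main obstacle} I anticipate is bookkeeping: making sure that (i) permuting the universally quantified variables really is a no-op, which is immediate, and (ii) that $\psi$ as produced is legitimately a positive equality-free \emph{sentence/formula over $\mathcal{B}$} in the sense required by the definition of ``breaks $\vee$'' — in particular that the outer block $Q'_1w_1\ldots Q'_\ell w_\ell$ in that definition can be taken to be the all-universal prefix $\forall v_1\ldots\forall v_n$, with $\psi_1 = \phi_{j_0}^{\pi_0}$ and $\psi_2 = \bigvee_{T\setminus\{(j_0,\pi_0)\}}\phi_j^\pi$, and that both $\psi_1,\psi_2$ are positive equality-free over $\mathcal{B}$, which they are since they are renamings and disjunctions of the original $\phi_i$'s. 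No re-scaling or Ramsey input is needed for the lemma itself; those enter only when one verifies the hypotheses in concrete cases.
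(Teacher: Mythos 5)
Your core argument---take an inclusion-minimal sub-disjunction whose universal closure is true, observe it has at least two disjuncts, and split off one to obtain a $\vee$-break---is correct and is essentially the same extremal argument the paper uses (the paper just takes a minimal prefix length $k$ in place of an inclusion-minimal set). The symmetrization detour through Lemma~\ref{lem:temporal-disjunction} is, as you yourself note, logically superfluous for this statement and can be dropped entirely.
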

\begin{proof}
Choose $k \in [m]$ minimally so that $\forall v_1,\ldots,v_n \, (\phi_1 \vee \ldots \vee \phi_{k+1})$ is true on $\mathcal{B}$. By assumption $m > k \geq 1$. Then let $\psi$ be $\forall v_1,\ldots,v_n \, (\phi_1 \vee \ldots \vee \phi_{k}) \vee \phi_{k+1}$ and note that this breaks $\vee$ by definition.
    \end{proof}

\begin{lemma}
Let $\mathcal{B}$ be a first-order reduct of $(\mathbb{Q};<)$ with $\phi$ a positive equality-free formula of $\mathcal{B}$ in precisely the free variables $v_1,\ldots,v_n$. Suppose that $\forall v_1, \ldots, \forall v_n \bigvee_{\pi} \phi$ is false but $\bigvee_{\pi} \phi(a_1,\ldots,a_n)$ holds at all points so that $|\{a_1,\ldots,a_{n}\}|=n$. Then at least one of $=$ and $\neq$ are positive equality-free definable on $\mathcal{B}$.
\label{lem:little-temp-2}
\end{lemma}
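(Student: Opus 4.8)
The plan is to work entirely with the relation $R:=\bigvee_{\pi}\phi$. First note that $R$ is positive equality-free definable over $\mathcal B$ (it is a disjunction of variable-renamings of $\phi$), and since $\phi$ defines a relation first-order definable in $(\mathbb Q;<)$, Lemma~\ref{lem:temporal-disjunction} tells us that $R$ is first-order definable in $(\mathbb Q;=)$, i.e.\ an \emph{equality relation}: it is invariant under every permutation of $\mathbb Q$, hence a union of orbits, and each orbit is exactly the set of $n$-tuples whose kernel (the partition of $[n]$ recording which coordinates are equal) is a fixed partition. Thus $R$ is encoded by a set $\mathcal P$ of partitions of $[n]$, and the two hypotheses say precisely that $R$ is non-empty (the discrete partition $\bot$ lies in $\mathcal P$, and $\mathbb Q$ has $n$-tuples of distinct entries) and that $R\neq\mathbb Q^{n}$. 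Since the only binary equality relations are $\emptyset$, ${=}$, ${\neq}$ and $\mathbb Q^{2}$, it suffices to produce, using only coordinate identification and added quantifiers (so staying positive equality-free), some binary relation that is neither empty nor full.

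I would prove the following by induction on the arity $m$: \emph{every non-empty, non-full equality relation of arity $m$ positive equality-free defines ${=}$ or ${\neq}$}. The case $m=2$ is the observation just made. For $m\geq 3$, let $S$ be such a relation, encoded by $\mathcal P_S$. The basic reduction step is: if $S$ accepts some non-discrete partition $\rho$ but rejects a strictly coarser partition $\rho'$, then identifying the coordinates of $S$ blockwise along $\rho$ yields an equality relation of arity $|\rho|<m$ that is still non-empty (it accepts the image of $\rho$) and still non-full (it rejects the image of $\rho'$, obtained by merging two classes), so the inductive hypothesis applies. This handles every case except the one in which $U:=\mathcal P_S\cap\{\text{non-discrete partitions}\}$ is upward closed in the refinement order, and this is where the real work lies.

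When $U$ is upward closed there are three sub-cases. If $U=\emptyset$ then $\mathcal P_S=\{\bot\}$, so $S$ is the all-distinct relation and $\exists v_3\cdots\exists v_m\,S(v_1,v_2,v_3,\dots,v_m)$ defines ${\neq}$ (using that $\mathbb Q$ is infinite). If $U$ contains a one-pair partition, relabel so that it is $\{\,\{1,2\},\{3\},\dots,\{n\}\,\}$; upward closure then forces every partition having $1$ and $2$ in a common block into $\mathcal P_S$, and since $S$ is non-full there is a rejected partition $\tau$, which necessarily keeps $1$ and $2$ in different blocks; hence $\forall v_3\cdots\forall v_m\,S(v_1,v_2,v_3,\dots,v_m)$ is true exactly on the diagonal — true when $v_1=v_2$, since every resulting kernel then lies in the up-set, and false when $v_1\neq v_2$, since the quantified coordinates can be chosen to make the kernel equal to $\tau$ — so it defines ${=}$. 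The last sub-case, $U\neq\emptyset$ with no one-pair partition in $U$, is the main obstacle: here a minimal element $\nu$ of $U$ is non-discrete but not a one-pair partition, hence has at most $m-2$ blocks, so splitting off one element from a non-singleton block of $\nu$ produces a partition $\nu^{-}$ that is still non-discrete, lies just below $\nu$, and is rejected by minimality of $\nu$; identifying the coordinates of $S$ blockwise along $\nu^{-}$ then yields an equality relation of arity $|\nu^{-}|=|\nu|+1\leq m-1$ that is non-empty (its coarsening $\nu$ is accepted) and non-full ($\nu^{-}$ itself is rejected), and the inductive hypothesis finishes the argument. It is exactly to accommodate this sub-case that the induction is phrased for arbitrary non-empty non-full equality relations rather than only for those that accept all tuples of distinct entries: collapsing along $\nu^{-}$ produces a relation that rejects the discrete partition, so it would fall outside a narrower inductive statement.
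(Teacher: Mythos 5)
Your argument is correct, but it is a genuinely different and considerably heavier route than the paper's. You phrase a general inductive statement on arity --- every non-empty, non-full equality relation pp-equality-free defines $=$ or $\neq$ --- and then run a case analysis on the up-set $U$ of accepted non-discrete kernels (not upward closed; $U=\emptyset$; $U$ containing a one-pair partition; $U$ nonempty with minimal element having $\geq 3$ blocks). All the steps check out, including the careful point in the last sub-case that splitting a non-singleton block of a non-discrete, non-one-pair partition leaves a non-discrete partition, and that the resulting relation after collapse lands back inside your (deliberately more general) inductive hypothesis even though it rejects the discrete kernel.

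The paper avoids the induction entirely. Using the stronger hypothesis that $R:=\bigvee_\pi\phi$ is \emph{both} true on all all-distinct tuples and false somewhere, it splits only on whether $R$ holds on the constant tuple. In the first case it takes a coarsest partition on which $R$ is false, identifies variables blockwise, and observes that the collapsed relation is false exactly on all-distinct tuples (any strict coarsening is true by coarsest-ness), so universally quantifying away all but two variables yields $=$. The other case is symmetric, picking a coarsest partition on which $R$ is true and obtaining $\neq$ via existential quantification (note the paper writes ``universally'' here, which is a typo: to define $u_1\neq u_2$ from $\bigwedge_{i\neq j}u_i\neq u_j$ one must quantify the remaining variables existentially). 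So the paper's proof is a one-step extremal-partition argument, whereas yours is a self-contained induction on arity proving a somewhat more general fact; both are correct, yours just does more work than is needed given the hypotheses in the lemma.
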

\begin{proof}
We split on whether $\phi(x,\ldots,x)$ is true (by homogeneity it is true at one point iff it is true everywhere).

($\phi(x,\ldots,x)$ is true.) Let us consider some coarsest partition $P=(P_1,\ldots,P_r)$ of $\{v_1,\ldots,v_n\}$ such that $\phi(v_1,\ldots,v_n)$ becomes false when we identify the elements of each class. Let us create new variables $u_1,\ldots,u_r$ for the classes. Then, by definition, $\phi(u_1,\ldots,u_r)$ defines $\bigvee_{i\neq j \in [r]} u_i=u_j$. If we now universally quantify all variables other than $u_1$ and $u_2$ we will define $u_1=u_2$.

($\phi(x,\ldots,x)$ is false.)  Let us consider some coarsest partition $P=(P_1,\ldots,P_r)$ of $\{v_1,\ldots,v_n\}$ such that when we identify the elements of each class $\phi(v_1,\ldots,v_n)$ becomes true. Let us create new variables $u_1,\ldots,u_r$ for the classes. Then, by definition, $\phi(u_1,\ldots,u_r)$ defines $\bigwedge_{i\neq j \in [r]} u_i\neq u_j$. If we now universally quantify all variables other than $u_1$ and $u_2$ we will define $u_1\neq u_2$.
\end{proof}

\begin{lemma}
Let $\mathcal{B}$ be a first-order reduct of $(\mathbb{Q};<)$. Suppose that the positive equality-free formula
\[
\forall x \, \phi'(x,z_1,\ldots,z_q) 
\]
is logically distinct from 
\[
{\stackrel{<}{\forall} x} \, \phi'(x,z_1,\ldots,z_q).
\]
Then there exists $\zeta$ such that $\zeta$ breaks $\vee$ on $\mathcal{B}$. Note that $\phi'$ is not necessarily quantifier-free.
\label{lem:temp-1}
\end{lemma}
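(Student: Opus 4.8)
The plan is to mirror the proof of Lemma~\ref{lem:neq-1}, arguing by induction on $q$, with the role played there by partitions of $\{z_1,\dots,z_q\}$ under $=$ now played by weak linear orders on $\{z_1,\dots,z_q\}$: these index the orbits of $q$-tuples under $\mathrm{Aut}(\mathbb Q;<)$, and since every relation occurring (including the one defined by $\phi'$, even though $\phi'$ carries quantifiers) is first-order over $(\mathbb Q;<)$, it is preserved by the oligomorphic group $\mathrm{Aut}(\mathbb Q;<)$ and hence is a finite union of orbits. Throughout, when $\forall x\,\phi'(x,\bar z)$ and ${\stackrel{<}{\forall}x}\,\phi'(x,\bar z)$ are logically distinct, there is a point $\bar a$ at which the former is false and the latter true; the latter says that $\phi'(x,\bar a)$ holds for every $x$ strictly below all coordinates of $\bar a$.

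For the base case $q=1$, homogeneity of $(\mathbb Q;<)$ forces the binary relation defined by $\phi'(x,z_1)$ to be a union of the three orbits $x<z_1$, $x=z_1$, $x>z_1$; by hypothesis it contains $\{x<z_1\}$ but is not all of $B^2$, so it equals one of $x<z_1$, $x\le z_1$, or $x\ne z_1$. In the first two cases $\zeta:=\exists u_1\exists u_2\,\forall w\,\bigl(\phi'(w,u_1)\vee\phi'(u_2,w)\bigr)$ breaks $\vee$: Existential can make the disjunction hold for all $w$ (already $u_1=u_2$ suffices, since $<$ is a linear order), while for either disjunct on its own Universal plays $w$ above $u_1$, respectively below $u_2$. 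In the third case $\phi'(x,z_1)\equiv x\ne z_1$, and $\zeta:=\exists u\exists v\,\forall w\,\bigl(\phi'(u,w)\vee\phi'(v,w)\bigr)$ breaks $\vee$ exactly as in Lemma~\ref{lem:neq-1}.

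For the inductive step, assume the statement for all smaller values and that $\forall x\,\phi'(x,\bar z)$ and ${\stackrel{<}{\forall}x}\,\phi'(x,\bar z)$ differ, witnessed at $\bar a$. First one may assume they agree at every tuple with a repeated coordinate, for otherwise, identifying those coordinates and introducing the merged free variable afresh reduces to a smaller $q$: literally as in Lemma~\ref{lem:neq-1}, using a suitable (coarsest) choice among the configurations where the distinction survives, and replacing one of the repeated free variables by a new outermost universally quantified variable, placed just below its class to match the ${\stackrel{<}{\forall}}$ semantics, after which the inductive hypothesis supplies a break-$\vee$ witness. So the witness $\bar a$ has pairwise distinct coordinates, and after applying an automorphism of $(\mathbb Q;<)$ and renaming the free variables we may take $\bar a=(1,2,\dots,q)$, with $\forall x\,\phi'(x,\bar c)$ holding for every $\bar c$ with a repeated coordinate. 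The set $\{x:\mathcal B\models\phi'(x,1,\dots,q)\}$ is then a union of the $2q+1$ regions cut out by $1<\dots<q$, it contains the interval $(-\infty,1)$, and it omits at least one region. Universally quantifying away the free variables not needed to locate an omitted region and rewriting $\phi'$ as a disjunction of order-atoms together with one atom constraining $x$, one obtains a disjunction whose universal closure is true while that of each disjunct is false; Lemma~\ref{lem:little-temp-1} (with a direct argument in the spirit of Lemma~\ref{lem:little-temp-2} to handle the atom on $x$) then extracts a sub-disjunction that breaks $\vee$.

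The main obstacle is the bookkeeping in the inductive/degenerate step, namely the temporal analogue of the ``finest non-singleton partition'' manipulation of Lemma~\ref{lem:neq-1}, now complicated by the linear order on the classes (which way to split off the merged variable, and checking that the split configuration indeed satisfies $\forall x\,\phi'$), together with the closing non-degenerate step, where ``some region is omitted'' must be converted into an honest break-$\vee$ sentence for an \emph{arbitrary} omitted region. I expect this last point to require a short case analysis on the shape and position of the omitted region, relying on Lemmas~\ref{lem:little-temp-1} and~\ref{lem:little-temp-2}.
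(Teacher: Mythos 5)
Your base case matches the paper's (the paper observes $\phi'(x,z_1)$ must be one of $<,\le,\ne$ and gives the $\ne$ witness explicitly, leaving $<,\le$ to the reader); your explicit witness $\exists u_1\exists u_2\,\forall w\,(\phi'(w,u_1)\vee\phi'(u_2,w))$ is correct, though the parenthetical ``$u_1=u_2$ suffices'' is false for strict $<$ (you need $u_2<u_1$). The inductive step is where the proposal diverges from the paper and where there is a genuine gap that you yourself flag but do not close. You try to transport the ``finest non-singleton partition'' manipulation of Lemma~\ref{lem:neq-1} in order to conclude that $\forall x\,\phi'(x,\bar c)$ is \emph{true} at every $\bar c$ with a repeated coordinate. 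That manipulation relies on the semantics of $\stackrel{\neq}{\forall}$: a variable split off from its equality class only needs to be \emph{different} from whatever else is around, and that can always be arranged consistently with the $\stackrel{\neq}{\forall}$ constraint. Under $\stackrel{<}{\forall}$, the fresh universal variable must lie strictly \emph{below all} previously fixed elements, not merely ``just below its class'', so your proposed placement does not in general produce a $\stackrel{<}{\forall}$-legal play, and the two displayed formulas of Lemma~\ref{lem:neq-1} no longer have the claimed truth values. The closing step is also on the wrong track: ``rewriting $\phi'$ as a disjunction of order-atoms together with one atom constraining $x$'' is a rewrite in the first-order language of $(\mathbb Q;<)$, not a positive equality-free formula over the signature of $\mathcal B$, so it cannot by itself produce a break-$\vee$ witness for $\MC(\mathcal B)$.

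The paper's proof avoids both difficulties by never needing $\forall x\,\phi'(x,\bar c)$ to hold at repeated $\bar c$. After the direct merge-and-reduce step, the witness $\bar a$ may be taken with distinct coordinates, and homogeneity then gives that the genuinely positive equality-free formula $\bigvee_{\pi}\phi'(x,z_1,\ldots,z_{k+1})$ is true at \emph{every} all-distinct tuple. One then case-splits: if $\forall x,\bar z\,\bigvee_\pi\phi'$ is true, Lemma~\ref{lem:little-temp-1} applies directly (each disjunct's universal closure is false because $\forall x\,\phi'(x,\bar a)$ is); otherwise, Lemma~\ref{lem:little-temp-2} shows $=$ or $\ne$ is positive equality-free definable, and in the $=$ case one disjoins the symmetrization with the relevant equalities to fall back into the setting of Lemma~\ref{lem:little-temp-1}. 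So the fix is not to repair the partition manipulation but to sidestep it: drop the claim that $\phi$ is true at repeated points, pass to the symmetrized disjunction, and let Lemmas~\ref{lem:little-temp-1} and~\ref{lem:little-temp-2} do the rest.
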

\begin{proof}
We proceed by induction on $q$. Note that when the two are logically distinct, the former must be false at some point $(z_1,\ldots,$ $z_q)=(a_1,\ldots,a_q)$ while the latter is true. Suppose $q=1$, then $\forall x \, \phi'(x,z_1)$ is logically distinct from ${\stackrel{<}{\forall} x} \phi'(x,z_1)$. Consider $$\theta(x,z_1):= \phi'(x,z_1).$$ By assumption, this is logically equivalent to one of $<,\leq,\neq$. Now $\exists u \exists v \forall w \, u\neq w \vee v \neq w$ breaks $\vee$ on $\mathcal{B}$. For $<$ and $\leq$ we can produce something similar.

Now suppose it is true for $q=k$ and let us prove that it is true for $q=k+1$. We may assume that $\phi:=\forall x \phi'(x,z_1,\ldots,z_{k+1})$ is logically equivalent to ${\stackrel{<}{\forall} x} \, \phi'(x,z_1,\ldots,z_{k+1})$ whenever $z_1,\ldots,z_{k+1}$ are not all distinct (else we reduce to a previous case). 

Suppose that $\forall x,z_1,\ldots,x_n \bigvee_{\pi} \phi'(x,z_1,\ldots,z_{k+1})$ is true. Then we are in the situation of Lemma~\ref{lem:little-temp-1}. 

Thus it must be false. Yet, we know from our assumptions that $\bigvee_{\pi} \phi'(x,z_1,\ldots,z_{k+1})$ is true at every point in which the variables are evaluated as distinct elements. Now we are in the situation of Lemma~\ref{lem:little-temp-2}. If $\neq$ is definable then we know that we break $\vee$. Thus, $=$ must be definable, Now we consider
\[
\begin{array}{r}
\forall x,z_1,\ldots,x_n \bigvee_{\pi} \phi'(x,z_1,\ldots,z_{k+1}) \vee x=z_1 \vee \ldots \vee x=z_{k+1} \vee \\
\bigvee_{i \neq j \in [k+1]} z_i=z_j
\end{array}
\]
and note that we are again in the situation of Lemma~\ref{lem:little-temp-1}.
\end{proof}
Now we follow a sequence of proofs that proceed just as in the case of equality languages.
\begin{corollary}
Let $\mathcal{B}$ be a first-order reduct of $(\mathbb{Q};<)$. The all-different strategy is optimal for Universal iff $\vee$ does not break on $\mathcal{B}$.
\end{corollary}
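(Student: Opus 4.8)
The statement is the temporal counterpart of the corollary for equality languages, so the plan is to prove the two implications of the iff separately, mirroring that proof and handling each by its contrapositive. Throughout, ``the all-different strategy'' for a reduct of $(\mathbb{Q};<)$ is read as always playing a universal variable to a new element strictly below everything already in play --- i.e.\ the quantifier $\stackrel{<}{\forall}$ of Lemma~\ref{lem:temp-1}; applying the order-reversing automorphism of $(\mathbb{Q};<)$ makes everything below transfer verbatim to the ``strictly above'' variant $\stackrel{>}{\forall}$, so the notion is symmetric and the choice is immaterial.

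$(\Leftarrow)$ I prove: if the all-different strategy is not optimal for Universal, then $\vee$ breaks on $\mathcal{B}$. Non-optimality means there is a positive equality-free sentence which, in prenex form $\forall x_1 \exists \overline{y}_1 \ldots \forall x_k \exists \overline{y}_k\, \phi'$ with $\phi'$ quantifier-free (inserting dummy quantifiers to alternate), is false on $\mathcal{B}$ while $\stackrel{<}{\forall} x_1 \exists \overline{y}_1 \ldots \stackrel{<}{\forall} x_k \exists \overline{y}_k\, \phi'$ is true; in particular the two are logically distinct. The first step is to establish the temporal analogue of Lemma~\ref{lem:equ-appendix}: if $\forall x_1 \exists \overline{y}_1 \ldots \forall x_k \exists \overline{y}_k\, \phi'(x_1,\overline{y}_1,\ldots,x_k,\overline{y}_k,z_1,\ldots,z_q)$ is logically distinct from its $\stackrel{<}{\forall}$-guarded version, then some $\zeta$ breaks $\vee$. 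This goes by induction on $k$; the base case $k=1$ is Lemma~\ref{lem:temp-1} (absorbing the innermost existential block into $\phi'$), and the inductive step is the reduction in the proof of Lemma~\ref{lem:equ-appendix} with $\stackrel{\neq}{\forall}$ replaced by $\stackrel{<}{\forall}$: applying the inductive hypothesis to the outer $m$ blocks produces an assignment to $x_1,\overline{y}_1,\ldots,x_m,\overline{y}_m$ at which $\stackrel{<}{\forall} x_{m+1} \exists \overline{y}_{m+1}\, \phi'$ is true while $\forall x_{m+1} \exists \overline{y}_{m+1}\, \phi'$ is false, contradicting the inductive hypothesis applied to a single $\stackrel{<}{\forall}$-block with extra free variables. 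Applying this analogue with $q=0$ to the sentence above yields $\zeta$.

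$(\Rightarrow)$ I prove: if $\vee$ breaks on $\mathcal{B}$, the all-different strategy is not optimal. Let $\psi$ witness the break; after prenexing the disjuncts (renamed to share no bound variable, whence $\hat\psi_i$ below is unaffected) and padding the prefix we may take $\psi = \forall x_1 \exists \overline{y}_1 \ldots \forall x_k \exists \overline{y}_k\, (\psi_1 \vee \psi_2)$ with $\psi_1,\psi_2$ quantifier-free, $\mathcal{B} \models \psi$ and $\mathcal{B} \notmodels \hat\psi_i$ where $\hat\psi_i := \forall x_1 \exists \overline{y}_1 \ldots \forall x_k \exists \overline{y}_k\, \psi_i$; write $\psi^{<}, \hat\psi_i^{<}$ for the $\stackrel{<}{\forall}$-guarded versions. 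Restricting Universal only helps Existential, so $\mathcal{B} \models \psi$ gives $\mathcal{B} \models \psi^{<}$. The point peculiar to reducts of $(\mathbb{Q};<)$ is that in a $\stackrel{<}{\forall}$-game Universal has no real choice: every legal move ``below everything played'' lies in a single orbit of $\mathrm{Aut}(\mathbb{Q};<)$ over the current configuration, so the order type of the completed play is a function of Existential's strategy alone, and the truth of a quantifier-free $\mathcal{B}$-formula on the play depends only on that order type. Hence $\mathcal{B} \notmodels \hat\psi_i^{<}$ would say that every Existential strategy yields a tuple falsifying $\psi_i$; were this so for both $i$, the same tuple would falsify $\psi_1 \vee \psi_2$ for every Existential strategy, contradicting $\mathcal{B} \models \psi^{<}$. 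So $\mathcal{B} \models \hat\psi_i^{<}$ for some $i$ while $\mathcal{B} \notmodels \hat\psi_i$: on the sentence $\hat\psi_i$ Universal wins, but playing all-different he does not, so that strategy is not optimal.

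The main obstacle is the induction on $k$ in $(\Leftarrow)$, i.e.\ putting the temporal analogue of Lemma~\ref{lem:equ-appendix} on top of Lemma~\ref{lem:temp-1}. This should be routine, since the proof of Lemma~\ref{lem:equ-appendix} transfers almost word for word --- the partition/coarsening case analysis peculiar to the temporal setting lives entirely inside the base case (Lemma~\ref{lem:temp-1}) and is already done --- but one must be disciplined about carrying the guarded quantifiers through the reduction. The one conceptual point to settle beforehand is what ``the all-different strategy'' means for temporal languages: since ``a new distinct element'' is no longer a single orbit, one commits to $\stackrel{<}{\forall}$ (equivalently $\stackrel{>}{\forall}$), after which $(\Rightarrow)$ is immediate and identical to the equality case.
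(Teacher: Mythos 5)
Your proof is correct and takes the route the paper intends: the paper leaves this corollary without an explicit argument, remarking only that the proofs proceed as in the equality case, and you correctly identify that one needs the temporal analogue of Lemma~\ref{lem:equ-appendix} with Lemma~\ref{lem:temp-1} as its base case, which supplies the $(\Leftarrow)$ direction (the only one used downstream). Two small points are worth tightening. For $(\Rightarrow)$, the claim that ``the order type of the completed play is a function of Existential's strategy alone'' holds only after first normalising Existential's strategy to respect automorphisms (equivalently, fixing one canonical all-below sequence of Universal moves); an arbitrary Skolem function applied to two automorphic Universal moves may well return non-automorphic responses, so this homogeneity step, though standard, is not vacuous. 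Also, $x\mapsto -x$ is an anti-automorphism of $(\mathbb{Q};<)$, not an automorphism, and need not preserve a given reduct $\mathcal{B}$; the immateriality of $\stackrel{<}{\forall}$ versus $\stackrel{>}{\forall}$ is better seen from the fact that each is separately equivalent to $\vee$ not breaking on $\mathcal{B}$.
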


\begin{lemma}
    Let $\mathcal{B}$ be a first-order reduct of $(\mathbb{Q};<)$. If the all-different strategy is optimal for Universal, then $\MC(\mathcal{B})$ is in \NP.
\end{lemma}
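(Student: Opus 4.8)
The plan is to reduce to a situation where Universal's moves are completely forced up to automorphism, mirroring the equality-language argument, and then to guess a polynomial-size description of Existential's strategy. The point is that for $(\mathbb{Q};=)$ a ``new distinct element'' is unique up to automorphism, so the all-different strategy already pins Universal down; for $(\mathbb{Q};<)$ it does not, and the extra work is to pass from the all-different strategy to the $\stackrel{<}{\forall}$ strategy, under which Universal's play \emph{is} canonical.

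First I would use the preceding corollary: ``the all-different strategy is optimal for Universal'' is equivalent to ``$\vee$ does not break on $\mathcal{B}$''. Then I would invoke the temporal multi-variable analogue of Lemma~\ref{lem:temp-1} (obtained from Lemma~\ref{lem:temp-1} exactly as Lemma~\ref{lem:equ-appendix} is obtained from Lemma~\ref{lem:neq-1}): if the $\stackrel{<}{\forall}$ strategy were \emph{not} optimal for Universal, some input $\forall x_1\exists\overline{y}_1\cdots\forall x_k\exists\overline{y}_k\,\phi'$ would be logically distinct from its $\stackrel{<}{\forall}$-variant, forcing $\vee$ to break on $\mathcal{B}$ — a contradiction. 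Hence the $\stackrel{<}{\forall}$ strategy is optimal for Universal, and for deciding $\MC(\mathcal{B})$ we may assume Universal plays it.

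Next, since $(\mathbb{Q};<)$ is homogeneous, at each universal move there is, up to automorphism, a unique element strictly below all previously played elements, so Universal's entire $\stackrel{<}{\forall}$ play is determined: after stage $i$ the points $x_1>x_2>\cdots>x_i$ together with $y_1,\dots,y_{i-1}$ have an order type fixed once Existential's earlier choices are fixed. Therefore Existential's choice of $y_i$ amounts to selecting one of the $O(i)$ ``slots'' relative to the previously played points (coinciding with one of them, or lying in one of the gaps, including below or above everything); each such slot is realizable by homogeneity, and together they exhaust all order types. The $\mathrm{NP}$ algorithm then guesses one slot for each $y_i$ — a string of $O(n\log n)$ bits — builds the resulting weak order on the $2n$ variables, and evaluates the quantifier-free positive sentence $\theta$ on it: each atom is decided by a (fixed, for $\mathcal{B}$) table lookup on the induced order type, so this is polynomial time, as in the finite case. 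It accepts iff $\theta$ holds. Correctness: if $\mathcal{B}\models\phi$ then Universal cannot win, so in particular loses the $\stackrel{<}{\forall}$ game, and a winning Existential response along the unique play is captured by some guess; if $\mathcal{B}\not\models\phi$ then, $\stackrel{<}{\forall}$ being optimal, Universal wins the $\stackrel{<}{\forall}$ game, so no guess satisfies $\theta$.

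The main obstacle is the first step: unlike the $(\mathbb{Q};=)$ case, the all-different strategy leaves $\Theta(n)$ choices per universal move, giving exponential branching that cannot be enumerated. Removing it requires the passage to $\stackrel{<}{\forall}$, and this is precisely where the hypothesis (the absence of a $\vee$-break) is consumed, via the temporal counterpart of Lemma~\ref{lem:equ-appendix}. Once that passage is justified, the remainder is the same bookkeeping as in the equality-language analogue.
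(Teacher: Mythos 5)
Your proof is correct and takes essentially the same route as the paper: exploit homogeneity so that under the $\stackrel{<}{\forall}$ strategy Universal's play is determined up to automorphism, then nondeterministically guess Existential's order-type slots and verify. You are actually more careful than the paper on one point — the paper's proof jumps straight to ``Universal always plays strictly less than everything so far'' without remarking that the stated hypothesis (``all-different optimal'') does not by itself single out the $\stackrel{<}{\forall}$ strategy for a temporal template (unlike $(\mathbb{Q};=)$, where a new distinct element has a unique type); your explicit detour through the corollary (``all-different optimal'' $\Leftrightarrow$ ``$\vee$ does not break'') and the multi-variable analogue of Lemma~\ref{lem:temp-1} is exactly what justifies the paper's silent replacement of ``all-different'' by ``strictly smaller''.
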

\begin{proof}
    When some elements have been played by Universal and Existential, there is a unique up to isomorphism new element that is strictly less than all those played before (this is provided by homogeneity) and Universal always may be assumed to play this. Existential, meanwhile, plays either some element that has gone before; or one in between, or strictly less than, or strictly greater than elements that have gone before. This guessing alone pushes the complexity into \NP.
\end{proof}

\begin{corollary}
Let $\mathcal{B}$ be a first-order educt of $(\mathbb{Q};<)$. Either $\vee$ does not break on $\mathcal{B}$ and $\MC(\mathcal{B})$ is in \NP, or $\MC(\mathcal{B})$ is \coNP-hard.
\label{cor:temporal-NP}
\end{corollary}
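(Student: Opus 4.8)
The plan is to prove this exactly as the analogous statement for equality languages (Corollary~\ref{cor:equality-NP}) was proved, namely by a two-way case distinction on whether $\vee$ breaks on $\mathcal{B}$, feeding the two immediately preceding results and the hardness lemma from Section~\ref{sec:hardness} into the appropriate branches.

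First I would suppose that there is no positive equality-free sentence $\zeta$ that breaks $\vee$ on $\mathcal{B}$. Then the preceding corollary (the all-different strategy is optimal for Universal iff $\vee$ does not break on $\mathcal{B}$) tells us that the all-different strategy is optimal for Universal, and the preceding lemma (if the all-different strategy is optimal for Universal, then $\MC(\mathcal{B})$ is in \NP) immediately gives that $\MC(\mathcal{B})$ is in \NP. This is precisely the first alternative. Conversely, suppose there does exist some $\zeta$ breaking $\vee$ on $\mathcal{B}$. Then the second bullet of Lemma~\ref{lem:marcin-after-bodirsky-et-al} yields that $\MC(\mathcal{B})$ is \coNP-hard, which is the second alternative. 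Since "there exists $\zeta$ breaking $\vee$ on $\mathcal{B}$" and "there does not exist such a $\zeta$" are exhaustive, the stated disjunction follows, and this completes the argument.

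There is no real obstacle here: the mathematical substance has already been discharged by Lemma~\ref{lem:temp-1} together with its extension along the full quantifier prefix (which is what underlies the preceding corollary), and by the guessing argument that places the all-different regime in \NP. The only point requiring a line of care is to note explicitly that the hypothesis "$\vee$ does not break on $\mathcal{B}$" appearing in the first alternative is exactly the negation of the hypothesis driving the \coNP-hardness branch, so that the two alternatives genuinely cover every first-order reduct of $(\mathbb{Q};<)$; once that is observed, the corollary is just the packaging of the earlier facts into the desired dichotomy.
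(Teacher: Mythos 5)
Your proposal is correct and matches the paper's (implicit) route exactly: the corollary is left unproved in the paper precisely because it is the immediate combination of the preceding corollary (optimality of the Universal strategy is equivalent to $\vee$ not breaking), the preceding lemma (optimality of that strategy gives \NP\ membership), and the second bullet of Lemma~\ref{lem:marcin-after-bodirsky-et-al} (breaking $\vee$ gives \coNP-hardness), with the two branches exhaustive by excluded middle. One small terminological note you may wish to internalize: in the temporal setting the paper abuses the term ``all-different strategy'' to mean the $\stackrel{<}{\forall}$ (always play strictly below everything so far) regime, as the proof of the preceding lemma makes clear, but this does not affect the correctness of your argument.
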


\

\noindent \textbf{Theorem~\ref{thm:temp-main}}.
Let $\mathcal{B}$ be a first-order reduct of $(\mathbb{Q};<)$. Either $\MC(\mathcal{B})$ is in L, is \NP-complete, is \coNP-complete or is \Pspace-complete.

\

The reader will have noticed that there was nothing special in our discourse to ${\stackrel{<}{\forall} x}$ that could not also have been accomplished with ${\stackrel{>}{\forall} x}$. Could it already have been accomplished by ${\stackrel{\neq}{\forall} x}$? The reader will probably not be surprised by the following answer, already prefigured in Lemma~\ref{lem:temporal-disjunction}.
\begin{proposition}
Let $\mathcal{B}$ be a first-order reduct of $(\mathbb{Q};<)$ in which an optimal algorithm for Universal is to always choose an element that is smaller than those that have been previously played. Then $\mathcal{B}$ is a first-order reduct of $(\mathbb{Q};=)$.
\end{proposition}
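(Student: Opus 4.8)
The plan is to show that the hypothesis forces every relation of $\mathcal B$ to be invariant under the full symmetric group $\mathrm{Sym}(\mathbb Q)$; since $(\mathbb Q;=)$ is $\omega$-categorical with automorphism group $\mathrm{Sym}(\mathbb Q)$, this is exactly the statement that every relation of $\mathcal B$ is first-order definable in $(\mathbb Q;=)$, i.e.\ that $\mathcal B$ is a first-order reduct of $(\mathbb Q;=)$. First I would observe that the hypothesis implies $\vee$ does not break on $\mathcal B$: if Universal's all-smaller strategy is optimal then a fortiori his all-different strategy is optimal (playing strictly below everything previous is one way of playing an element distinct from all previous ones), and by the temporal analogue of the corollary following Lemma~\ref{lem:temp-1}, optimality of the all-different strategy is equivalent to $\vee$ not breaking. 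So it suffices to prove the contrapositive: if $\mathcal B$ is a first-order reduct of $(\mathbb Q;<)$ that is not a first-order reduct of $(\mathbb Q;=)$, then $\vee$ breaks on $\mathcal B$.

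So suppose some relation $R$ of $\mathcal B$, of arity $n$, is not $\mathrm{Sym}(\mathbb Q)$-invariant. Since the orbits of $\mathrm{Sym}(\mathbb Q)$ on $\mathbb Q^n$ are precisely the equality types, there are tuples $\bar a\in R$ and $\bar b\notin R$ with the same equality type $\tau$, and (since $\mathrm{Aut}(\mathbb Q;<)$ is transitive) $\tau$ must have at least two blocks. Reusing a single variable for each block of $\tau$, I obtain from the atom $R$ a positive equality-free definable relation $R'$ of arity $m\ge 2$, together with $m$-tuples $\bar a',\bar b'$ of pairwise distinct entries with $\bar a'\in R'$ and $\bar b'\notin R'$; so $R'$, restricted to tuples of pairwise distinct entries, is a non-empty proper union of order types. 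Since $R'$ is $\mathrm{Aut}(\mathbb Q;<)$-invariant, every tuple with pairwise distinct entries can be permuted to match the order type of $\bar a'$, so $\bigvee_{\pi\in S_m} R'(v_{\pi(1)},\dots,v_{\pi(m)})$ holds at every such tuple; by Lemma~\ref{lem:temporal-disjunction} this disjunction is moreover $\mathrm{Sym}(\mathbb Q)$-invariant, so its truth at a tuple depends only on that tuple's equality type.

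Now I split into two cases. If $\bigvee_\pi R'$ holds at \emph{all} tuples of $\mathbb Q^m$, then $\forall v_1\dots v_m\,\bigvee_\pi R'(v_{\pi(1)},\dots,v_{\pi(m)})$ is true on $\mathcal B$, while $\forall v_1\dots v_m\,R'(v_{\pi(1)},\dots,v_{\pi(m)})$ is false for every $\pi$ (it is $\forall\bar v\,R'(\bar v)$ up to renaming, and $\bar b'\notin R'$); by Lemma~\ref{lem:little-temp-1}, $\vee$ breaks on $\mathcal B$. Otherwise $\bigvee_\pi R'$ fails at some tuple but holds at all tuples with pairwise distinct entries, which is exactly the hypothesis of Lemma~\ref{lem:little-temp-2} applied with $\phi:=R'$; hence one of $=$ or $\ne$ is positive equality-free definable on $\mathcal B$. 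If $\ne$ is definable then $\exists u\exists v\forall w\,(u\ne w\vee v\ne w)$ breaks $\vee$. If $=$ is definable then the disjunction $\bigvee_\pi R'(v_{\pi(1)},\dots,v_{\pi(m)})\vee\bigvee_{i\ne j\in[m]}v_i=v_j$ holds at every tuple of $\mathbb Q^m$ — the first part covering the distinct tuples and the second the rest — yet each of its disjuncts is false under $\forall v_1\dots v_m$, so Lemma~\ref{lem:little-temp-1} again yields that $\vee$ breaks on $\mathcal B$. In every case we contradict the fact that $\vee$ does not break, so $\mathcal B$ is a first-order reduct of $(\mathbb Q;=)$.

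The step I expect to be the main obstacle is the final case analysis on $\bigvee_\pi R'$, in particular the sub-case in which only $=$, and not $\ne$, turns out to be definable: this is what forces the small extra trick of padding the disjunction with the equalities $v_i=v_j$ in order to reach a disjunction that is true everywhere but all of whose disjuncts fail under universal quantification. One should also take some care over the reduction to pairwise-distinct entries — checking that $\tau$ has at least two blocks so that $m\ge 2$ and the gadgets above make sense — and over the bookkeeping that $R'$, and the relations $=$ or $\ne$ once replaced by their definitions, remain positive equality-free over the signature of $\mathcal B$.
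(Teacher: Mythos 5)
Your proof is correct, but it takes a genuinely different route from the paper's. The paper proves the proposition directly and very economically: fix $(a_1,\dots,a_n)\in R$ with distinct entries; by hypothesis $\forall v_1\,R(v_1,a_2,\dots,a_n)$ and $\stackrel{<}{\forall}v_1\,R(v_1,a_2,\dots,a_n)$ have the same truth value, and the latter is true (using the automorphism group), so the former is; this lets one reassign $a_1$ freely, and proceeding left to right across the coordinates shows $R$ is invariant under all permutations of $\mathbb Q$. That argument uses the all-smaller hypothesis directly, and (as the remark following the proof in the paper observes) even yields the stronger conclusion that the equality language obtained is \emph{positively} definable in $(\mathbb Q;=)$. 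You instead pass through the corollary linking optimality of the all-different strategy to $\vee$ not breaking, and prove the contrapositive — not a reduct of $(\mathbb Q;=)$ implies $\vee$ breaks — by an orbit analysis reusing Lemmas~\ref{lem:temporal-disjunction}, \ref{lem:little-temp-1} and \ref{lem:little-temp-2}. This is sound, and in fact establishes a strictly sharper statement: optimality of the all-different strategy alone (equivalently, $\vee$ not breaking on $\mathcal B$) already forces $\mathcal B$ to be a reduct of $(\mathbb Q;=)$, whereas the paper draws the same conclusion only from the stronger all-smaller hypothesis. The trade-off is length and dependence: the paper's argument is self-contained and a few lines, yours reaches further but leans on the lemmas you cite. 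One small point worth spelling out rather than leaving parenthetical is the reduction from all-smaller optimality to all-different optimality: it follows from the chain of implications $\forall\Rightarrow\stackrel{\neq}{\forall}\Rightarrow\stackrel{<}{\forall}$ (each restriction of the range weakens the universal quantifier), which collapses to a chain of equivalences once the outer two are assumed equivalent — this is easy but the direction can feel counter-intuitive at first sight.
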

\begin{proof}
Let $R$ be a relation of $\mathcal{B}$ over $\mathbb{Q}$. We will prove that $R$ is invariant under all permutations of $\mathbb{Q}$. 

 Consider $(a_1,\ldots,a_{n}) \in R$. Let us assume \mbox{w.l.o.g.} by re-ordering the co-ordinates and some rescaling and removing duplicates that $a_1< \cdots <a_n \in \{1,\ldots,n\}$. Consider $\forall v_1 R(v_1,a_2,\ldots,a_n)$. This is true iff ${\stackrel{<}{\forall} v_1} R(v_1,a_2,\ldots,a_n)$ is true. The latter is true, so therefore so is the former. Thus we may reassign the first element to any element, say between $n+1$ and $2n$. By proceeding in this way, left to right, we may reassign all of the numbers $a_1,\ldots,a_n$ arbitrarily, and the result follows (potentially after some translation and rescaling).
\end{proof}
In the previous proof, the equality language is of a very special form -- it is positively definable in $(\mathbb{Q};=)$. That is, there can be no instance of $\neq$. This explains why we can take $a_i\neq a_j$ yet potentially move them to the same element (\mbox{i.e.} violating $\neq$).

\section*{Promise Problems}


The methods of Lemma~\ref{lem:marcin-after-bodirsky-et-al} work equally well for the promise version of the problem, $\PMC(\mathcal{A},\mathcal{B})$, as introduced in \cite{AsimiBB22}. Here, we take an input positive equality-free formula $\phi$ and we must respond with yes, if it true on $\mathcal{A}$, and no, if it is false on $\mathcal{B}$. We will choose $\mathcal{A}$ and $\mathcal{B}$ so that any positive equality-free input that is true on $\mathcal{A}$ is true on $\mathcal{B}$. In the event that $\phi$ is false on $\mathcal{A}$ but true on $\mathcal{B}$, it does not matter what we answer.

So long as there exists a single $\phi$ that breaks $\wedge$ on both $\mathcal{A}$ and $\mathcal{B}$, or a single $\psi$ that breaks $\vee$ on $\mathcal{A}$ and $\mathcal{B}$, we can make progress. In such cases, the hard instances constructed in Lemma~\ref{lem:marcin-after-bodirsky-et-al} are true on $\mathcal{A}$ iff they are true on $\mathcal{B}$, so the promise is fulfilled by definition. 
\begin{lemma}
Let $(\mathcal{A},\mathcal{B})$ be a pair of structures.
\begin{itemize}
\item If there exists $\phi$ that breaks $\wedge$ on both $\mathcal{A}$ and $\mathcal{B}$, then $\PMC(\mathcal{A},\mathcal{B})$ is \NP-hard. 
\item If there exists $\psi$ that breaks $\vee$ on both $\mathcal{A}$ and $\mathcal{B}$, then $\PMC(\mathcal{A},\mathcal{B})$ is \coNP-hard. 
\item If there exists $\phi$ and $\psi$ so that $\phi$ breaks $\wedge$ on both $\mathcal{A}$ and $\mathcal{B}$ and $\psi$ breaks $\vee$ on both $\mathcal{A}$ and $\mathcal{B}$,  then $\PMC(\mathcal{A},\mathcal{B})$ is \Pspace-hard.
\end{itemize}
\label{lem:marcin-after-bodirsky-et-al-promise}
\end{lemma}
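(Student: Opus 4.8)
The plan is to re-run, essentially verbatim, the reduction from the proof of Lemma~\ref{lem:marcin-after-bodirsky-et-al}, and to observe that its correctness argument never uses anything about the template beyond the single fact that the fixed sentence $\phi$ breaks $\wedge$ (and, in the \Pspace\ case, that the fixed sentence $\psi$ breaks $\vee$). Concretely, write $\phi=Q_1v_1\ldots Q_kv_k\,(\phi_1\wedge\phi_2)$ for a sentence breaking $\wedge$ on both $\mathcal{A}$ and $\mathcal{B}$, and $\psi=Q'_1w_1\ldots Q'_\ell w_\ell\,(\psi_1\vee\psi_2)$ for one breaking $\vee$ on both. From a Q1-in-3SAT instance $\theta$ the reduction builds exactly the same $\MC$-instance $\theta'$ as before, using only the subformulas $\phi_1,\phi_2,\psi_1,\psi_2$ in the quantifier-free part and the prefixes of $\phi$ and $\psi$ to replace the existential and universal quantifiers of $\theta$; this output depends only on $\theta$ (and on the fixed $\phi,\psi$), not on which of $\mathcal{A},\mathcal{B}$ we have in mind.

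First I would isolate the two semantic facts that power the forwards and backwards directions of that proof: (a) since $\phi$ breaks $\wedge$, with prefix $Q_1v_1\ldots Q_kv_k$ Existential can force $\phi_1$ (resp.\ $\phi_2$) to hold at her discretion, but cannot force both; (b) dually, since $\psi$ breaks $\vee$, with prefix $Q'_1w_1\ldots Q'_\ell w_\ell$ Universal can force $\psi_1$ (resp.\ $\psi_2$) at his discretion, but not both. Both (a) and (b) are precisely the hypotheses ``breaks $\wedge$'' and ``breaks $\vee$'', so they hold over $\mathcal{A}$ and over $\mathcal{B}$ simultaneously. Feeding these into the clause gadgets (pure-existential, pure-universal, and the two mixed cases) yields, word for word as in Lemma~\ref{lem:marcin-after-bodirsky-et-al}, that $\theta$ is a yes-instance of Q1-in-3SAT iff $\mathcal{A}\models\theta'$, and equally iff $\mathcal{B}\models\theta'$. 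In particular $\mathcal{A}\models\theta'\iff\mathcal{B}\models\theta'$, so the produced instances never land in the promise gap, and the map $\theta\mapsto\theta'$ is a genuine many-one reduction into $\PMC(\mathcal{A},\mathcal{B})$: if $\theta$ is a yes-instance then $\theta'$ holds on $\mathcal{A}$, hence is a yes-instance of $\PMC$; if $\theta$ is a no-instance then $\theta'$ fails on $\mathcal{B}$, hence is a no-instance. Since Q1-in-3SAT is \Pspace-complete, $\PMC(\mathcal{A},\mathcal{B})$ is \Pspace-hard. For the first item one uses only the existential half of the construction (all variables existential, reducing from monotone $1$-in-$3$-SAT), which invokes only $\phi$ breaking $\wedge$ on both structures and no $\psi$; for the second item one uses only the universal half, reducing from the complement of monotone $1$-in-$3$-SAT and invoking only $\psi$.

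The one point that needs care — and hence the main obstacle — is the bookkeeping check that the correctness proof of Lemma~\ref{lem:marcin-after-bodirsky-et-al} really is parametric in the template in the way claimed: one must revisit each of the four clause types, in particular the two mixed types that interleave the $\phi$- and $\psi$-gadgets, and confirm that the forwards and backwards strategies for Existential are described purely in terms of ``which of $\phi_1/\phi_2$, $\psi_1/\psi_2$ is being realised'', never in terms of concrete elements of a specific domain. Once that is verified, nothing else is structure-specific, and the entire argument transfers to the promise setting unchanged.
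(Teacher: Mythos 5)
Your proposal is correct and follows essentially the same approach as the paper's, which is given only as a brief remark preceding the lemma: the key observation in both is that the reduction of Lemma~\ref{lem:marcin-after-bodirsky-et-al} depends only on the breaking properties of $\phi$ and $\psi$, so the constructed instance $\theta'$ is true on $\mathcal{A}$ iff it is true on $\mathcal{B}$ (both being equivalent to $\theta$ being a yes-instance), and hence the promise is never violated. Your more explicit articulation of (a) and (b) as the only template-dependent facts used is a useful expansion of what the paper leaves implicit.
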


Consider the template $(\mathcal{A},\mathcal{B})$ where the structures have three unary relations $U_1$, $U_2$ and $U_3$.
\[
\begin{array}{llll}
A=& \{1,2,3,4\} & B=& \{1,2,3,4,5,6\} \\
U^{\mathcal{A}}_1=& \{1\} & U^{\mathcal{B}}_1=& \{1,2,3\} \\
U^{\mathcal{A}}_2=& \{2,3\} & U^{\mathcal{B}}_1=& \{3,4,5\} \\
U^{\mathcal{A}}_3=& \{3,4\} & U^{\mathcal{B}}_1=& \{4,5,6\} \\
\end{array}
\]
Now, let us note that $\exists x \, (U_1(x) \wedge U_3(x))$ breaks $\wedge$ on both $\mathcal{A}$ and $\mathcal{B}$; while $\forall x \, ((U_1(x) \vee U_2(x)) \vee U_3(x))$ breaks $\vee$ on both $\mathcal{A}$ and $\mathcal{B}$.
\begin{corollary}
$\PMC(\mathcal{A},\mathcal{B})$ is \Pspace-complete.
\end{corollary}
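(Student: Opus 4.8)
The plan is to establish the two halves of \Pspace-completeness separately: hardness via the promise analogue of the reduction from Q1-in-3SAT already built in Lemma~\ref{lem:marcin-after-bodirsky-et-al-promise}, and membership via finiteness of the template.

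For hardness I would invoke the third item of Lemma~\ref{lem:marcin-after-bodirsky-et-al-promise}, so that it suffices to exhibit one positive equality-free sentence that breaks $\wedge$ on both $\mathcal{A}$ and $\mathcal{B}$ and one that breaks $\vee$ on both. I take $\phi := \exists x\,(U_1(x) \wedge U_3(x))$ and $\psi := \forall x\,((U_1(x) \vee U_2(x)) \vee U_3(x))$, as indicated in the excerpt, and check the defining conditions by direct inspection of the unary relations: $U_1^{\mathcal{A}} \cap U_3^{\mathcal{A}} = \{1\}\cap\{3,4\} = \emptyset$ while $U_1^{\mathcal{A}},U_3^{\mathcal{A}}$ are non-empty; $U_1^{\mathcal{A}}\cup U_2^{\mathcal{A}}\cup U_3^{\mathcal{A}} = A$ while neither $U_1^{\mathcal{A}}\cup U_2^{\mathcal{A}}$ nor $U_3^{\mathcal{A}}$ is all of $A$; and symmetrically for $\mathcal{B}$, reading the intended values $U_1^{\mathcal{B}}=\{1,2,3\}$, $U_2^{\mathcal{B}}=\{3,4,5\}$, $U_3^{\mathcal{B}}=\{4,5,6\}$, where $U_1^{\mathcal{B}}\cap U_3^{\mathcal{B}}=\emptyset$ and $U_1^{\mathcal{B}}\cup U_2^{\mathcal{B}}\cup U_3^{\mathcal{B}}=B$ whereas the proper sub-disjunctions miss $6$. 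These are finite set computations with no subtlety.

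For membership, $\mathcal{A}$ and $\mathcal{B}$ are both finite, so $\MC(\mathcal{A})$ and $\MC(\mathcal{B})$ are in \Pspace\ by Lemma~\ref{lem:Pspace}. A \Pspace\ algorithm for $\PMC(\mathcal{A},\mathcal{B})$ then just decides whether $\mathcal{B}\models\phi$ and answers accordingly. The point that genuinely needs checking is that the promise is consistent, i.e.\ that $\mathcal{A}\models\phi$ implies $\mathcal{B}\models\phi$ for every positive equality-free $\phi$, so that "yes" and "no" are never simultaneously forced. I would verify this by exhibiting a surjective multivalued map $f$ from $A$ to $\mathcal{P}(B)\setminus\{\emptyset\}$ that respects the unary relations (if $a\in U_i^{\mathcal{A}}$ then $f(a)\subseteq U_i^{\mathcal{B}}$) and whose images cover $B$; for instance $f(1)=\{1,2,3\}$, $f(2)=\{3\}$, $f(3)=\{4\}$, $f(4)=\{4,5,6\}$, using $1\in U_1^{\mathcal{A}}$ with $\{1,2,3\}=U_1^{\mathcal{B}}$, $3\in U_2^{\mathcal{A}}\cap U_3^{\mathcal{A}}$ with $\{4\}\subseteq U_2^{\mathcal{B}}\cap U_3^{\mathcal{B}}$, and so on. A routine Hintikka-game argument then converts a winning strategy for Existential on $\mathcal{A}$ into one on $\mathcal{B}$: whenever Universal plays $b$ on $\mathcal{B}$, pick $a$ with $b\in f(a)$, feed $a$ to the $\mathcal{A}$-strategy, and answer on $\mathcal{B}$ with any element of $f(a')$ for its response $a'$; since each unary atom is preserved along $f$, the quantifier-free matrix stays satisfied.

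The only mildly delicate part is getting the direction of preservation right — we need it from $\mathcal{A}$ to $\mathcal{B}$, which rules out ordinary homomorphisms here since $|A|<|B|$, hence the detour through a surjective hyper-operation — and seeing that such an $f$ exists at all. The latter reduces to the observation that every unary type realised in $\mathcal{B}$ contains some unary type realised in $\mathcal{A}$, and every unary type realised in $\mathcal{A}$ is contained in some unary type realised in $\mathcal{B}$; everything else is bookkeeping.
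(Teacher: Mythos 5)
Your proof is correct and follows essentially the same route the paper takes for this corollary: \Pspace-hardness via the third item of Lemma~\ref{lem:marcin-after-bodirsky-et-al-promise} applied to the two displayed breaking sentences, and \Pspace-membership from finiteness of $\mathcal{A}$ and $\mathcal{B}$ together with the soundness of the promise. You helpfully make explicit, via the surjective hyper-operation $f$ from $\mathcal{A}$ to $\mathcal{B}$, the verification that every positive equality-free sentence true on $\mathcal{A}$ is also true on $\mathcal{B}$, a claim the paper asserts for its templates but does not check for this particular example.
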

This example would have been known to have been both \NP-hard and \coNP-hard from \cite{AsimiBB22}. However, its \Pspace-completeness is not covered in that paper, or by the subsequent work of these authors \cite{Kristina-personal} (including \cite{Asimi23}).

Let us consider another example, this time a listed open problem from \cite{AsimiBB22}. Let $(\mathcal{A},\mathcal{B})$ be the template where the structures have three unary relations $U_1$, $U_2$ and $U_3$.
\[
\begin{array}{llll}
A=& \{1,2,3\} & B=& \{1,2,3\} \\
U^{\mathcal{A}}_1=& \{1\} & U^{\mathcal{B}}_1=& \{2,3\} \\
U^{\mathcal{A}}_2=& \{2\} & U^{\mathcal{B}}_1=& \{1,3\} \\
U^{\mathcal{A}}_3=& \{3\} & U^{\mathcal{B}}_1=& \{1,2\} \\
\end{array}
\]
Now, let us note that $\exists x \, U_1(x) \wedge U_2(x) \wedge U_3(x)$ "breaks" $\wedge$ on both $\mathcal{A}$ and $\mathcal{B}$; while $\forall x \, (U_1(x) \vee U_2(x) \vee U_3(x))$ "breaks" $\vee$ on both $\mathcal{A}$ and $\mathcal{B}$. However, the "break" isn't on a single conjunction or disjunction (with two parts exactly), but rather on a triple. Furthermore, it can't be manipulated to be on a pair: e.g., $\exists x \, (U_1(x) \wedge U_2(x) \wedge U_3(x)$ breaks $\wedge$ on $\mathcal{B}$ but not on $\mathcal{A}$. We never defined breaking other than across conjunction or disjunction of \emph{pairs}. When it occurs across a triple such as this, we do not yet have the correct methods to prove \Pspace-hardness. Let us note that \NP-hardness and \coNP-hardness of $\PMC(\mathcal{A},\mathcal{B})$ follow from classical results from promise CSP \cite{AsimiBB22}. Among various remarkable properties of this template, let us note that $U^\mathcal{A}_i$  and $U^\mathcal{B}_i$ are set-theoretic complements, for each $i \in [3]$. 


\begin{thebibliography}{10}

\bibitem{Kristina-personal}
Kristina Asimi, 2023.
\newblock Personal Communication.

\bibitem{Asimi23}
Kristina Asimi.
\newblock {\em Promises in Satisfaction Problems}.
\newblock {PhD} thesis, Charles University, 2023.

\bibitem{AsimiBB22}
Kristina Asimi, Libor Barto, and Silvia Butti.
\newblock Fixed-template promise model checking problems.
\newblock In Christine Solnon, editor, {\em 28th International Conference on
  Principles and Practice of Constraint Programming, {CP} 2022, July 31 to
  August 8, 2022, Haifa, Israel}, volume 235 of {\em LIPIcs}, pages 2:1--2:17.
  Schloss Dagstuhl - Leibniz-Zentrum f{\"{u}}r Informatik, 2022.
\newblock URL: \url{https://doi.org/10.4230/LIPIcs.CP.2022.2}, \href
  {https://doi.org/10.4230/LIPICS.CP.2022.2}
  {\path{doi:10.4230/LIPICS.CP.2022.2}}.

\bibitem{BodirskyChenSICOMP}
Manuel Bodirsky and Hubie Chen.
\newblock Quantified equality constraints.
\newblock {\em SIAM J. Comput.}, 39(8):3682--3699, 2010.

\bibitem{BodirskyHR13}
Manuel Bodirsky, Miki Hermann, and Florian Richoux.
\newblock Complexity of existential positive first-order logic.
\newblock {\em J. Log. Comput.}, 23(4):753--760, 2013.
\newblock \href {https://doi.org/10.1093/logcom/exr043}
  {\path{doi:10.1093/logcom/exr043}}.

\bibitem{BodirskyK08}
Manuel Bodirsky and Jan K{\'{a}}ra.
\newblock The complexity of equality constraint languages.
\newblock {\em Theory Comput. Syst.}, 43(2):136--158, 2008.
\newblock \href {https://doi.org/10.1007/s00224-007-9083-9}
  {\path{doi:10.1007/s00224-007-9083-9}}.

\bibitem{BodirskyK10}
Manuel Bodirsky and Jan K{\'{a}}ra.
\newblock The complexity of temporal constraint satisfaction problems.
\newblock {\em J. {ACM}}, 57(2):9:1--9:41, 2010.
\newblock \href {https://doi.org/10.1145/1667053.1667058}
  {\path{doi:10.1145/1667053.1667058}}.

\bibitem{BodirskyP15}
Manuel Bodirsky and Michael Pinsker.
\newblock Schaefer's theorem for graphs.
\newblock {\em J. {ACM}}, 62(3):19:1--19:52, 2015.
\newblock \href {https://doi.org/10.1145/2764899} {\path{doi:10.1145/2764899}}.

\bibitem{BBCJK}
Ferdinand B{\"o}rner, Andrei~A. Bulatov, Hubie Chen, Peter Jeavons, and
  Andrei~A. Krokhin.
\newblock The complexity of constraint satisfaction games and {QCSP}.
\newblock {\em Inf. Comput.}, 207(9):923--944, 2009.

\bibitem{BulatovFVConjecture}
Andrei~A. Bulatov.
\newblock {A dichotomy theorem for nonuniform CSPs}.
\newblock In {\em {Proceedings of FOCS'17}}, 2017.
\newblock {arXiv:1703.03021}.

\bibitem{Buss87}
Samuel~R. Buss.
\newblock The boolean formula value problem is in {ALOGTIME}.
\newblock In Alfred~V. Aho, editor, {\em Proceedings of the 19th Annual {ACM}
  Symposium on Theory of Computing, 1987, New York, New York, {USA}}, pages
  123--131. {ACM}, 1987.
\newblock \href {https://doi.org/10.1145/28395.28409}
  {\path{doi:10.1145/28395.28409}}.

\bibitem{CarvalhoM22}
Catarina Carvalho and Barnaby Martin.
\newblock The lattice and semigroup structure of multipermutations.
\newblock {\em Int. J. Algebra Comput.}, 32(2):211--235, 2022.
\newblock \href {https://doi.org/10.1142/S0218196722500096}
  {\path{doi:10.1142/S0218196722500096}}.

\bibitem{Creignou}
Nadia Creignou, Sanjeev Khanna, and Madhu Sudan.
\newblock {\em Complexity Classifications of Boolean Constraint Satisfaction
  Problems}.
\newblock SIAM Monographs on Discrete Mathematics and Applications 7, 2001.

\bibitem{FederVardi}
T.~Feder and M.~Vardi.
\newblock The computational structure of monotone monadic {SNP} and constraint
  satisfaction: {A} study through {D}atalog and group theory.
\newblock {\em {SIAM} Journal on Computing}, 28:57--104, 1999.

\bibitem{Jeavons98}
Peter Jeavons.
\newblock On the algebraic structure of combinatorial problems.
\newblock {\em Theor. Comput. Sci.}, 200(1-2):185--204, 1998.
\newblock \href {https://doi.org/10.1016/S0304-3975(97)00230-2}
  {\path{doi:10.1016/S0304-3975(97)00230-2}}.

\bibitem{Kozen81}
Dexter Kozen.
\newblock Communication: Positive first-order logic is {NP}-complete.
\newblock {\em {IBM} J. Res. Dev.}, 25(4):327--332, 1981.
\newblock \href {https://doi.org/10.1147/rd.254.0327}
  {\path{doi:10.1147/rd.254.0327}}.

\bibitem{MadelaineM11}
Florent~R. Madelaine and Barnaby Martin.
\newblock A tetrachotomy for positive first-order logic without equality.
\newblock In {\em Proceedings of the 26th Annual {IEEE} Symposium on Logic in
  Computer Science, {LICS} 2011, June 21-24, 2011, Toronto, Ontario, Canada},
  pages 311--320. {IEEE} Computer Society, 2011.
\newblock \href {https://doi.org/10.1109/LICS.2011.27}
  {\path{doi:10.1109/LICS.2011.27}}.

\bibitem{MadelaineM12}
Florent~R. Madelaine and Barnaby Martin.
\newblock The complexity of positive first-order logic without equality.
\newblock {\em {ACM} Trans. Comput. Log.}, 13(1):5:1--5:17, 2012.
\newblock \href {https://doi.org/10.1145/2071368.2071373}
  {\path{doi:10.1145/2071368.2071373}}.

\bibitem{MadelaineM18}
Florent~R. Madelaine and Barnaby Martin.
\newblock On the complexity of the model checking problem.
\newblock {\em {SIAM} J. Comput.}, 47(3):769--797, 2018.
\newblock \href {https://doi.org/10.1137/140965715}
  {\path{doi:10.1137/140965715}}.

\bibitem{Rydval18}
Jakub Rydval.
\newblock {\em Using Model Theory to Find Decidable and Tractable Description
  Logics with Concrete Domains}.
\newblock {PhD} thesis, TU Dresden, 2018.
\newblock Available at
  \url{https://tud.qucosa.de/api/qucosa\%3A79907/attachment/ATT-0/}.

\bibitem{Schaefer78}
Thomas~J. Schaefer.
\newblock The complexity of satisfiability problems.
\newblock In Richard~J. Lipton, Walter~A. Burkhard, Walter~J. Savitch, Emily~P.
  Friedman, and Alfred~V. Aho, editors, {\em Proceedings of the 10th Annual
  {ACM} Symposium on Theory of Computing, May 1-3, 1978, San Diego, California,
  {USA}}, pages 216--226. {ACM}, 1978.
\newblock \href {https://doi.org/10.1145/800133.804350}
  {\path{doi:10.1145/800133.804350}}.

\bibitem{Vardi82}
Moshe~Y. Vardi.
\newblock The complexity of relational query languages (extended abstract).
\newblock In Harry~R. Lewis, Barbara~B. Simons, Walter~A. Burkhard, and
  Lawrence~H. Landweber, editors, {\em Proceedings of the 14th Annual {ACM}
  Symposium on Theory of Computing, May 5-7, 1982, San Francisco, California,
  {USA}}, pages 137--146. {ACM}, 1982.
\newblock \href {https://doi.org/10.1145/800070.802186}
  {\path{doi:10.1145/800070.802186}}.

\bibitem{ZhukFVConjecture}
Dmitriy Zhuk.
\newblock {The Proof of CSP Dichotomy Conjecture}.
\newblock In {\em {Proceedings of FOCS'17}}, 2017.
\newblock {arXiv:1704.01914}.

\bibitem{ZhukM22}
Dmitriy Zhuk and Barnaby Martin.
\newblock {QCSP} monsters and the demise of the chen conjecture.
\newblock {\em J. {ACM}}, 69(5):35:1--35:44, 2022.
\newblock \href {https://doi.org/10.1145/3563820} {\path{doi:10.1145/3563820}}.

\bibitem{ZhukMW23}
Dmitriy Zhuk, Barnaby Martin, and Michal Wrona.
\newblock The complete classification for quantified equality constraints.
\newblock In Nikhil Bansal and Viswanath Nagarajan, editors, {\em Proceedings
  of the 2023 {ACM-SIAM} Symposium on Discrete Algorithms, {SODA} 2023,
  Florence, Italy, January 22-25, 2023}, pages 2746--2760. {SIAM}, 2023.
\newblock URL: \url{https://doi.org/10.1137/1.9781611977554.ch103}, \href
  {https://doi.org/10.1137/1.9781611977554.CH103}
  {\path{doi:10.1137/1.9781611977554.CH103}}.

\end{thebibliography}
\end{document}